\documentclass[11pt, lettersize]{article}

\usepackage[UKenglish]{babel}
\usepackage[a4paper,margin=1in]{geometry}

\usepackage{amsmath,amssymb,amsthm}
\usepackage{graphicx}
\usepackage{xcolor}
\usepackage{hyperref}
\usepackage{mathtools}
\usepackage{subcaption}
\usepackage{paralist}
\usepackage[nameinlink,noabbrev,capitalize]{cleveref}
\usepackage{thm-restate}

\usepackage{booktabs}
\usepackage[Algorithm]{algorithm}
\usepackage[noend]{algpseudocode}
\usepackage{todonotes}
\usepackage{changepage}
\usepackage{cite}
\usepackage{comment}
\usepackage{enumitem}
\usepackage{moresize}
\usepackage{pifont}
\usepackage{xspace}
\usepackage{lineno}
\usepackage{thm-restate}

\algrenewcommand\algorithmicrequire{\textbf{Input:}}
\algrenewcommand\algorithmicensure{\textbf{Output:}}

\graphicspath{{./figures/}}

\title{A Sublinear Approximation Algorithm for Minimum Dilation Trees in the Plane}

\author{Sarita de Berg\thanks{IT University of Copenhagen, Denmark. \texttt{debe@itu.dk}}
\and
Jacobus Conradi\thanks{University of Copenhagen, Denmark. \texttt{jaco@di.ku.dk}}
\and
Peter Kramer\thanks{TU Braunschweig, Germany. \texttt{kramer@ibr.cs.tu-bs.de}}
\and
André Nusser\thanks{Université Côte d'Azur, CNRS, Inria, France. \texttt{andre.nusser@cnrs.fr}}
\and
Sampson Wong\thanks{University of Copenhagen, Denmark. \texttt{sampson.wong123@gmail.com}}
}

\date{}

\newcommand{\mysubpara}[1]{%
  \par\vspace{0.3\baselineskip}%
  \noindent{\textbf{#1}}\hspace{0.8em}%
}

\newtheorem{openproblem}{Open Problem}

\newcommand{\myremark}[4]{\textcolor{blue}{\textsc{#1 #2: }}\textcolor{#4}{\textsf{#3}}}

\newcommand{\jacobus}[2][says]{\myremark{Jacobus}{#1}{#2}{orange}}
\newcommand{\peter}[2][says]{\myremark{Peter}{#1}{#2}{olive}}

\newtheorem{theorem}{Theorem}
\newtheorem{lemma}[theorem]{Lemma}
\newtheorem{observation}[theorem]{Observation}
\newtheorem{definition}[theorem]{Definition}
\newtheorem{corollary}[theorem]{Corollary}
\newtheorem{remark}[theorem]{Remark}
\newtheorem{claim}[theorem]{Claim}

\newcommand{\dil}{\ensuremath{\mathrm{dil}}}
\newcommand{\supp}[1]{\ensuremath{\mathcal{P}_{#1}}}
\newcommand{\opt}{\ensuremath{\textsc{Opt}}\xspace{}}
\newcommand{\eps}{\ensuremath{\varepsilon}\xspace{}}
\def\len{\operatorname{len}}
\def\int{\operatorname{int}}

\newcommand{\F}{\ensuremath{\mathcal{F}}}
\newcommand{\T}{\ensuremath{\mathcal{T}}}
\newcommand{\J}{\ensuremath{\mathcal{J}}}
\newcommand{\junc}{\ensuremath{\mathrm{junc}}}
\newcommand{\face}{\ensuremath{\mathrm{face}}}
\newcommand{\spine}{\ensuremath{\Sigma}}
\newcommand{\spineset}{\ensuremath{\mathfrak{S}}}
\newcommand{\spines}[1]{\ensuremath{\mathrm{spines}(#1)}}

\begin{document}

    \maketitle
    \thispagestyle{empty}

\begin{abstract}
The \emph{dilation} of a geometric graph measures how much longer the path between pairs of points becomes when restricted to graph edges, rather than following the direct path through the ambient space.
The \emph{minimum dilation tree} of a point set is the spanning tree with minimum dilation, where edge lengths in the tree are given by distances in the ambient space.
In the Euclidean plane, computing the minimum dilation tree is NP-hard, but no hardness of approximation result is known. On the other hand, the minimum spanning tree is an $(n-1)$-approximation to the minimum dilation tree, but no asymptotically-better approximation algorithm is known for general point sets in the Euclidean plane.

We give the first sublinear approximation algorithm for the minimum dilation tree in the Euclidean plane. Our approximation ratio is $\tilde{O}(n^{14/15})$ and our algorithm runs in polynomial time. This resolves an open problem proposed by Eppstein in 1996.
\end{abstract}

\setcounter{page}{0}
\newpage
\section{Introduction}
 
\label{sec:introduction}

Graphs offer a compact, combinatorial way to represent distances of a point set.
The most compact representation of distances by a graph is via a minimally connected graph, namely, a tree.
Consequently, a fundamental question in theoretical computer science is how well a tree can represent distances of a point set.
Dilation is the by far most common quality measure of how well distances are preserved by a graph, with smaller dilation implying a better representation.
This work addresses the classical problem of computing minimum dilation trees.

Formally, let $S$ be a set of $n$ points in a metric~$M$, and let $T$ be a spanning tree on $S$. Let $d_M(\cdot,\cdot)$ denote the distance in~$M$, and let $d_T(\cdot,\cdot)$ denote the shortest path distance in~$T$. The dilation of $T$ is defined as $\max_{u,v \in S} \frac {d_T(u,v)} {d_M(u,v)}$. The minimum dilation tree of $S$ is defined as the spanning tree on $S$ that has minimum dilation.

Minimum dilation trees have attracted great interest across theoretical computer science. Depending on the community, they are known as tree spanners~\cite{DBLP:journals/algorithmica/BiloCGLP20,DBLP:journals/tcs/BrandstadtDLL04,DBLP:journals/algorithmica/BrandstadtDLLU07,DBLP:journals/siamdm/CaiC95,DBLP:conf/cocoa/CoutoC18,DBLP:journals/ipl/CoutoCJS22,DBLP:journals/jgaa/DasGW10,DBLP:journals/algorithmica/DraganK14,DBLP:journals/dam/FeketeK01,DBLP:journals/ipl/FominGL11,DBLP:conf/latin/GomezMW22,DBLP:journals/algorithmica/IwamaLO08,DBLP:journals/networks/LeL99,DBLP:journals/dam/LiebchenW08,DBLP:journals/ipl/MadanlalVR96},
minimum dilation spanning trees~\cite{DBLP:journals/comgeo/AronovBCGHSV08,DBLP:conf/cocoon/BrandtGRS15,DBLP:journals/comgeo/CheongHL08,DBLP:books/el/00/Eppstein00},
or minimum stretch spanning trees~\cite{DBLP:journals/siamcomp/EmekP08,DBLP:journals/amc/LinL20,DBLP:conf/mfcs/Peleg02,DBLP:conf/icalp/PelegR99,peleg2001low}. Minimum dilation trees have a close connection to other geometric constructions, such as spanners and tree covers; see Section~\ref{section:other_related_work} for further discussion. Spanning trees with bounded dilation have found applications in other areas of computer science, for example, in distributed computing~\cite{DBLP:conf/wdag/DemmerH98,DBLP:conf/wdag/GhodselahiK17,DBLP:conf/podc/HerlihyTW01,DBLP:conf/mfcs/Peleg02,DBLP:journals/dc/SharmaB14}.

In 1996, Eppstein proposed the following open problem(s) in his survey~\cite{eppstein1996spanning}, and later in the Handbook on Computational Geometry~\cite{DBLP:books/el/00/Eppstein00}.

\begin{openproblem}
    \label{open_problem_1}
    Given a set of points in the Euclidean plane, is it possible to construct the minimum dilation tree, or an approximation to it, in polynomial time? Can a minimum dilation tree have edge crossings? How well is it approximated by the minimum spanning tree?
\end{openproblem}

By 2007, almost all of Eppstein's problems~\cite{eppstein1996spanning,DBLP:books/el/00/Eppstein00} on minimum dilation trees had been resolved. Eppstein~\cite{eppstein1996spanning} observed that the minimum spanning tree (MST) is an $O(n)$-approximation for the minimum dilation tree. Klein and Kutz~\cite{DBLP:conf/gd/KleinK06} constructed an example with seven points such that the minimum dilation tree has an edge crossing. Finally, Cheong, Haverkort and Lee~\cite{DBLP:journals/comgeo/CheongHL08} showed that there are examples for which the MST is an $\Omega(n)$-approximation, that there is a minimum dilation tree on five points with an edge crossing, and that constructing the minimum dilation tree is NP-hard.

However, one problem remains elusive: approximating the minimum dilation tree. The following reformulation of Eppstein's problem has been stated numerous times~\cite{DBLP:journals/comgeo/AronovBCGHSV08,DBLP:journals/tcs/BiloGP12,DBLP:conf/esa/BuchinBGW24,DBLP:conf/compgeom/BuchinRS25,DBLP:journals/comgeo/CheongHL08} and is considered to be a major open problem in the field:

\begin{openproblem}
    \label{open_problem_2}
    Is there an $o(n)$-approximation for the minimum dilation tree of a point set in the Euclidean plane?
\end{openproblem}

To date, Open Problem~\ref{open_problem_2} remains wide open. The best lower bound~\cite{DBLP:journals/comgeo/CheongHL08} does not rule out a PTAS, and the best upper bound is still the simple $O(n)$-approximation given by the minimum spanning tree~\cite{eppstein1996spanning}. The fact that the minimum dilation tree is notoriously difficult to approximate acts as a significant obstacle in several other problems, for example, in sparsest spanners~\cite{DBLP:journals/comgeo/AronovBCGHSV08}, minimum dilation graph augmentation~\cite{DBLP:conf/esa/BuchinBGW24}, and bounded treewidth spanners~\cite{DBLP:conf/compgeom/BuchinRS25}.

\subsection{Our contribution}
\label{section:contributions}

We present the first sublinear-approximation algorithm that runs in polynomial time for the minimum dilation tree in the plane, resolving Open Problem~\ref{open_problem_2}.

\begin{restatable}[Main theorem]{theorem}{main} \label{thm:main}
There is an $O(n^{14/15}\log^3 n)$
-approximation algorithm for the minimum dilation tree problem with running time in $O(n^5\log n)$.
\end{restatable}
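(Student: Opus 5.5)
Our plan is to guess the optimal dilation $t^\ast$ up to a constant factor and then, for each guess $t$, either build a spanning tree with dilation $O(t\cdot n^{14/15}\log^3 n)$ or certify that no tree of dilation $t$ exists. We can enumerate the guesses cheaply. The MST gives $t^\ast\le \dil(\mathrm{MST})\le n-1$, and trivially $t^\ast\ge 1$. A doubling search over $O(\log n)$ values therefore suffices. If $t^\ast\ge n^{14/15}$, the MST already meets the claimed ratio, since $n-1\le n^{1/15}\,t^\ast$. So we may assume $t:=t^\ast< n^{14/15}$, i.e.\ the optimum has small dilation. The argument then splits into a structural part and an algorithmic part.

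\textbf{Structural step.} We first describe what a tree $T^\ast$ of dilation $t$ looks like. Take the diameter pair $p,q$; the path from $p$ to $q$ in $T^\ast$ has length at most $t\,|pq|$. More generally, for any pair at distance $d$, the connecting path stays inside an ellipse with foci at the pair and major axis $t d$. We use this to decompose $T^\ast$ hierarchically into ``spines'': long paths whose points lie close to a polygonal curve of few vertices. Branches hanging off a spine are short relative to the distances they must serve. The key lemma to prove would be a packing/separator statement. Any tree of dilation $t$ admits a partition of $S$ into $k$ groups such that, within each group, the points lie near a single spine curve of complexity $\mathrm{poly}(t)$, ordered monotonically along it. A point set lying near a curve can be handled by connecting the points along the curve ordering with small loss. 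The loss comes from the depth of the hierarchy, which should contribute the $\log^3 n$ factors.

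\textbf{Algorithmic step.} We cannot find $T^\ast$'s spines, so we approximate them. We would guess the top-level spine by trying $O(n^2)$ candidate endpoint pairs and routing a path that visits points greedily in order of projection onto the segment. We then recurse on the points not covered within distance proportional to the recursion scale. The subtrees are joined by MST-like connections between groups; a joining edge of length $\ell$ can stretch a pair at distance $d$ by at most $O(\ell/d+\text{group size})$. The final bound is a balance of terms. Pairs within one group pay about the group size $n/k$. Pairs across groups pay a number of hops that we bound by $k$ times a power of $t$. Optimizing $k$ against $t$ subject to $t<n^{14/15}$ yields an exponent strictly below $1$; tuning the polynomial dependencies of the structural lemma should produce $14/15$. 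The $n^5\log n$ running time would come from $O(n^2)$ spine guesses per level, $O(n^2)$ work per dilation evaluation (all-pairs tree distances), an extra $n$ factor from the recursion, and the $\log n$ guesses of $t$.

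\textbf{Main obstacle.} The hardest step is the structural lemma, namely showing that bounded dilation forces the points into few near-curve clusters whose cross-connections can be approximated without knowing $T^\ast$. Crossings and nearly-collinear configurations make this delicate: an optimal tree may detour around clusters in ways no local rule reproduces. My first attempt would be to charge each cross-group pair to a spine of $T^\ast$ separating it, using the ellipse containment, and to bound via a planar packing argument how many spines can pass near a point. If the polynomial losses in $t$ are too large, the exponent weakens. In that case we would still get a sublinear bound, and only the constant $14/15$ would need refinement.
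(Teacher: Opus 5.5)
\textbf{There is a genuine gap: the proposal does not yet amount to a proof.} Everything rests on a structural lemma that you neither state precisely nor prove, as your last paragraph concedes. The algorithmic step is also not specified tightly enough to carry any guarantee. Concretely:
\begin{itemize}
\item An endpoint pair does not determine a spine curve with $\mathrm{poly}(t)$ vertices. Guessing such curves directly costs $n^{\mathrm{poly}(t)}$ time, which is superpolynomial in the regime $t=n^{\Omega(1)}$ you must handle.
\item The accounting ``within a group $\approx n/k$, across groups $\approx k\cdot t^{c}$'' is unsupported. Here $k$ is dictated by the lemma, not a free parameter you can optimize.
\item The exponent $14/15$, the $\log^3 n$ factor and the $O(n^5\log n)$ running time are fitted to the target rather than derived.
\item Your case split is miscalibrated. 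The MST is already an $n^{14/15}$-approximation once $t^\ast\ge n^{1/15}$. So what is really required is a tree of dilation $\tilde O(n^{14/15})$ whenever $t^\ast<n^{1/15}$.
\end{itemize}
The two-line instance of the warm-up shows where your join rule breaks. Suppose your procedure takes the two lines as two groups; each lies along a straight spine, so nothing in your lemma rules this out. Joining them by an MST-like connection then gives dilation $\Theta(n)$, matching your own ``group size'' term, even though the comb achieves $O(1)$. The crux is deciding when finer-scale connections along the shape must be withheld in favour of coarser cross connections, and your plan has no mechanism for this.

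\textbf{The missing idea: certifiable obstructions.} You never need to reconstruct an optimal tree. The paper sets $\Lambda=\max\{5,n^{1/15}\}$ and proves two lower bounds:
\begin{itemize}
\item \emph{Circle obstruction.} If some polygon of some $r$-support contains a disk of radius $\Lambda r$, a homotopy argument around the disk's centre forces $t^\ast\ge 2\Lambda$.
\item \emph{Comb obstruction.} Otherwise the point set is $\Lambda$-thin, and each $r$-support polygon admits a junction tree built from a distance-preserving tree partition. If the spine-cutting graph of some face is disconnected, an edge-orientation argument on a closed walk in any spanning tree forces $t^\ast\ge\Lambda/27$.
\end{itemize}
Both conditions are checkable in polynomial time, and when either holds the MST is an $O(n/\Lambda)$-approximation.

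When neither holds, minimal realizations of the (now connected) spine-cutting graphs provide cuts, computed coarse to fine. These cuts tell each scale which pairs to leave to the next coarser scale. The fine-to-coarse construction with junction- and face-stars then maintains the connectedness, packedness and locality invariants. A packing count gives a tree of \emph{absolute} dilation $O(\log^3 n\max\{n/\Lambda,\Lambda^{14}\})$, and since $t^\ast\ge1$ this is the approximation ratio. The parameters are then derived rather than fitted:
\begin{itemize}
\item $14/15$ is exactly the balance of $n/\Lambda$ against $\Lambda^{14}$.
\item The $\log^3 n$ comes from the $\Theta(\Lambda^2 r\log n)$ face diameters of the heavy-light-based junction trees.
\item The running time is dominated by $O(n^2)$ scales times $O(n)$ faces times $O(n^2\log n)$ per minimal realization.
\end{itemize}
Without lower-bound certificates of this kind, your argument needs the structural lemma to be both true and algorithmically exploitable, and that is precisely the part that is missing.
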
 

In order to achieve this, we introduce several new insights on the structure of minimum dilation trees in the plane.
In particular, we categorize different point set structures that inherently lead to a dilation of $\Omega(n^c)$ of the minimum dilation tree,  for some constant $c>0$.
On the other hand, we show that an absence of these structures enables a minimum dilation tree with sublinear dilation.
We believe these insights to be of independent interest.

\subsection{Other related work}
\label{section:other_related_work}

There is a plethora of related work on similar but more relaxed settings like bounded spread point sets, denser-than-tree spanners, or using more than one tree to represent the embedding.

\mysubpara{Bounded spread.} Bădoiu, Indyk and Sidiropoulos~\cite{DBLP:conf/soda/BadoiuIS07} study computing the minimum dilation tree of point sets from an arbitrary metric space, restricted to the special case of point sets with bounded spread~$\Delta$. The spread of a point set is the ratio between its maximum and minimum interpoint distance.
The authors of~\cite{DBLP:conf/soda/BadoiuIS07} provide an $\alpha (\opt\cdot \log n)^{O(\log_{\alpha} \Delta)}$ -approximation algorithm for any $\alpha \geq 1$, where $\opt$ is the dilation of the minimum dilation tree. 
We note that the exponent in the approximation ratio of~\cite{DBLP:conf/soda/BadoiuIS07} is $c_1 \log_{\alpha} \Delta$ with $c_1 \geq 7$. If~$\Delta = n^{c_2}$, then the optimal value of $\alpha$ is $n^{\sqrt{c_1 c_2}}$, which yields an $O(n^{1-1/{(4c_1c_2)}})$-approximation.
Hence, their approximation guarantee converges towards a linear approximation with increasing spread $\Delta$.

\mysubpara{Sparse spanners.} In $\mathbb R^2$, minimum dilation trees are very sparse geometric spanners, which relax the restriction of $T$ being a tree to $T$ being any graph on $S$. Geometric spanners are well studied, with many constructions known for $(1+\varepsilon)$-dilation spanners with $O(n/\varepsilon^2)$ edges. See the reference textbook~\cite{DBLP:books/daglib/NarasimhanSmid} for an overview of the many existing constructions. One result that relates spanners to spanning trees is that of Aronov, de Berg, Cheong, Gudmundsson, Haverkort, Smid and Vigneron~\cite{DBLP:journals/comgeo/AronovBCGHSV08}. They present an $O(n/(\ell+1))$-dilation spanner with $(n-1+\ell)$ edges for all $\ell \geq 1$. In some sense, the result of~\cite{DBLP:journals/comgeo/AronovBCGHSV08} can be seen as a transition between $(1+\varepsilon)$-spanners and minimum dilation trees.

\mysubpara{Tree covers.} A Euclidean tree cover is a collection of trees that together approximate an $n$-point Euclidean metric space. In particular, for every pair of points there exists a tree in the $(1+\varepsilon)$-tree cover such that their distance in the tree is a $(1+\varepsilon)$-approximation of their Euclidean distance. The celebrated Dumbell theorem of Arya, Das, Mount, Salowe and Smid~\cite{DBLP:conf/stoc/AryaDMSS95} states that any $n$ points in the Euclidean plane admit a tree cover with $O(\varepsilon^{-2} \cdot \log (1/\varepsilon))$ trees and $(1+\varepsilon)$-dilation. Chang, Conroy, Le, Milenković, Solomon, and Than~\cite{DBLP:journals/dcg/ChangCLMST26} improve the number of trees in the tree cover to $O(\varepsilon^{-1} \cdot \log(1/\varepsilon))$. Chan's shifting lemma~\cite{DBLP:journals/dcg/Chan98} implies a Euclidean tree cover with 3 trees and $6\sqrt 2$ dilation, as observed by~\cite{DBLP:conf/soda/0001MS026}. Le, Milenković, Solomon and Zhang~\cite{DBLP:conf/soda/0001MS026} and Bikeev, Kupavskii, and Turevskii~\cite{DBLP:conf/soda/BikeevKT26} independently proved that there exists a Euclidean tree cover with 2 trees and $O(1)$ dilation---the constants they obtained are $4 \sqrt {26}$ and $40$, respectively. Minimum dilation trees are tree covers of size 1, however, for some point sets the optimal dilation is $\Omega(n)$.

\mysubpara{Unweighted graphs.} Minimum dilation trees have been studied extensively in unweighted graphs metrics, e.g., in the shortest path metric on graphs with unit edge weights. Cai and Corneil~\cite{DBLP:journals/siamdm/CaiC95} were the first to consider this problem. In particular, they consider the problem of deciding whether an unweighted graph admits a spanning tree with dilation~$t$. They provided a linear time algorithm for $t \leq 2$, and showed that the problem is NP-complete for $t \geq 4$. For $t=3$, the problem still remains open. Special cases of the tree 3-spanner problem have been considered, for example, in split graphs and complements of bipartite graphs~\cite{DBLP:phd/ca/Cai92}, interval, permutation and regular bipartite graphs~\cite{DBLP:journals/ipl/MadanlalVR96}, planar graphs~\cite{DBLP:journals/dam/FeketeK01}, directed path graphs~\cite{DBLP:journals/networks/LeL99}, very strongly chordal graphs~\cite{DBLP:journals/tcs/BrandstadtDLL04}
and bounded degree graphs~\cite{DBLP:journals/ipl/FominGL11}. The problem of approximating the minimum dilation tree has also been studied. There is an $O(\log n)$-approximation algorithm~\cite{DBLP:journals/siamcomp/EmekP08}, and an NP-hardness result for a $(2-\varepsilon)$-approximation algorithm~\cite{DBLP:conf/isaac/Galbiati01,DBLP:journals/dam/LiebchenW08} for any $\varepsilon > 0$. 

\section{Warm-up}\label{sec:warm-up}

\begin{figure}
    \centering
    \includegraphics[page=2]{figures/dilation_examples.pdf}
    \caption{(a) The MST of a set of points lying on two horizontal lines at vertical distance $4$. (b) A comb with dilation $\Theta(a)$. (c) For points on a circle there is no tree with dilation $o(n)$. 
    }
    \label{fig:examples}
\end{figure}

Before diving into the main body of the paper, we give some intuition for the minimum dilation of trees by presenting four easy facts, which serve as a warm-up for the more complicated concepts that make our $o(n)$-approximation possible. 


\mysubpara{MST upper bound \cite{eppstein1996spanning}.} The MST has dilation at most $n-1$. To see this, consider a pair of points $p,q$ at distance $d$ that are \emph{not} connected by an edge of the MST. 
The shortest path between $p$ and $q$ in the MST consists of at most $n-1$ edges of length at most $d$, as otherwise the edge $\overline{pq}$ would have been in the MST. Hence the dilation of the MST is at most $\frac{(n-1)d}{d}=n-1$. 

\mysubpara{MST lower bound \cite{DBLP:journals/comgeo/CheongHL08}.}  Consider the set of $n$ points $S=\{0,4\}\times\{0,\ldots,n/2\}$, see \Cref{fig:examples}(a). The MST connects the horizontal pairs at distance $1$, and picks a vertical connection of length $4$. So, either the left-most or right-most vertical pair has distance at least $2(n/4)+4$ in the tree. The dilation is thus at least $\frac{2(n/4)+4}{4}\geq n/8$. However, connecting the points by a comb as in \Cref{fig:examples}(b), we obtain a dilation of $9$. Therefore, the MST is an $\Theta(n)$-approximation.

\mysubpara{Comb lower bound.} 
Consider for any $a>0$ the set of $n$ points $S_a=\{0,a\}\times\{0,\ldots,n/2\}$. 
Consider two trees $T_1,T_2$ on the points $S_a$.
$T_1$ connects the horizontal lines of points, and includes a single vertical connection, as in \Cref{fig:examples}(a). $T_2$ connects the points by a comb as in \Cref{fig:examples}(b). The dilation of $T_1$ is $\Theta(\frac{n}{a})$, while the dilation of $T_2$ is $\Theta(a)$. 
Therefore, we can obtain a tree with dilation $O(\min(n/a,a)))$ by choosing either the ``MST'' or the ``comb'' construction.  It turns out that this pair of constructions is optimal: any spanning tree of $S_a$ has dilation $\Omega(\min(n/a,a))$. We formalize and generalize this lower bound in \Cref{lem:the-hard-obstruction}.

\mysubpara{Circle lower bound \cite{eppstein1996spanning}.} Consider a set $S$ of $n$ points evenly spread on a circle. The dilation of any tree on $S$ is $\Omega(n)$: There is a vertex $v$ whose removal splits the tree into components of size at most $2n/3$, which gives two points in different components that are adjacent in the cycle and far from $v$, see \Cref{fig:examples}(c). The path in the tree connecting these two points has to travel to $v$ and back, and thus their dilation is in $\Omega(n)$. The points do not necessarily have to be spread equally along the cycle for the lower bound to hold, as long as the distance between adjacent points is $O(1/n)$. We formalize and generalize this circle lower bound in \Cref{lem:circle-test}.

\mysubpara{Overall insight.}
The circle lower bound and comb lower bound act as obstructions that prevent a point set from admitting a spanning tree with small dilation.
In fact, if any obstruction implies an~$\Omega(n^c)$ lower bound, then the trivial MST construction would yield an $O(n^{1-c})$-approximation, as required.
These obstructions can even appear as subsets in the point set, since discarding irrelevant points only increases the dilation by at most a constant factor~\cite{DBLP:conf/soda/Gupta01}.
Therefore, our approach will be to detect whether a (generalized version of a) circle or comb obstruction occurs in any subset of the points.
If not, we show that the point set has a ``tree-like'' structure, and that connecting the points along this structure yields a tree with $O(n^{1-c})$ dilation.
Quantifying and formalizing these notions is the main challenge of this paper, and proving its correctness is our primary contribution.

\section{Preliminaries}
\label{sec:preliminaries}
Let $S$ be a set of $n$ points in the plane and let~${G=(S,E)}$ be a geometric graph whose vertices are the points in~$S$, where the length of an edge $(u,v)$ is the Euclidean distance between its endpoints~$\|{u-v}\|$.
For any two points~${p,q\in S}$, we denote by~$d_G(p,q)$ the length of a shortest $pq$-path $\pi_G(p,q)$ in~$G$, or~${\infty}$ if no such path exists.

\mysubpara{Dilation.} 
For a pair of points $p$ and $q$ in $S$, the \emph{dilation} of $p$ and $q$ with respect to $G$ is the ratio $\dil_G(p,q)= \frac{d_G(p,q)}{\lVert p-q\rVert}$.
The dilation $\dil(G)$ of the graph $G$ is defined as the maximum dilation $\max_{p,q\in S}\dil_G(p,q)$ with respect to $G$, of any pair of points in $S$. 
A \emph{minimum dilation tree} $T$ of~$S$ is a geometric graph on $S$ that is a tree that minimizes the dilation.
We denote by $\opt$ the dilation of such a minimum dilation tree, i.e., $\opt = \min \{\dil(T) : \text{$T$ spanning tree of $S$}\}$.

\mysubpara{Polygons.} We often consider weakly simple polygons containing a subset of the points in~$S$. A \emph{simple} polygon
$P$ is a subset of the plane whose boundary $\partial P$ is a simple cycle of line segments that only intersect at their endpoints.
A polygon is called \emph{weakly simple} if (i) for any $\eps>0$ its vertices can be $\eps$-perturbed to obtain a simple polygon and (ii) its complement is path-connected\footnote{This is a slightly more restrictive subclass of what are typically called weakly simple polygons in literature~\cite{DBLP:journals/dcg/AkitayaAET17}, as it excludes self-touching weakly simple polygons whose complement has a bounded component.}. For an illustration, see~\Cref{fig:r-support}.
For a weakly simple polygon $P$, we denote by~$d_P(p,q)$ the geodesic distance between $p$ and $q$, which is the length of the shortest path $\pi_P(p,q)$ contained within the polygon~$P$.

We denote by $B(p,r)$ the disk of radius $r$ centered at the point $p$. We similarly denote for a weakly simple polygon $P$ and a point $p\in P$ the geodesic disk of radius $r$ by $B_P(p,r)$.

We define the notion of the $r$-support capturing the shape of a point set for a \hbox{distance-scale $r$.}

\begin{definition}[$r$-support]
    The \emph{$r$-support} of a point set $S$ is the set of weakly simple polygons formed by the outer edges of the connected components of the Delaunay triangulation after removing any edges whose length is greater than~$r$.
    We denote the $r$-support by~$\supp{r}(S)$.
\end{definition}

With slight abuse of notation, we often use $\supp{r}$ for $\supp{r}(S)$ whenever $S$ is clear from the context. Note that the $r$-support may contain degenerate polygons, which are defined by just one point.
We next state two relevant properties of the $r$-support of any point set.
First, the $r$-support of a simple polygon is distance-preserving for pairs with bounded Euclidean distance.

\begin{observation}
    \label{obs:facts-on-the-r-support}
    By definition of the $r$-support $\supp{r}$, we have that $\mathrm{MST}_{\leq r}\subset\mathrm{Delaunay}_{\leq r} \subset \supp{r}$. This implies that for any polygon $P_r\in \supp{r}(S)$ we have that
    $S\cap\left(P_r\oplus B\left(0,r/\sqrt{2}\right)\right)=S\cap P_r$,
    or conversely
    $S\cap\left(\left(P_r\oplus B\left(0,r/\sqrt{2}\right)\right)\setminus P_r\right)=\emptyset$.
\end{observation}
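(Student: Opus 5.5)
We prove the two inclusions first and then derive the emptiness statement from them.

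\textbf{The inclusions.} The claim $\mathrm{MST}_{\leq r}\subset\mathrm{Delaunay}_{\leq r}$ follows from the standard fact that every MST edge is a Delaunay edge. Removing the edges longer than $r$ from both graphs preserves the inclusion. The claim $\mathrm{Delaunay}_{\leq r}\subset\supp{r}$ holds by definition: each connected component of $\mathrm{Delaunay}_{\leq r}$ is enclosed by the weakly simple polygon bounding its outer face, and that polygon contains all of the component's edges and triangles. A direct consequence is that two points of $S$ at distance at most $r$ lie in the same polygon of $\supp{r}$. Indeed, their MST path uses only edges of length at most $r$, so the two points lie in the same component of $\mathrm{MST}_{\leq r}$.

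\textbf{The main claim.} Suppose, for contradiction, that some $q\in S$ lies in $(P_r\oplus B(0,r/\sqrt2))\setminus P_r$. Let $q'\in P_r$ be a point with $\|q-q'\|<r/\sqrt2$; we may take $q'$ on $\partial P_r$. Then $q'$ lies on an outer Delaunay edge $e=\overline{ab}$ with $a,b\in S$ and $\|a-b\|\le r$. If $q$ is within distance $r$ of $a$ or of $b$, the consequence above places $q$ in the same component as $a$. That component is enclosed by $P_r$, so $q\in P_r$, a contradiction.

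\textbf{The remaining case.} Otherwise $\|q-a\|>r$ and $\|q-b\|>r$, while the distance from $q$ to the segment $ab$ is less than $r/\sqrt2$ and $|ab|\le r$. Consider the triangle $abq$. Its circumscribed disk, or a disk through $a$ and $b$ obtained by shrinking toward $q$, must contain a point of $S$ that gives a Delaunay edge of length at most $r$ from $a$ or $b$ toward the outside of $P_r$. This would contradict the fact that $e$ is an outer edge of the component.

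The cleanest way to run this case is to look at the Delaunay triangle $abc$ on the exterior side of $e$; such a triangle exists because $q$ is exterior, so $e$ is not a convex-hull edge there. Since $abc$ is not in the component, at least one of $ac,bc$ has length greater than $r$. The empty-circle property of $abc$ then forces the circumdisk to have radius at least $r/2$. I would show that the part of this circumdisk within distance $r/\sqrt2$ of $e$ on the exterior side covers the region in which $q$ must lie. Because the circumdisk is empty of points of $S$, no such $q$ exists.

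\textbf{Main obstacle.} The delicate step is this geometric covering computation, which is where the constant $1/\sqrt2$ comes from. The extreme configuration is a right isosceles triangle with $|ac|=|bc|$ just above $r$. I would parametrize the circumcircle through $a$ and $b$, with $|ab|\le r$ fixed, and minimize over the position of $c$. The goal is to verify that every exterior point within distance $r/\sqrt2$ of $e$ is either inside the empty circumdisk or within distance $r$ of $a$, $b$ or $c$. In the last situation, a point within distance $r$ of $a$ or $b$ joins the component by the MST argument, and a point within distance $r$ of $c$ is handled by the same reasoning applied to $c$.
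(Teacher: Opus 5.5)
Your first case is exactly the argument the paper intends; it states the observation without further proof beyond the fact that points at distance at most $r$ lie in the same component of $\mathrm{MST}_{\leq r}$. However, your proposal does not finish. The ``remaining case'' is only a plan, and the missing idea is that this case is empty.

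Let $q\notin P_r$ be within $r/\sqrt2$ of $P_r$. Let $q'$ be the point of the boundary edge $\overline{ab}$ (with $\|a-b\|\le r$) closest to $q$, and set $h:=\|q-q'\|\le r/\sqrt2$.
\begin{itemize}
\item If $q'$ is an endpoint, then $q$ is within $r/\sqrt2<r$ of it.
\item Otherwise $\overline{qq'}\perp\overline{ab}$, and $q'$ is within $\|a-b\|/2\le r/2$ of one endpoint, say $a$. Then
\[\|q-a\|^2=h^2+\|q'-a\|^2\le \tfrac{r^2}{2}+\tfrac{r^2}{4}=\tfrac34 r^2<r^2.\]
\end{itemize}
So $q$ is always within distance $r$ of a vertex of $P_r$, and your first case applies. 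This also covers degenerate one-point polygons directly. In fact the statement holds for any radius up to $\tfrac{\sqrt3}{2}r$. The constant $1/\sqrt2$ is therefore not tight, and there is no extremal ``right isosceles'' configuration to analyze.

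The circumdisk route you sketch is unnecessary, and as written it would not close the case. The step ``a point within distance $r$ of $c$ is handled by the same reasoning applied to $c$'' fails. Since one of $\overline{ac},\overline{bc}$ is longer than $r$, the vertex $c$ may belong to a different component of $\mathrm{Delaunay}_{\leq r}$. Being $r$-close to $c$ then says nothing about lying in $P_r$. Replace the entire remaining-case discussion with the short computation above.
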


Furthermore, the number of distinct supports for any point set can be bounded directly by the size of the Delaunay triangulation, as $\mathrm{Delaunay}_{\leq r_1} \subseteq\mathrm{Delaunay}_{\leq r_2}$ for all $r_1\leq r_2$.
\begin{observation}[Number of distinct $r$-supports]
    For a fixed set of $n$ points $S$ and all choices of $r\geq 0$, there exist less than $3n$ distinct $r$-supports.
\end{observation}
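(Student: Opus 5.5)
The plan is a monotonicity-plus-counting argument. The $r$-support $\supp{r}$ is determined by the edge set $\mathrm{Delaunay}_{\leq r}$, the Delaunay edges of length at most $r$. So it suffices to bound the number of distinct such edge sets as $r$ ranges over $[0,\infty)$.

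First, fix the Delaunay triangulation of $S$; the degenerate case of cocircular points is handled by fixing any triangulation. It has at most $3n-6$ edges. Sort its edge lengths $\ell_1\leq \ell_2\leq\dots\leq \ell_m$, where $m\leq 3n-6$. For $r\geq 0$ we have $\mathrm{Delaunay}_{\leq r}=\{e : \len(e)\leq r\}$. This set is constant on each interval $[\ell_i,\ell_{i+1})$, on $[0,\ell_1)$, and on $[\ell_m,\infty)$. The family $\{\mathrm{Delaunay}_{\leq r}\}_r$ is nested and increasing, as noted just before the statement. It therefore takes at most $m+1$ distinct values, one for each distinct length threshold plus the empty set, and ties in length only reduce this count.

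Next, the map from $\mathrm{Delaunay}_{\leq r}$ to $\supp{r}$ is a function. The outer boundaries of the connected components of a fixed planar subgraph are determined by the subgraph, with isolated points giving degenerate polygons. Hence the number of distinct $r$-supports is at most the number of distinct edge sets, namely $m+1\leq 3n-5<3n$. This holds for $n\geq 2$; for $n=1$ there is exactly one support, which is also $<3n$.

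The only mildly delicate step is making sure that $\supp{r}$ really depends on nothing but the edge set $\mathrm{Delaunay}_{\leq r}$. This is clear from the definition, provided one fixed triangulation is used for all $r$. Otherwise the count could in principle mix triangulations, so I would state this convention explicitly. Everything else is just Euler's bound on the number of Delaunay edges.
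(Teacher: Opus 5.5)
Your argument is correct and is the paper's own justification made explicit: the edge sets $\mathrm{Delaunay}_{\leq r}$ form a nested chain inside a triangulation with at most $3n-6$ edges, so they take at most one more distinct value than there are edges, and the support is determined by that edge set. One small nit: the Euler bound $m\leq 3n-6$ needs $n\geq 3$ (for $n=2$ you have $m=1$, so $m+1\leq 3n-5$ fails there), but $m+1=2<6$ still gives the claimed bound.
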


\begin{figure}
    \centering
    \includegraphics[page=2]{figures/r-support}
    \caption{The Delaunay triangulation after removing all edges of length greater than $r$, and the corresponding $r$-support of the point set $P$ consisting of two weakly simple polygons.}
    \label{fig:r-support}
\end{figure}

\mysubpara{Thinness.}
As our approach to constructing low dilation trees is guided by the shape of the polygons in the $r$-support, we quantify the ``tree-likeness'' of a weakly simple polygon $P$.
To this end, we introduce the concept of $\Lambda$-thinness for a polygon or a point set.
\begin{definition}[$\Lambda$-thin polygon and point set]
    A weakly simple polygon $P$ is \emph{$\Lambda$-thin} if the radius of the largest disk contained in $P$ is at most $\Lambda$. 
    A point set $S$ is \emph{$\Lambda$-thin} if for all $r > 0$, every weakly simple polygon $P$ in the $r$-support $\supp{r}$ is $\Lambda r$-thin.
\end{definition}

\section{Technical overview}

Our algorithmic approach relies heavily on the identification of lower bounds for $\opt$. In particular, if we are able to find some structure in the point set $S$ that certifies that $\opt\geq n^{c}$, for some small constant $c$, then the minimum spanning tree of $S$, having a dilation of $n-1$~\cite{eppstein1996spanning}, has a dilation of at most $n^{1-c} \cdot \opt$. We denote this $n^c$ by the parameter $\Lambda$, for which the exact parameter $c$ we will only set in the very end of the paper.

There are two types of lower bounds used by our algorithm. The first one is a generalization of the circle lower bound of~\Cref{sec:warm-up}. It gives us a strong structure on the point set and, in particular, the $r$-support for any $r$. We will show that if the point set $S$ is not $\Lambda$-thin, then $\opt\in \Omega(\Lambda)$ (\Cref{cor:circle-test}). The notion of $\Lambda$-thinness implies that for any $r$ the polygons in the $r$-support are geometrically `roughly tree-shaped', and hence the entire $r$-support is forest-shaped. We are able to quantify this notion via what we call an $r$-tree partition of the $r$-support (\Cref{thm:tree-partition} and \Cref{thm:thm-junction-tree}).

If the point set is $\Lambda$-thin, and thus we obtain no lower bound, then we compute this tree structure for every\footnote{For point sets of super-quadratic spread, we compress this sequence, so that at most $n^2$ values of $\Lambda^{k}$ remain.} value $r=r_0,\Lambda r_0, \Lambda^2 r_0, \ldots$, where $r_0$ is the smallest interpoint distance in~$S$. This way, we obtain a sequence\footnote{This is very similar to the filtration induced by the $\alpha$-complex.} of forest-like polygons $\supp{r_0}\subset\supp{\Lambda r_0}\subset \ldots$. Next, we test whether this sequence is \emph{consistent} across increasing values of $r$, meaning that the forests have a similar structure.
To this end, we test at scale~$r$ whether there is some tree-like piece of $\supp{r}$ that is spanned by at least two tree-like pieces of $\supp{r/\Lambda^2}$.
Intuitively, this means that there is a structure similar to the comb lower bound in~\Cref{sec:warm-up} present in the point set.
If this is the case, we are again able to show a lower bound of $\opt \in \Omega(\Lambda)$ (\Cref{lem:the-hard-obstruction}).

Hence, our core algorithm only needs to handle consistent sequences of supports. Given such a consistent sequence of supports, the algorithm considers the scales $r=r_0,\Lambda r_0, \Lambda^2 r_0, \ldots$ in sequence.
At every scale $r$
and for each connected component of the $r$-support, our goal is to build a tree on the contained points of $S$
that connects the trees of the previous scale without modification, and follows the forest structure of the $r$-support closely.
The result of phase $r$ is a forest that connects any pair of points $(p,q)$ with Euclidean distance in $[r/\Lambda,r]$ such that the path is (i) in $B\left(p,{\Lambda^2\|p-q\|}\right)$, and (ii) via edges of length at most $\Lambda^2r$ and (iii) any edge of length $\Lambda^2 r$ connects two disjoint connected components of $\supp{r/\Lambda}$. 
For any pair of points with Euclidean distance~$\Delta$, there is a scale $r$ such that $\Delta\in[r/\Lambda,r]$, which will make sure that the pair of points has good dilation. 

\begin{figure}
    \centering
    \includegraphics{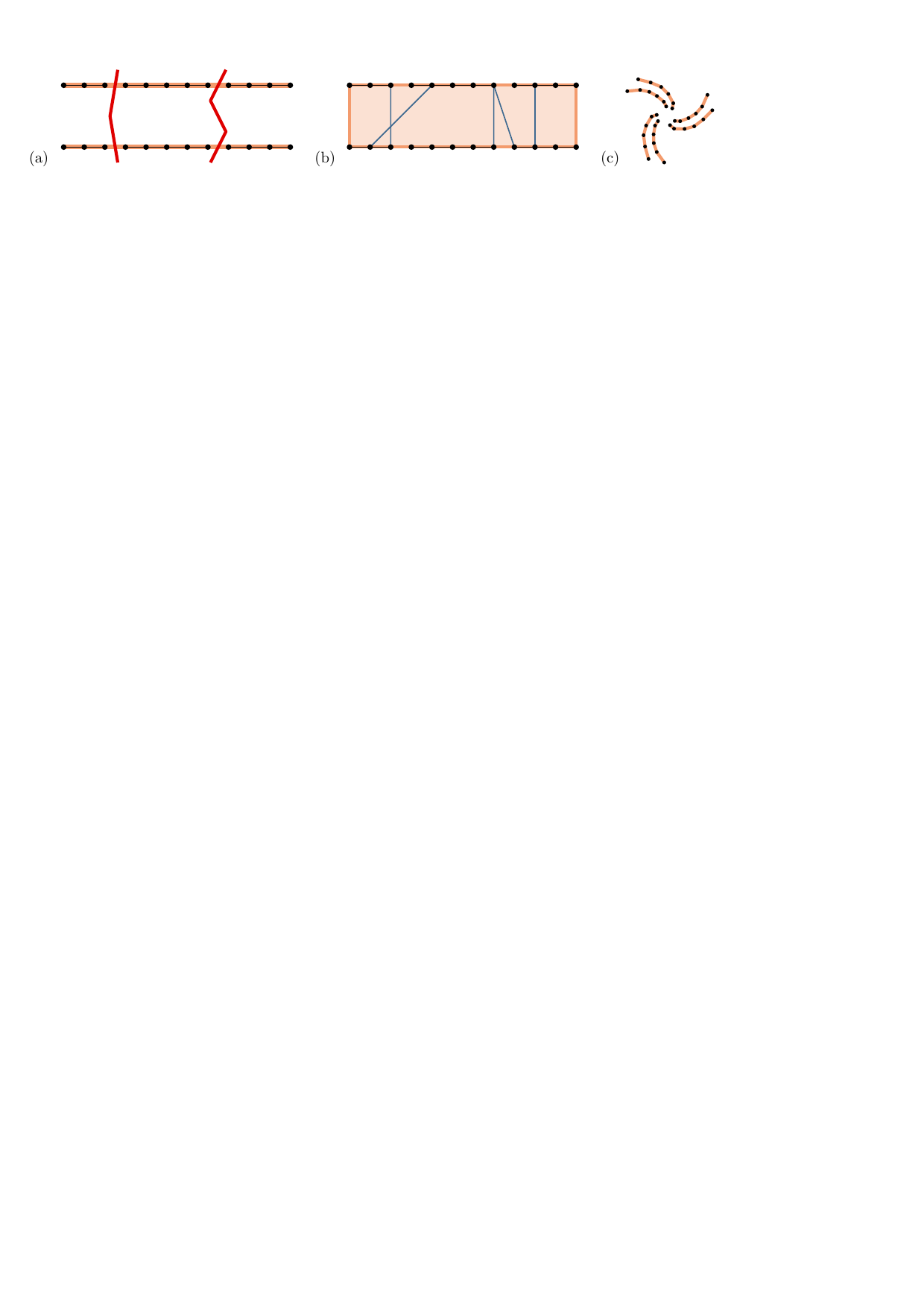}
    \caption{(a) The red cut splits the $r$-support (orange) such that (b) at the next scale we can include the blue edges to get a good tree. (c) A more complex point set with similar structures.}
    \label{fig:intuition}
\end{figure}

This approach has one problem however: Being too aggressive about connecting every pair of distance at most $r$, without regarding the next support $\supp{\Lambda r}$, might make it impossible for the next scale to satisfy properties (i), (ii) and (iii).
In fact, \Cref{fig:intuition}(a), which is similar to the MST lower bound from the warm-up, and \Cref{fig:intuition}(c), which shows a similar, but more complex instance which consists of three twisted copies of the same instance, show instances where this approach has no chance of working: In \Cref{fig:intuition}(a), the $r$-support for $r=1$ consists of two horizontal segments, while any good spanning tree requires us to make many top to bottom connections. To counteract this, we first look at~$\supp{\Lambda r}$, and compute a set of cuts at regular intervals of $\supp{r}\subset\supp{\Lambda r}$ that only separate points of distance at least~$r/\Lambda$.
Intuitively, this cut tells scale $r$ that it should only worry about pairs of points of distance at most $r$ that come from the same cut-piece.
Pairs from different pieces are claimed by the next scale, which will make sure that properties (i), (ii), and (iii) hold for the pairs ignored at scale~$r$, as in \Cref{fig:intuition}(b). 
The structure, and choice of parameters prevents a propagation of influences by cuts, that is, the cut of~$\supp{r}$ induced by $\supp{\Lambda r}$ does not cut through any connected component of~$\supp{r/\Lambda}$.
Conversely, a pair of points of distance $\Delta\in[r/\Lambda,r]$ is always connected after scale~$\Lambda r$ has been handled.

It remains to show that we obtain a bound on the approximation ratio. We show something even stronger: for the computed tree $T$, we have that $\dil(T) \in O(\max\{n/\Lambda,\Lambda^k\})$, for some constant $k$. As $\opt\geq 1$, this bounds the approximation ratio. To obtain this bound, we make use of the properties (i), (ii), and (iii) 
to bound the total length of edges that could participate in a shortest path via packing arguments in $\mathbb{R}^2$. As $\Lambda = n^c$, the bound on the dilation is minimized at $c=1/(k+1)$ (\Cref{thm:main}). An interesting side effect is that the two obstructions from \Cref{cor:circle-test} and \Cref{lem:the-hard-obstruction} fully classify the sets of points for which $\opt\in \Omega(n/\Lambda)$.

\mysubpara{Organization} In \Cref{sec:obstructions} we generalize the circle and comb lower bound. We first handle the easier of the two, the circle lower bound, in \Cref{sec:circle_obstructions}, before identifying the tree-shape of $r$-supports in \Cref{sec:treeifying}, and with it formalizing the generalized comb lower bound in \Cref{sec:combs}. In \Cref{sec:consistent} we define the sequence of nested forest-like polygons that is consistent. Finally, in \Cref{sec:algorithm}, we use all of the identified structures to algorithmically, and explicitly, construct a tree of sublinear dilation. We provide some concluding remarks in \Cref{sec:concluding_remarks}.

\section{Obstructions}\label{sec:obstructions}

In this section, we generalize the two lower bounds briefly discussed in \Cref{sec:warm-up}.
We first discuss the (easier) circle lower bound. We relate it to the $\Lambda$-thinness of the point set, and show how to algorithmically detect it by examining all distinct $r$-supports of a point set.
If such a lower bound does not exist, we show that every $r$-support permits a very nice tree-like structure, which we formalize with the notion of \emph{junction trees}.
This gives us a framework to formalize and generalize the comb lower bound such that it is strong enough for our algorithmic needs. The mere existence of the structures identified in this section 
obstruct the optimum from being (near) constant; hence, we call them obstructions.


\subsection{Circle obstructions}\label{sec:circle_obstructions}

We first generalize the circle lower bound to points on a planar cycle that contains a large disk.

\begin{lemma}\label{lem:circle-test}
   Consider a plane cycle  $C=p_1,\ldots,p_m,p_1$ on a subset $\{p_1,\ldots,p_m\}\subseteq S$
   and let~${r = \max_i\|p_{i}-p_{i+1}\|}$, where indices are taken modulo $m$. Suppose that the disk $B(x,R)$ centered at $x$ with radius $R$ is contained in the bounded face of $C$. Then $\opt\ge 2R/r$.
\end{lemma}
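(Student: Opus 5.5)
Let $T$ be an arbitrary spanning tree of $S$. We will exhibit two consecutive cycle vertices $p_i,p_{i+1}$ with $d_T(p_i,p_{i+1})\ge 2R$. Since $\|p_i-p_{i+1}\|\le r$, this gives $\dil_T(p_i,p_{i+1})\ge 2R/r$, and taking the minimum over $T$ yields the bound on $\opt$.

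\textbf{Step 1: the drawn tree as a curve.} Regard $T$ as drawn in the plane with straight edges. Each tree path $\pi_T(p_i,p_{i+1})$ is then a curve in $\mathbb{R}^2$. Concatenating the paths $\pi_T(p_1,p_2),\pi_T(p_2,p_3),\ldots,\pi_T(p_m,p_1)$ gives a closed curve $\gamma$ that lies entirely inside the drawing of $T$.

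\textbf{Step 2: winding numbers.} Compare $\gamma$ with the cycle $C$ itself. For each $i$, the segment $\overline{p_ip_{i+1}}$ together with $\pi_T(p_i,p_{i+1})$ forms a closed curve $\sigma_i$. Up to reparametrisation, $C$ equals the sum of the $\sigma_i$ minus $\gamma$.
\begin{itemize}
\item Since $C$ is a plane cycle whose bounded face contains $x$, its winding number around $x$ is $\pm 1$.
\item The curve $\gamma$ runs along a tree, so it retracts within $T$ to a point. Hence, if $x\notin T$, the winding number of $\gamma$ around $x$ is $0$.
\end{itemize}
The main obstacle is that $T$ may pass through $x$, or close to it. To handle this, replace $x$ by a generic point $x'\in B(x,\delta)$ not on $T$, for arbitrarily small $\delta$; the argument then loses only $O(\delta)$, which we let tend to $0$. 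With $x'\notin T$, some $\sigma_i$ must have nonzero winding number around $x'$, i.e.\ $\sigma_i$ encloses $x'$.

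\textbf{Step 3: length bound.} The curve $\sigma_i$ consists of the short segment $\overline{p_ip_{i+1}}$ together with the path $\pi_T(p_i,p_{i+1})$. Since $C$ is plane and the open disk lies in its bounded face, the segment $\overline{p_ip_{i+1}}$ lies outside $B(x,R)$. So $\sigma_i$ is a closed curve that starts and ends outside $B(x',R-\delta)$ and winds around $x'$.

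A closed curve winding around $x'$ meets every ray from $x'$. Take the ray through $p_i$ and the opposite ray. The path $\pi_T(p_i,p_{i+1})$ must reach the opposite ray, at a point $z$ with $\|z-x'\|\ge R-\delta$, because it cannot enter the disk without crossing the segment. This gives
\[
d_T(p_i,p_{i+1}) \;\ge\; \|p_i-z\| + \|z-p_{i+1}\| \;\ge\; 2(R-\delta) - O(r).
\]

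The $O(r)$ slack is the remaining technical point. The cleaner approach is to compare the path with the circle of radius $R$ around $x'$: a curve from outside the disk, winding around the disk and returning, with both endpoints on the same side, has length at least $2R$ plus the chord term, since its projection onto the line through $x'$ and $p_i$ must cover a length of at least $2R$. Letting $\delta\to 0$ then yields $d_T(p_i,p_{i+1})\ge 2R$, hence $\opt\ge 2R/r$.
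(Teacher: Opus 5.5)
\textbf{Gap in Step~3.} Your Steps~1--2 are sound. They match the topological core of the paper's proof, which argues contrapositively: if every $d_T(p_i,p_{i+1})<2R$, each tree path is homotopic to $\overline{p_ip_{i+1}}$ in $\mathbb{R}^2\setminus\{x\}$, so the closed tree walk would be homotopic to $C$, contradicting contractibility. The gap is in Step~3, which as written does not reach the exact constant $2R$.

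First, the claim $\|z-x'\|\ge R-\delta$ ``because it cannot enter the disk without crossing the segment'' is false. The hypotheses only keep $C$ out of $B(x,R)$; tree edges may cross the disk and points of $S$ may even lie inside it. So $z$ can be arbitrarily close to $x'$. What your ray argument legitimately gives is $\|p_i-z\|\ge\|p_i-x'\|$ and then $\|z-p_{i+1}\|\ge\|p_i-z\|-r$. After $\delta\to 0$ this is $d_T(p_i,p_{i+1})\ge 2R-r$, hence only $\opt\ge 2R/r-1$.

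Second, your proposed cleanup does not remove this slack. Let $u$ be the unit direction from $x'$ to $p_i$, and $\theta$ the angle at $x'$ between $p_i$ and $p_{i+1}$. Projecting onto the line through $x'$ and $p_i$, the path must go from $\|p_i-x'\|$ down to at most $0$ and back up to $\langle p_{i+1}-x',u\rangle$. That bounds its length only by $\|p_i-x'\|+\|p_{i+1}-x'\|\cos\theta$, which is strictly below $2R$ when both points lie on the circle and $\theta>0$. The projection does not ``cover a length of $2R$''.

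\textbf{The fix is short.} Since $x'\notin\overline{p_ip_{i+1}}$, we have $\theta<\pi$. Let $w$ be the unit bisector of the directions from $x'$ to $p_i$ and $p_{i+1}$, and use the ray from $x'$ in direction $-w$ instead of the ray opposite to $p_i$.
\begin{itemize}
\item Every $y\in\overline{p_ip_{i+1}}$ satisfies $\langle y-x',w\rangle>0$, so the segment misses this ray and the tree path must meet it at some $z$.
\item The vector $z-x'$ makes a non-acute angle with both $p_i-x'$ and $p_{i+1}-x'$. Hence $\|p_i-z\|\ge\|p_i-x'\|\ge R-\delta$ and $\|p_{i+1}-z\|\ge\|p_{i+1}-x'\|\ge R-\delta$.
\item So $d_T(p_i,p_{i+1})\ge 2(R-\delta)$. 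Letting $\delta\to 0$ over the finitely many indices gives $\max_i d_T(p_i,p_{i+1})\ge 2R$.
\end{itemize}

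Alternatively, adopt the paper's metric step, which has no slack. If $d_T(p_i,p_{i+1})<2R$, every point of the tree path is within path distance less than $R$ of $p_i$ or of $p_{i+1}$. Then $\sigma_i$ lies in $B(p_i,R)\cup B(p_{i+1},R)$, which is simply connected (two overlapping disks) and avoids $x$. Thus $\sigma_i$ has winding number $0$, provided the perturbation $\delta$ is chosen smaller than the slack $2R-\max_i d_T(p_i,p_{i+1})$.
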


\begin{proof}
    Let $T$ be any spanning tree on $S$, and consider its straight-line embedding. After an arbitrarily small perturbation of $x$, we may assume that $x\notin T$. Since $T$ is a tree, it deformation retracts to a point, and hence the embedding of $T$ is contractible in $\mathbb{R}^2\setminus\{x\}$.


    For $i \in [m]$, let $Q_i$ be the unique shortest path in $T$ from $p_i$ to $p_{i+1}$, see \Cref{fig:cycle_proof}. Suppose, for the sake of contradiction, that $d_T(p_i,p_{i+1})<2R$ for all $i$. Then $Q_i\subset B(p_i,R)\cup B(p_{i+1},R)$.
    Since $B(x,R)$ lies inside $C$, the balls $B(p_i,R)$ and $B(p_{i+1},R)$ do not contain $x$. Hence $Q_i$ is homotopic to the segment $\overline{p_ip_{i+1}}$ in $\mathbb{R}^2\setminus\{x\}$, relative to endpoints.

    Concatenating these homotopies, $Q \coloneqq Q_1\oplus\cdots\oplus Q_m$ is homotopic in $\mathbb{R}^2\setminus\{x\}$ to the cycle~$C=\overline{p_1p_2}\oplus\cdots\oplus\overline{p_mp_1}$. However, the point $x$ is inside the planar cycle $C$, and so $C$ is not contractible in $\mathbb{R}^2\setminus\{x\}$, whereas $Q$ is contractible in $\mathbb{R}^2\setminus\{x\}$, as it is a closed walk in $T$. Thus, by contradiction, we have that $d_T(p_i,p_{i+1})\ge 2R$ for some $i$. As $\|p_i-p_{i+1}\|\le r$, we obtain
    \[
        \dil(T)\ge \frac{d_T(p_i,p_{i+1})}{\|p_i-p_{i+1}\|}\ge \frac{2R}{r}.\qedhere
    \]
\end{proof}

\begin{figure}
\begin{minipage}[t]{0.48\textwidth}
  \centering
    \includegraphics{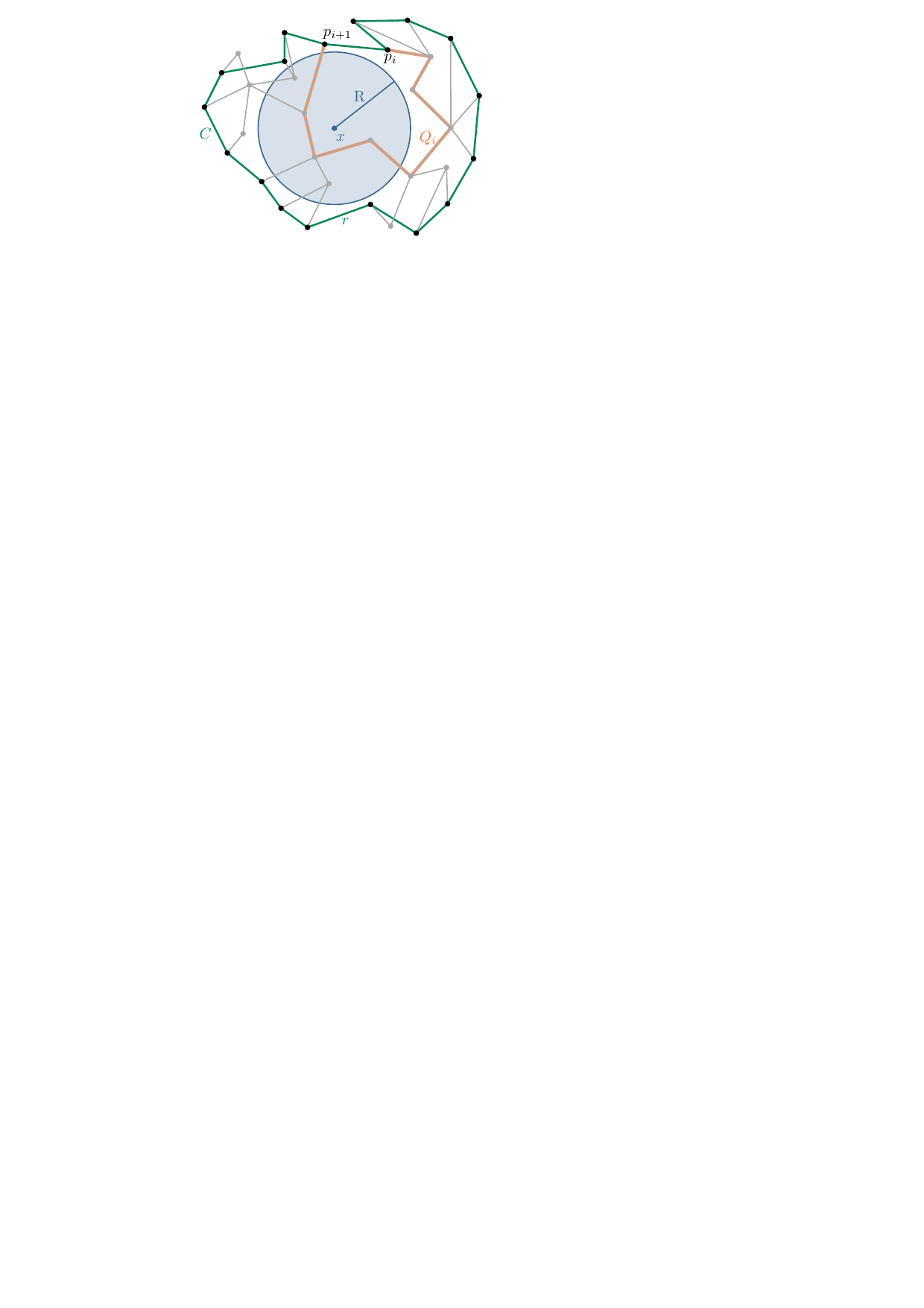}
        \caption{The cycle $C$ connects the  points $\{p_1,\ldots,p_m\}$. The path $Q_i$ (orange) in the spanning tree $T$ (grey) between $p_i$ and $p_{i+1}$ goes ``around'' $x$ and thus is at least~$2R$ long.}
        \label{fig:cycle_proof}
    \end{minipage}
    \hfill
\begin{minipage}[t]{0.48\textwidth}
    \includegraphics{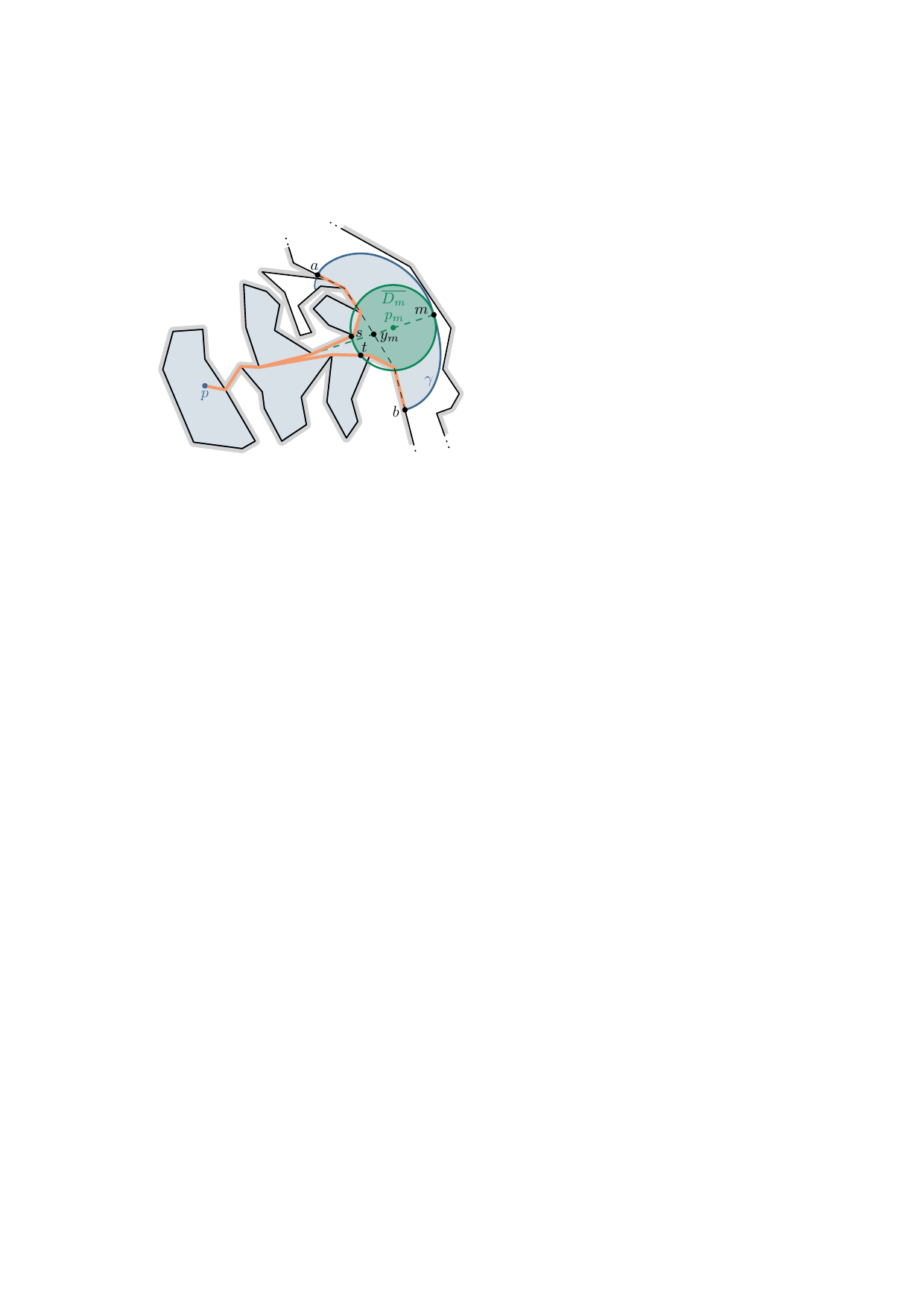}
    \caption{The geodesic disk $D_P(p,R)$ in blue. The area $\overline{D_m}$ is the green shaded subset of the disk $D_m$ bounded by the two orange shortest paths from $p$ to $a$ and $b$.}
    \label{fig:wavefront}
\end{minipage}
\end{figure}
The lemma can directly be applied to the weakly simple polygons in the $r$-support to obtain:

\begin{theorem}[Generalized circle obstruction]\label{cor:circle-test}
    Let $r >0$. If there is a weakly simple polygon in the $r$-support $\supp{r}$ that contains a disk of radius $R$, then $\opt \geq 2R/r$.
\end{theorem}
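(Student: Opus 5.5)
The plan is to reduce the statement to \Cref{lem:circle-test}. That lemma needs two things: a plane cycle on points of $S$ whose consecutive points are at distance at most $r$, and a disk of radius $R$ lying in the bounded face of that cycle. So I would take the weakly simple polygon $P\in\supp{r}$ containing the disk $B(x,R)$ and try to exhibit exactly such a cycle. By definition, $\partial P$ is made of outer edges of a connected component of the Delaunay triangulation after all edges longer than $r$ have been removed. Hence every edge on $\partial P$ is a Delaunay edge of length at most $r$ joining two points of $S$. The boundary walk $p_1,\ldots,p_m,p_1$ of $P$ is therefore a closed walk on points of $S$ with $\max_i\|p_i-p_{i+1}\|\le r$.

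The main obstacle is that $P$ is only weakly simple. Its boundary walk may revisit vertices or traverse an edge twice, as happens for dangling trees of Delaunay edges attached to the outer boundary, so it need not be a plane cycle as \Cref{lem:circle-test} requires. I see two ways to handle this.

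\begin{enumerate}
\item Rerun the proof of \Cref{lem:circle-test} directly with the closed boundary walk. That proof uses only three facts: each segment $\overline{p_ip_{i+1}}$ lies in $P$, so it avoids the interior point $x$; the balls $B(p_i,R)$ do not contain $x$, because every $p_i\in\partial P$ and $B(x,R)\subseteq P$; and the boundary walk has nonzero winding number around $x$. The last fact holds because $x$ lies in the interior of $P$: after an $\varepsilon$-perturbation making $P$ simple, $x$ is still enclosed, and the winding number is stable under small perturbations that avoid $x$.
\item Alternatively, extract a genuine simple cycle from the boundary walk. Delete repeated portions, which are closed sub-walks; keep the sub-walk that still winds around $x$ and discard those with winding number zero. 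The result is a simple closed sub-walk of $\partial P$ whose bounded face contains $x$. Its bounded face must contain all of $B(x,R)$: the disk is connected, lies in the interior of $P$, and meets no boundary edge, so it lies entirely within a single face of the cycle. Then \Cref{lem:circle-test} applies verbatim.
\end{enumerate}

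The second route is the cleaner one, and I would present it. The step needing the most care is the claim that $B(x,R)$ misses $\partial P$, so that the disk lies within one face. If the disk only touches the boundary, it suffices to shrink $R$ by an arbitrarily small amount and take the limit. Finally, the degenerate polygons in $\supp{r}$ (single points) contain no disk of positive radius, so they need no treatment. With these points settled, \Cref{lem:circle-test} gives $\opt\ge 2R/r$.
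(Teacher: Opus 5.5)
Your proposal is correct and takes the same route as the paper, which simply treats $\partial P$ as a plane cycle of Delaunay edges of length at most $r$ that encloses the disk, and then invokes \Cref{lem:circle-test}. Your extraction of a simple sub-cycle of the boundary walk with nonzero winding number around $x$, together with shrinking $R$ slightly to keep the disk off $\partial P$, makes explicit a weak-simplicity detail that the paper's one-line proof glosses over.
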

\begin{proof}
    Let $P$ be a polygon in $\mathcal{P}_r$ that contains a disk of radius $R$. Then $\partial P$ is a planar cycle in~$S$ of edges with length at most $r$, which encloses that disk. Hence \Cref{lem:circle-test} implies~the~claim.
\end{proof}

Recall that if for all $r$, the $r$-support of the point set $S$ does not contains a disk of radius~$\Lambda r$, then $S$ is said to be $\Lambda$-thin.
We show that this can be verified in quadratic time.

\begin{lemma}\label{lem:verify_lambda_thin}
    Given~$\Lambda \geq 0$, one can determine in $O(n^2)$ time whether a given set $S$ of $n$ points is $\Lambda$-thin. If~$S$ is not $\Lambda$-thin, then $\opt \geq 2\Lambda$.
    \label{lem:thin-obstruction-check}
\end{lemma}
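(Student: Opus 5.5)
The lower bound is immediate from \Cref{cor:circle-test}. If $S$ is not $\Lambda$-thin, then there is some $r>0$ and a weakly simple polygon $P\in\supp{r}$ that is not $\Lambda r$-thin. By definition, $P$ then contains a disk of radius $R>\Lambda r$ (or $\geq \Lambda r$, by compactness). \Cref{cor:circle-test} gives $\opt\ge 2R/r\ge 2\Lambda$. So the real content is the algorithm.

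For the algorithm, I would use that only the combinatorics of $\supp{r}$ matter, not the exact value of $r$.
\begin{enumerate}
\item Compute the Delaunay triangulation in $O(n\log n)$ time and sort its $O(n)$ edges by length $\ell_1\le\dots\le\ell_m$.
\item Note that $\supp{r}$ is constant for $r\in[\ell_j,\ell_{j+1})$. The condition ``every polygon has inradius at most $\Lambda r$'' is hardest to satisfy at the smallest $r$ of such an interval. So it suffices to check, for each of the fewer than $3n$ distinct supports (\Cref{obs:facts-on-the-r-support} and the counting observation), whether some polygon of $\supp{\ell_j}$ has inradius greater than $\Lambda\ell_j$.
\item For $r<\ell_1$, the support consists only of degenerate single-point polygons, which trivially pass.
\end{enumerate}
Handled this way, the supports form a sequence of $O(n)$ versions. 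Each version differs from the previous one by adding a single Delaunay edge, i.e.\ merging components or closing off a face.

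The main obstacle is computing the largest inscribed disk for every polygon in each of the $O(n)$ supports within $O(n)$ amortized time per support. The key point is that a polygon of $\supp{r}$ is the outer boundary of a component of $\mathrm{Delaunay}_{\le r}$. Its interior is therefore a union of Delaunay triangles (all of whose edges have length at most $r$) together with ``holes'': bounded regions of the plane enclosed by the component's edges that contain removed long edges. The largest disk in $P$ is thus governed by the nearest-point distance to the boundary $\partial P$, i.e.\ to its edges. For a fixed support, I would compute this in $O(n)$ or $O(n\log n)$ time via the medial axis / segment Voronoi diagram of the polygon boundary, which has linear complexity for a simple polygon and can be built in linear time (Chin--Snoeyink--Wang). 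A weakly simple polygon can be handled after an infinitesimal perturbation. The maximum over medial-axis vertices then gives the inradius.

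Summing over all components and all $O(n)$ supports gives $O(n^2)$ time. If the linear-time medial axis is considered too heavy, a fallback is to rely on the observation that each support update only affects one component, and to accept $O(n\log n)$ per support. That would give $O(n^2\log n)$, so matching the exact $O(n^2)$ claim does hinge on using the linear-time medial axis construction.
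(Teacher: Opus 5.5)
Your proposal is correct and follows the paper's proof: the lower bound comes directly from the generalized circle obstruction, and the algorithm enumerates the fewer than $3n$ distinct supports via the sorted Delaunay edges, computing inradii in linear time per support with the Chin--Snoeyink--Wang medial axis, for $O(n^2)$ total. The differences are only in detail: you make explicit that each support need only be tested at the left endpoint $\ell_j$ of its interval, which the paper leaves implicit, and you handle weakly simple polygons by perturbation, whereas the paper splits them into inclusion-maximal simple subpolygons.
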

\begin{proof}
    We simply enumerate all distinct $r$-supports of $S$ and check if the polygons are $\Lambda{r}$-thin.
    
    Observe that the radius of the largest disk contained in a \emph{weakly} simple polygon (its inradius) is equal to the maximum across all inclusion-maximal simple subpolygons of it.
    The inradius of a simple polygon $P$ can be computed in linear time using a result of Chin, Snoeyink, and Wang~\cite{DBLP:journals/dcg/ChinSW99}, as a disk of radius matching exactly the inradius of $P$ can be placed at one of the linearly many vertices of the medial axis of $P$. Hence, it is sufficient to identify the inclusion-maximal simple subpolygons of the distinct $r$-supports of $S$ and verify that these are $\Lambda{r}$-thin.

    The Delaunay triangulation of $S$ can be computed in ${O}(n\log n)$ time using classical methods~\cite{DBLP:journals/algorithmica/Fortune87}.
    Sorting its edges by increasing length in $O(n\log n)$ time allows us to efficiently enumerate all distinct subsets of edges with bounded length, that is, the sets $\mathrm{Delaunay}_{\leq r}$.
    Recall that their number is bounded by $3n$.
    It remains to compute the maximal simple subpolygons of a fixed $r$-support.
    We can do this in linear time by walking along the outer face from an extremal point in, e.g., counterclockwise orientation until we find a cycle.

    The total runtime is then $O(n\log n)$ for preprocessing, $O(n)$ for the enumeration, and~$O(n)$ for the inradius computation of each simple polygon, i.e., $O(n^2)$. Finally, if $S$ is not $\Lambda$-thin, then there is some $r$ for which a weakly simple polygon in the $r$-support is not $\Lambda r$-thin, and thus contains a disk of radius $\Lambda r$. \Cref{cor:circle-test} then directly implies that $\opt \geq 2\Lambda$.
\end{proof}



\subsection{\boldmath Converting a $\Lambda$-thin polygon into a tree}\label{sec:treeifying}

From here on, we consider only $\Lambda$-thin point sets, i.e., point sets where the previous lower bound does not apply. We attempt to force a tree-structure onto the points, that makes (almost) no detours in the $r$-support of $S$. To do so, we first identify the tree-structure of the $r$-support.

\begin{definition}[$r$-tree partition]\label{def:tree-partition}
Let $P$ be a weakly simple polygon. An \emph{$r$-tree partition} of $P$ is a partition of $P$ into weakly simple polygons $N_i\subset P$, called pieces, such that
\begin{itemize}[nolistsep]
    \item the pieces are pairwise interior-disjoint,
    \item the intersection graph of the pieces is a rooted tree $\T$,
    \item the geodesic diameter of each piece is at most $r$, and
    \item together the pieces cover $P$, i.e., $\bigcup_iN_i=P$.
\end{itemize}
Let all edge weights of $\T$ be $1$. A $r$-tree-partition is said to be \emph{$c$-distance-preserving} if
for any two points $p\in N_i$ and $q\in N_j$ with $N_i$ an ancestor of $N_j$ it holds that
${cr\cdot(\mathrm{d}_{\T}(N_i,N_j)-1) \leq \mathrm{d}_P(p,q)}$.
\end{definition}

Indeed, the $r$-support of a $\Lambda$-thin set of points admits such a distance-preserving partition. To show this, we first prove a structural lemma on geodesic disks in a $\Lambda$-thin polygon.

\begin{lemma}\label{lem:floodfill}
    Let $P$ be a $\Lambda$-thin weakly simple polygon.
    Let $p$ be a point in $P$ and $R\ge 0$. Let $\gamma$ be a connected component of $\partial B_P(p,R)\cap \int(P)$ and $a,b$ be two points on $\gamma$, then
    $d_P(a,b)\leq 6\Lambda$. Further, 
    for any point on $\pi_P(a,b)$ the geodesic distance to the closest point on $\gamma$ is at most~$2 \Lambda$.
\end{lemma}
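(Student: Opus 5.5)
We argue via the geometry of the geodesic wavefront. A component $\gamma$ of $\partial B_P(p,R)\cap\int(P)$ is a curve inside $P$ whose endpoints lie on $\partial P$. Every point $y\in\gamma$ has $d_P(p,y)=R$. The region of $P$ cut off by $\gamma$ (the side not containing $p$) is reached from $p$ only through $\gamma$. First, take the shortest paths $\pi_P(p,a)$ and $\pi_P(p,b)$. Together with the subarc $\gamma_{ab}$ of $\gamma$ between $a$ and $b$, they bound a region $\overline{D}$ of $P$ (as in \Cref{fig:wavefront}). Since $P$ is weakly simple with path-connected complement, $\overline{D}$ contains no hole, so it lies inside $P$.

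The key step bounds the length or width of $\gamma$ via thinness. Every point $y\in\gamma_{ab}$ is at geodesic distance exactly $R$ from $p$. Locally, $\gamma$ consists of circular arcs centred at the last vertex of $\pi_P(p,y)$, and these arcs have radius at most $R$. I would first show that every point $y\in\gamma$ has a point of $\partial P$ within Euclidean distance $2\Lambda$ of it, measured along the segment in the direction "orthogonal" to the wavefront. The argument: if the geodesic disk around $y$ of radius $2\Lambda$ contained no boundary point, then $P$ would contain a disk of radius greater than $\Lambda$, contradicting $\Lambda$-thinness. More usefully, I would take the segment from $y$ backwards along $\pi_P(p,y)$ together with the forward extension. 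A disk of radius $\Lambda$ tangent to $\gamma$ on the far side cannot fit. Hence within distance $2\Lambda$ (the diameter of the obstructing disk) one can walk from $y$ to $\partial P$, or across to another part of $\gamma$.

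To bound $d_P(a,b)\le 6\Lambda$ I would argue by contradiction. Suppose the geodesic from $a$ to $b$ is long. Consider the midpoint region of $\gamma_{ab}$. Points of $\gamma_{ab}$ are all at distance $R$ from $p$, so the wavefront advancing inside $\overline{D}$ sweeps a band. If $\gamma_{ab}$ had geodesic extent more than $6\Lambda$, pick a point $m$ on it far (more than $3\Lambda$) from both $a$ and $b$. Then the disk centred slightly inside $B_P(p,R)$ near $m$, of radius just over $\Lambda$, avoids $\pi_P(p,a)$ and $\pi_P(p,b)$ and also avoids $\partial P$. Such a disk must avoid $\partial P$ because $\partial P$ cannot enter the interior of $\overline{D}$, as $\overline{D}$ has no holes. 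This contradicts thinness. The constants are $\Lambda$ for the disk radius, $2\Lambda$ to the boundary on each side, and $3\Lambda$ per half, giving $6\Lambda$. Once $d_P(a,b)\le 6\Lambda$ is established, the second claim follows by the same disk argument applied to a point $z$ on $\pi_P(a,b)$. The path $\pi_P(a,b)$ lies in $\overline{D}$ (it is homotopic to $\gamma_{ab}$), and if $z$ were farther than $2\Lambda$ from $\gamma$, a disk of radius $\Lambda$ would fit between $z$'s surroundings and $\gamma$.

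The main obstacle is making the "fit a disk of radius $\Lambda$" step rigorous. We must show that a point far from $\partial P\cup\gamma$ (or far from $a,b$ along $\gamma$) actually has a Euclidean disk of radius $\Lambda$ inside $P$. Geodesic distance in $P$ can be much larger than Euclidean distance, so I would use the fact that $\gamma$ is a union of circular arcs of radius $\le R$ that are convex toward the unexplored side. I would then carefully locate the disk on the side where convexity helps. Degenerate weakly simple pieces (zero-width corridors) also need care.
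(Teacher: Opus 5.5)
There is a genuine gap, and it is the step you flag as the main obstacle. Your contradiction argument uses a disk of radius just over $\Lambda$ centred \emph{slightly} inside $B_P(p,R)$ near $m$. Such a disk protrudes almost $\Lambda$ across $\gamma$ into the unexplored part of $P$. There $\partial P$ is unconstrained: it can come arbitrarily close to $\gamma$ from the far side without touching it. The fact that $\overline{D}$ contains no boundary says nothing about that part of the disk. For the same reason, your preliminary observation (every $y\in\gamma$ has a boundary point within $2\Lambda$) cannot bound $\gamma$, since that boundary point may lie beyond $\gamma$.

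The test disk must lie on the explored side. The paper centres it at the point $p_x$ at distance $\Lambda$ back along $\pi_P(x,p)$, with radius $\Lambda$. It then argues that inside the region bounded by $\pi_P(p,a)$, $\pi_P(p,b)$ and $\gamma$, such a disk can only be blocked by these two walls. But once you move the disk back like this, your constants fail. A wall point $s\in\pi_P(p,a)$ in the disk only gives $d_P(m,s)\le 2\Lambda$ and $d_P(s,a)\le 2\Lambda$, hence $d_P(m,a)\le 4\Lambda$. So choosing $m$ more than $3\Lambda$ from both $a$ and $b$ does not force the disk to miss both walls, and a midpoint argument yields at best $8\Lambda$, not $6\Lambda$.

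The missing idea is a sliding argument. As $x$ moves along $\gamma$ from $a$ to $b$, thinness forces every disk $D_x$ to meet at least one wall. $D_a$ meets $\pi_P(p,a)$ and $D_b$ meets $\pi_P(p,b)$, so by continuity some $D_m$ meets both walls, at points $s$ and $t$. Then
$d_P(a,b)\le d_P(a,s)+|st|+d_P(t,b)\le 2\Lambda+2\Lambda+2\Lambda$. The route goes through $s$ and $t$ directly, not through $m$; that is what gives $6\Lambda$.

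Your second claim also needs an actual argument rather than ``a disk would fit''. The paper works from $\gamma$ towards $\pi_P(a,b)$, not from $z$ back to $\gamma$. Because $D_x$ reaches a wall and $\pi_P(a,b)$ separates $\gamma$ from the walls inside the funnel, $\pi_P(x,p)$ meets $\pi_P(a,b)$ at a point $y_x$ within distance $2\Lambda$ of $x$. This point moves continuously from $a$ to $b$ as $x$ does, so every point of $\pi_P(a,b)$ arises as some $y_x$. This surjectivity gives each point of $\pi_P(a,b)$ a point of $\gamma$ within $2\Lambda$.
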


\begin{proof}
See \Cref{fig:wavefront} for an illustration of the proof. For a point $x$ on $\gamma$ between $a$ and $b$, consider the (Euclidean) disk $D_x$ centered at point $p_x$ at distance $\Lambda$ along $\pi_P(x,p)$ with radius $\Lambda$. Let $\overline{D_x}$ be the subset of $D_x$ bounded by the two shortest paths $\pi_P(p,a)$ and $\pi_P(p,b)$, i.e., all points in $D_x$ whose shortest path to $p_x$ does not properly intersect either of these paths, see \Cref{fig:wavefront}.

We first show that for any 
point $q \in (\overline{D_x} \cap \partial P)$ either $q$ is on $\pi_P(p,a) \cup \pi_P(p,b)$ or the last vertex of the shortest path $\pi_P(q,p_x)$ is on $\pi_P(p,a) \cup \pi_P(p,b)$.
Suppose for contradiction that~$q$ is the point closest to $p_x$ for which this does not hold. Then, because $q$ is the closest point to~$p_x$ for which this does not hold, there is no other point of $\partial P$ on $\pi_P(q,p_x)$.
So,~${d_P(q,p_x) = |qp_x| \leq \Lambda}$. By choice of $p_x$, $d_P(p_x,p) = R - \Lambda$ and thus $d_P(q,p) \leq d_P(p_x,p) + |qp_x| \leq R$, i.e., $q$ is in~$B_P(p,R)$.
This contradicts $\gamma$ being a connected component of $\partial B\cap \int(P)$.

Consider the two shortest paths $\pi_P(p,a)$ and $\pi_P(p,b)$ that form a funnel in $P$. Because $P$ is $\Lambda$-thin, there must be for any disk $D_x$ a point of $\partial P$ contained in~$D$. 
When sliding the point~$x$ along $\gamma$ from $a$ to $b$, the point $p_x$ moves continuously as well, because $P$ is a simple polygon. 
During this sliding, there must be a time when $D_x$ intersects both $\pi_P(p,a)$ and~$\pi_P(p,b)$: Indeed, if we continue sliding $x$ from $a$ to $b$ until $D_x$ not longer intersects $\pi_P(p,a)$, any intersection of~$\overline{D_x}$ with a point $q$ on $\partial P \setminus (\pi_P(p,a) \cup \pi_P(p,b)$ would imply the last vertex of the shortest path $\pi_P(q,p_x)$ is on $\pi_P(p,a) \cup \pi_P(p,b)$ and this vertex is also contained in $\overline{D_x}$. 

Let $D_m$ be such a disk that intersects both $\pi_P(p,a)$ and $\pi_P(p,b)$.
Let $s$ and $t$ be the first intersection points of $D$ and $\pi_P(p,a)$ and $\pi_P(p,b)$, respectively. By definition of $D_m$, we have that $|sm| \leq 2 \Lambda$. Furthermore, the triangle inequality implies that $d_P(p,s) \geq d_P(p,m) - |sm| = R - 2\Lambda$. So, $d_P(a,s) = d_P(p,a) - d_P(p,s) \leq R-R + 2 \Lambda$. Symmetrically, we find $d_P(b,t) \leq 2\Lambda$. Because $s$ and $t$ are the first intersection points of $D$ and the funnel, the segment $st$ is contained in $P$. It follows that $d_P(a,b) \leq d_P(a,s) + |st| + d_P(b,t) \leq 6 \Lambda$.

What remain is to prove the second part of the lemma statement. Note that the shortest path $\pi_P(a,b)$ is contained in the area bounded by $\pi_P(a,p)$, $\pi_P(p,b)$, and $\gamma$. For any point~${x \in \gamma}$ between $a$ and $b$, consider the point $y_x$ that is the first intersection point of $\pi_P(x,p)$ with $\pi_P(a,b)$. As we showed that $\overline{D_x}$ intersects at least one of $\pi_P(a,p)$ and $\pi(b,p)$, we have that $y_x \in \overline{D_x}$. As the segment $\overline{xy_x}$ is contained in $P$ by definition of $y_x$, we obtain $d_P(x,y_x) \leq 2 \Lambda$. Again,~$P$ being a weakly simple polygon implies that moving $x$ continuously along $\gamma$ from $a$ to $b$ results in~$y_x$ moving continuously along $\pi_P(a,b)$ from $a$ to $b$. We conclude that for any~${y \in \pi_P(a,b)}$ there is a point $y_x = y$, and thus there is a point on $\gamma$ within distance $2 \Lambda$ of~$y$.
\end{proof}

The high-level idea for constructing a tree partition for a polygon $P$ in the $r$-support is to compute a sequence of geodesic disks with increasing radius centered at a vertex $p$ of~$P$. We then rectify their boundaries using shortest paths and partition the polygon by cutting along these paths. The previous lemma ensures that the diameter of each piece is small.

\begin{lemma}\label{thm:tree-partition}
    Let $S$ be a $\Lambda$-thin point set.
    For any given $r$, we can compute a $\frac{1}{12}$-distance-preserving $(12\Lambda r)$-tree partition of a weakly simple polygon $P$ in $\supp{r}(S)$ in $O(n\log n)$ time.
\end{lemma}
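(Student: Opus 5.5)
Following the high-level idea just stated, we build the partition from concentric geodesic disks around a vertex. Fix a vertex $p$ of $P$. Since $S$ is $\Lambda$-thin, $P$ is $\Lambda r$-thin, so \Cref{lem:floodfill} applies with $\Lambda$ replaced by $\Lambda r$. Set $\delta = \Lambda r$ and consider the geodesic disks $B_P(p, k\delta)$ for $k = 0,1,2,\ldots$ up to the geodesic radius of $P$ from $p$. For each $k$ and each connected component $\gamma$ of $\partial B_P(p,k\delta)\cap\int(P)$, let $a_\gamma,b_\gamma$ be the endpoints of $\gamma$ on $\partial P$. We replace $\gamma$ by the shortest path $\pi_P(a_\gamma,b_\gamma)$, which is a chord-like cut of the weakly simple polygon. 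By \Cref{lem:floodfill} this cut has length at most $6\delta$ and stays within geodesic distance $2\delta$ of $\gamma$, so every point on it lies at geodesic distance in $[k\delta-2\delta, k\delta+2\delta]$ from $p$. Cutting $P$ along all these rectified paths yields the pieces $N_i$. Cuts from different components of the same level, and from different levels, may cross or coincide, so we would resolve this by processing levels in increasing order and clipping each new cut to the region not yet separated from $p$. The fact that these are geodesics emanating from $\partial P$ in a simply connected domain keeps the result a family of weakly simple polygons.

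\textbf{Tree structure.} Each cut $\pi_P(a_\gamma,b_\gamma)$ has both endpoints on $\partial P$, and $P$ has a path-connected complement. Hence each cut separates $P$ into exactly two parts, and the pieces adjacent across it are joined by one edge of the intersection graph. A family of pairwise non-crossing separating chords in a simply connected domain always induces a tree. We root it at the piece containing $p$. A child piece lies across a cut of level $k$, and its own children lie across cuts of level $k+1$ that are reachable from the previous cut.

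\textbf{Diameter bound, the main obstacle.} We must bound the geodesic diameter of a piece $N$ between a parent cut at level $k$ and child cuts at level $k+1$ by $12\Lambda r$. Take any $x\in N$. We follow $\pi_P(x,p)$ backwards until it hits the parent cut $\pi_P(a_\gamma,b_\gamma)$, where $\gamma$ is the level-$k$ component bounding $N$. That part of the path has length at most about $3\delta$, since $x$ is within distance $(k+1)\delta+2\delta$ of $p$ while the cut is within $2\delta$ of $\gamma$. Two points of $N$ can then be joined through the parent cut, whose length is at most $6\delta$. This naive estimate gives $3\delta+6\delta+3\delta = 12\delta = 12\Lambda r$. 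The delicate part is making it rigorous: we must ensure that $\pi_P(x,p)$ really exits $N$ through its parent cut and not through a child cut, and handle the portion of $\pi_P(x,p)$ beyond the rectification. For this we would use the second part of \Cref{lem:floodfill}, the $2\delta$-closeness, together with a careful choice of the level spacing. If needed, we would space the levels at $2\delta$ and re-balance the constants.

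\textbf{Distance preservation and running time.} For $p'\in N_i$ ancestor of $q\in N_j$, any path in $P$ from $q$ to $p'$ must cross every intermediate cut. Cuts $d$ levels apart are at geodesic distance at least $d\delta - 4\delta$ from each other, measured from $p$. This gives $d_P(p',q)\geq \delta(d_\T(N_i,N_j)-1) - O(\delta)$, which is at least $\frac{1}{12}\cdot 12\Lambda r\,(d_\T-1)$ after adjusting spacing constants. The $O(n\log n)$ bound follows by computing the shortest path tree from $p$ in a triangulation of $P$ (Guibas et al.), reading off the level components, and computing each rectifying geodesic by funnel queries.
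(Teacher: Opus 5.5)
Your high-level plan is the paper's: geodesic disks around a vertex $p$, replacing each arc $\gamma$ of $\partial B_P(p,R)\cap\int(P)$ by the portal $\pi_P(a_\gamma,b_\gamma)$, and cutting along the portals. But the step that makes it work is missing, and the repairs you suggest do not hold up.

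With spacing $\delta=\Lambda r$, you only have the symmetric window $[k\delta-2\delta,k\delta+2\delta]$ for a portal's distance from $p$. Under that window, portals of consecutive levels may cross, so you do not yet have a tree partition.

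Clipping later cuts to the region not yet separated from $p$ is not a valid fix. A clipped cut is no longer a geodesic between two points of $\partial P$. So neither the $6\Lambda r$ length bound nor the $2\Lambda r$ proximity of \Cref{lem:floodfill} applies to it. It also destroys the parent-level/child-level structure that your diameter and distance arguments both rely on.

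Your distance estimate $\delta(d_\T-1)-O(\delta)$ does not give $\Lambda r\,(d_\T-1)\le d_P$; for $d_\T=2$ it is negative. The constants cannot simply be ``adjusted'': the spacing controls both the piece diameter and the per-level separation, and your fallback spacing $2\delta$ gives separation at most $0$. Also, from the bounds you state, the part of $\pi_P(x,p)$ before it meets the parent cut can have length up to $5\delta$, not $3\delta$.

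The missing ingredient is that the portal of an arc $\gamma\subseteq\partial B_P(p,R)$ lies \emph{inside} $B_P(p,R)$. This holds because geodesic disks in a simple polygon are geodesically convex (Pollack--Sharir--Rote). By the second part of \Cref{lem:floodfill}, the portal also avoids $B_P(p,R-2\Lambda r)$.

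The paper takes the radii to be multiples of $3\Lambda r$. Then portals of different levels are disjoint. Any parent and child portal are at geodesic distance at least $3\Lambda r-2\Lambda r=\Lambda r$, which is exactly the $\frac{1}{12}\cdot 12\Lambda r$ needed per level. Portals of the same level are disjoint because the regions bounded by $\gamma$, $\pi_P(p,a)$ and $\pi_P(p,b)$ are interior-disjoint.

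For the diameter, you should route through the arc $\gamma$ itself rather than through the parent cut. The arc $\gamma$ lies in the piece, and every point of the piece is within $3\Lambda r$ of $\gamma$. Any two points of $\gamma$ are within $6\Lambda r$ of each other by \Cref{lem:floodfill}. This gives a diameter of $(3+6+3)\Lambda r$.

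With only your symmetric $\pm 2\Lambda r$ window, disjointness of consecutive levels would force a spacing above $4\Lambda r$. The diameter bound would then exceed $12\Lambda r$. So the stated constants genuinely depend on the convexity fact.
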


\begin{proof}
    Recall that $P$ being in the $r$-support $\supp{r}(S)$ means that edges of $P$ have length at most~$r$ and $P$ is $\Lambda{r}$-thin.
    We define the family of geodesic disks centered at a vertex $p$ of $P$ as $B_i\coloneqq B_P(p,3\Lambda{r}i)$, see \Cref{fig:shortcut}(a).
    As the diameter of $P$ is bounded by $nr$, this family contains strictly less than $n$ distinct disks.
    The boundary $\partial B_i$ of any such disk consists of, alternatingly, pieces of the boundary of $P$ and curves consisting of circular arcs in the interior of $P$.

    Consider a connected component $\gamma$ of $\partial B_i\cap\int(P)$ with endpoints $a,b\in\partial P$.
    We define its \emph{portal} to be the geodesic path $\pi_P(a,b)$.
    By~\Cref{lem:floodfill}, the portal has length at most~$6\Lambda{r}$ and does not intersect $B_j\subset B_P(p,3\Lambda r i - 2\Lambda r)$, for any $j<i$. Because $B_i$ is geodesically convex with respect to $P$~\cite[Corollary 1]{DBLP:journals/dcg/PollackSR89}, the portal is additionally contained in~$B_i$. Hence, portals from different disks $B_i$ and $B_j$ are disjoint.
    For different portals from the same disk $B_i$, observe that the area bounded by $\gamma$, $\pi_P(p,a)$, and $\pi_P(p,b)$ is interior disjoint from such areas of other connected components $\gamma'$ of $\partial B_i\cap\int(P)$, and thus their portals are disjoint as well.

    \begin{figure}%
            \centering%
            \includegraphics[page=4]{shortcut}%
        \caption{(a) A family of geodesic disks $B_i$ centered at $p$. (b) The shortcuts that partition $P$.}
        \label{fig:shortcut}
    \end{figure}
    
    Let $\Gamma_i$ refer to the set of portals induced by $B_i$.
    We call the portals in~$\Gamma_{i+1}$ the children of those in~$\Gamma_{i}$ and remark that the geodesic distance between two points on a parent and child portal is always greater or equal to~$\Lambda r$.
    The union of $\Gamma_i$ across all $i$ will define our partition, as illustrated in~\Cref{fig:shortcut}(b).

    By definition, the resulting subpolygons, called pieces, are interior-disjoint and have an acyclic intersection graph: Every portal consists of a geodesic path that resides on the shared boundary of exactly two subpolygons and has its endpoints on the boundary of $P$.
    To bound the diameter, consider some piece bounded by one portal in $\Gamma_i$ (its parent portal) generated by a curve $\gamma$, and multiple portals in $\Gamma_{i+1}$. By the geodesic convexity and the fact that any portal of $B_{i+1}$ does not intersect $B_i$, all of $\gamma$ is contained in the piece. Further, any point in the piece has geodesic distance at most $3\Lambda r$ to some point on $\gamma$. Together with \Cref{lem:floodfill}, this implies that the diameter of this piece is at most $(3+6+3)\Lambda r = 12\Lambda r$. The only piece for which this argument does not hold is the one containing $p$, for which  the diameter is trivially $6\Lambda r\leq 12\Lambda r$.
    

    Next, we prove the distance-preserving property.
    We root $\T$ at the piece that contains~$p$. Let $s\in N_i$ and $t\in N_j$ be two points in $P$ such that $N_i$ is an ancestor of $N_j$.
    The shortest path $\pi_P(s,t)$ traverses $d_\T(N_i,N_j)-1$ distinct pieces of the partition, each from one portal to another.
    Each such portal-portal path connects a parent and a child, so the geodesic distance~$d_P(p,q)$ is at least~$(d_\T(N_i,N_j)-1)\Lambda r$.
    Hence, we obtain a $\frac{1}{12}$-distance-preserving $12\Lambda r$-tree-partition.

    It remains to argue the running time of our algorithm.
    We first compute a shortest path data structure for~$P$ in $O(n\log n)$ time~\cite{2PSP_simple_polygon}.
    With it, we can determine for any two points~${u,v\in P}$ the geodesic distance~$d_P(u,v)$ in~$O(\log n)$ time, and the shortest path $\pi_P(u,v)$ in additional output-sensitive linear time.
    That is, linear in the number of segments defining it. Additionally, we construct the shortest path map for $p$, which partitions $P$ into cells for which the shortest path to $p$ visits the same vertices, in $O(n \log n)$ time~\cite{DBLP:journals/siamcomp/HershbergerS99}. This map has linear complexity~\cite{DBLP:journals/siamcomp/HershbergerS99}.

    We now can compute all endpoints of the connected components of $\partial B_i\cap\int(P)$ in~$O(n\log n)$ time as follows. Recall that the maximum edge length of $P$ is $r$. Thus, no edge of $P$ can contain more than two arc endpoint as a direct result of the growth in radius of the disks, and there is a linear number of endpoints in total. We can thus find all endpoints by walking around the boundary of $P$, checking the distance to $p$ whenever we encounter the boundary of a shortest path map cell, and computing the possible endpoint on this segment if necessary.

    We then compute the shortest paths connecting corresponding endpoints using the shortest path data structure. While the complexity of a shortest path may be $n$, the paths we compute are pairwise disjoint. Hence, we can compute \emph{all} these shortest paths in~$O(n \log n)$ total time.
    
    Finally, assembling each weakly simple subpolygon can be done in linear time, as every portal exists on the boundary of exactly two pieces, and the respective incidence relation can be derived by considering their order along the boundary of $P$.
    Note that it is possible for portals to coincide with $\partial P$, which is why this process (possibly) yields weakly simple polygons.
\end{proof}

Next, we select from such a tree partition a subset of pieces, that act as checkpoints, called junctions, such that (i) any other point in the support is close to a junction and (ii) no pair of junctions is too close to one another. The resulting structure is formalized by junction trees.

\begin{definition}[Junction tree] See \Cref{fig:def_junction_spine_nerve}(a).
    Let $P$ be a weakly simple polygon. A $(\alpha,\beta,\gamma)$-junction tree of~$P$ is two finite sets $\J$ and $\F$ of polygons such that 
    \begin{itemize}[nolistsep]
        \item $\left(\bigcup_{F\in\F}F\right)\cup \left(\bigcup_{J\in\J} J\right) = P$,
        \item all polygons in $\J\cup \F$ are pairwise interior disjoint,
        \item the geodesic diameter of each polygon in $\J$ is at most $\alpha$,
        \item the geodesic distance between any two distinct polygons in $\J$ is at least $\beta$,
        \item the polygons in $\F$ are pairwise disjoint, and
        \item the geodesic diameter of each polygon in $\F$ is at most $\gamma$.
    \end{itemize}
    We call $\J$ the set of junctions, and $\F$ the set of faces. For a face $F\in \mathcal{F}$ we denote by $\junc(F)$ the set of junctions intersecting $F$, and similarly, for a junction $J\in\J$ let $\face(J)$ denote the set of faces intersecting $J$.
\end{definition}
Observe that a junction can only intersect faces, and a face can only intersect junctions.
We simply write ``a junction tree'' instead of ``an $(\alpha,\beta,\gamma)$-junction tree'' whenever the definition does not depend on the values $\alpha,\beta,\gamma$, or these values are clear from context.

    \begin{figure}
        \centering%
        \includegraphics[page=3]{figures/spine_nerve}
        \caption{
            (a) Junction tree.
            (b) Face $F$ adjacent to three junctions (gray) with three $F$-spines (orange) and four $F$-nerves (blue), excluding point components.
            (c) Spine-cutting graph~$G_{F}$.}
        \label{fig:def_junction_spine_nerve}
    \end{figure}

\begin{lemma}\label{thm:thm-junction-tree}
    Let $S$ be a $\Lambda$-thin set of $n$ points with $\Lambda\geq 3$. Let $P$ be a polygon of the $r$-support $\supp{r}(S)$.
    In $O(n\log n)$ time, we can compute a $(12\Lambda r,\frac{2}{3} \Lambda^2r,(24\Lambda^2\log n)r)$-junction\footnote{Throughout the paper, we use $\log n$ to denote $\log_2 n$.} tree of $P$.
\end{lemma}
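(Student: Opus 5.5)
We start from the $\frac{1}{12}$-distance-preserving $(12\Lambda r)$-tree partition of $P$ given by \Cref{thm:tree-partition}, computed in $O(n\log n)$ time. Its intersection graph is a rooted tree $\T$ whose nodes are pieces of geodesic diameter at most $12\Lambda r$. The junctions $\J$ will be a subset of these pieces, and the faces $\F$ will be the connected components of the union of the remaining pieces. This immediately gives the covering property, interior-disjointness, and the diameter bound $\alpha=12\Lambda r$ for junctions. It remains to choose the junction pieces so that two conditions hold: distinct junctions are at geodesic distance at least $\frac23\Lambda^2 r$, and every face has geodesic diameter at most $24\Lambda^2 r\log n$. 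Faces must also be pairwise disjoint, not merely interior-disjoint, so each face has to be a maximal connected subtree of non-junction pieces; two such subtrees can only meet through a junction.

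\textbf{Choosing junctions.} Distances in $\T$ control geodesic distances in $P$. If pieces $N_i,N_j$ are in ancestor relation, then $d_P\ge \Lambda r\,(d_\T(N_i,N_j)-1)$. For a general pair, the geodesic path crosses the chain of portals up to their lowest common ancestor and down again, so a similar bound in terms of the depth difference holds along each branch. I would therefore pick junctions by depth in $\T$, in a style similar to tree decomposition or heavy-path arguments. Take the root as a junction. Then, going down, declare a piece a junction whenever its $\T$-distance to the nearest junction ancestor reaches a threshold $k\approx \Lambda$ (say $k=\lceil\frac23\Lambda\rceil+1$), but only if the subtree below is large enough to be worth cutting. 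The separation bound then follows from the distance-preserving property: $\Lambda r(k-1)\ge\frac23\Lambda^2 r$. For junctions in different branches, the path passes through their common ancestor region, and at least one branch contributes $k-1$ portal crossings.

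\textbf{Face diameter — the main obstacle.} A face is a subtree of $\T$ containing no junction. Its depth is below $k$, but a path between two of its points can wander through many branches. So face diameter is not bounded by depth alone, since geodesic diameter is bounded by $\sum$ of piece diameters along the path. To control this, I would use a centroid/heavy-light style choice. Within each region between junction levels, any root-to-leaf path uses $O(\log n)$ "light" switches. Select junctions so that every face is a union of $O(\log n)$ vertical chains, each of $\T$-length $O(\Lambda)$. Each chain then has geodesic length at most $O(\Lambda)\cdot 12\Lambda r$, which in total gives face diameter $O(\Lambda^2 r\log n)$. The stated constant $24$ would come from two chains of depth at most about $\Lambda$ per logarithmic level, each using $O(1)$-diameter... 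I expect tuning this so that separation ($\ge\frac23\Lambda^2r$) and face diameter ($\le 24\Lambda^2 r\log n$) hold simultaneously to be the delicate step. In particular, pieces near the boundary of a face whose branch is too short to host a separated junction must be absorbed into faces, which requires the $\log n$ factor. A depth-first sweep implementing the rule above runs in linear time on top of the $O(n\log n)$ partition computation.
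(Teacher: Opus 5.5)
\textbf{Separation fails.} Your separation argument for junctions in different branches is false, and this is where the real difficulty lies. Under your depth-threshold rule, take a junction $J$ and a descendant piece $N$ at $\T$-distance $k-1$ below it that has two children $N_1,N_2$, each heading a long branch. Both children reach the threshold and become junctions, yet $d_{\T}(N_1,N_2)=2$. The distance-preserving property only constrains ancestor--descendant pairs, so it gives nothing here. In fact both $N_1$ and $N_2$ intersect the parent piece $N$, whose geodesic diameter is at most $12\Lambda r$. Hence $d_P(N_1,N_2)\le 12\Lambda r<\frac23\Lambda^2 r$ as soon as $\Lambda>18$. Your claim that ``at least one branch contributes $k-1$ portal crossings'' therefore does not hold. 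The clause about subtrees being ``large enough'' does not repair it, since both branches can be arbitrarily large.

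\textbf{The face-diameter obstacle is misplaced.} Under a pure depth rule every face is a subtree of $\T$-depth below $k$. Any two of its points are then joined through fewer than $2k$ pieces, so the face diameter is $O(\Lambda^2 r)$ with no $\log n$ at all. The $\log n$ is the price of \emph{fixing} separation, and your proposal never gives a rule that achieves both properties. You acknowledge this yourself.

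\textbf{The missing idea.} The paper restarts the count at every branching, but only on light branches, via a heavy-light decomposition of $\T$. On each heavy path, taken top to bottom, it marks every $\Lambda$-th vertex and leaves the first $\Lambda-1$ vertices unmarked. Junctions are the marked pieces, and faces are the unions of connected components of unmarked pieces.

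\emph{Separation.} Two marked vertices on the same heavy path are at $\T$-distance at least $\Lambda$. Otherwise one of them, say $v$, is not on the heavy path of the lowest common ancestor. The $\T$-path down to $v$ then ends with a light edge followed by a prefix of $v$'s heavy path whose first $\Lambda-1$ vertices are unmarked. This is an ancestor--descendant chain of length at least $\Lambda$, so the distance-preserving property applies to it (the geodesic must cross that light edge's portal). It gives geodesic distance at least $\Lambda r(\Lambda-1)\ge\frac23\Lambda^2r$ for $\Lambda\ge3$.

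\emph{Face diameter.} Any $\T$-path crosses at most $2\log n$ light edges. Within a face, each heavy-path segment has fewer than $\Lambda$ vertices. A face therefore spans $\T$-paths of roughly $2\Lambda\log n$ pieces, which with piece diameter $12\Lambda r$ gives the $24\Lambda^2 r\log n$ bound.

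The decomposition and marking take linear time on top of the $O(n\log n)$ partition.
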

\begin{proof}
    Consider a $\frac{1}{12}$-distance-preserving $(12\Lambda r)$-tree partition and a heavy-light decomposition of its intersection graph, which is a rooted tree as given by the distance-preserving property (\Cref{def:tree-partition}).
    We process the heavy paths from top to bottom, \emph{marking} every $\Lambda$'th vertex in such a way that the first $\Lambda-1$ vertices of each heavy path are unmarked.
    We next define~$\J$ and $\mathcal{F}$ of our junction tree.
    Let the corresponding polygon of each marked vertex be an element in~$\J$.
    We add an element to $\mathcal{F}$ for each connected component of unmarked vertices.
    Concretely, the face we add is the union of all polygons that correspond to the vertices of the connected component.
    We next show that our construction indeed yields the stated bounds on $\alpha$, $\beta$, and~$\gamma$:
    \begin{description}[nosep]
        \item[$\alpha$:] As we consider a $(12\Lambda r)$-tree partition, each element of $\J$ has diameter at most $12\Lambda r$.
        \item[$\beta$:] First, observe that the length of the shortest path in $\T$ between any two marked vertices is at least $\Lambda$: This is the case by construction for two marked vertices that lie on the same heavy path.
        On the other hand, the shortest path between marked vertices on two different heavy paths has to contain a light edge.
        Descending from the lower vertex of the light edge to the marked vertex traverses the prefix of a heavy path until a marked vertex, which has length at least $\Lambda$ due to how we marked the vertices.
        Finally, as we consider a \emph{$\frac{1}{12}$-distance-preserving} $(12 \Lambda r)$-tree partition, the path of length $\Lambda$ in the tree induces a distance of at least $\Lambda r (\Lambda-1)$ between two points in different junctions.  Hence, $\beta \geq \Lambda(\Lambda-1)r$. Using that $\Lambda \geq 3$, we obtain $\beta \geq \frac{2}{3}\Lambda^2 r$.
        \item[$\gamma$:] In order to bound the diameter of a face in $\mathcal{F}$, we first bound the diameter of its corresponding connected component in the tree $\mathcal{T}$. Due to the properties of the heavy-light decomposition, any shortest path in the tree contains at most $2\log  n$ light edges. As every $\Lambda$'th vertex is marked, a shortest path that contains no marked vertices can contain at most $(\Lambda-1) \cdot (2\log  n)$ heavy edges, and hence the diameter (in $\T$) of the connected component is bounded by $\Lambda \cdot (2\log  n)$. As every piece of the tree partition has diameter at most $12\Lambda r$, the diameter of any face in $\mathcal{F}$ is bounded by $(24 \Lambda^2 \log  n) r$, i.e., $\gamma \leq (24 \Lambda^2 \log  n) r$.
    \end{description}
    The tree partition can be computed in $O(n\log n)$ time by \Cref{thm:tree-partition}. The heavy-light decomposition, the marking of vertices, and the computation of the connected components of unmarked vertices all take linear time in the size of $\T$. Thus the running time is~$O(n\log n)$.
\end{proof}


\subsection{Comb obstructions}\label{sec:combs}

We now define several concepts related to junction trees, that allow us to formally describe the second lower bound given in~\Cref{lem:the-hard-obstruction}. We first define spines and nerves, that are essentially the connect components of the $\Lambda^{-1}r$- and $\Lambda^{-2}r$-support. We then define a relation between the subpolygons in a face bounded by spines in the spine-cutting graph. An instance ``has a comb'', and with it a lower bound of the form $\opt\in\Omega(\Lambda)$, if this spine-cutting graph is \emph{not} connected.

\begin{definition}[Spines and nerves]
    See \Cref{fig:def_junction_spine_nerve}(b).
    Let $(\J,\mathcal{F})$ be a junction tree of a weakly simple polygon $P \in \supp{r}$. 
    An \emph{$F$-spine} is a connected component of $\supp{\Lambda^{-1}r}\cap F$. Denote the set of all spines in $F$ by $\spines{F}$. Similarly, an \emph{$F$-nerve} is a connected component of $\supp{\Lambda^{-2}r}\cap F$. 
\end{definition}

\begin{observation}
    Every $F$-nerve is a subset of exactly one $F$-spine.
\end{observation}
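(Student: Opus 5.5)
The statement to prove is the Observation that every $F$-nerve is contained in exactly one $F$-spine. The plan has two parts: existence, from the nesting of supports across scales, and uniqueness, from the fact that spines are distinct connected components and hence disjoint.

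\textbf{Existence.} First I would show that $\supp{\Lambda^{-2}r}\subseteq\supp{\Lambda^{-1}r}$ as subsets of the plane. Write $r_1=\Lambda^{-2}r\le r_2=\Lambda^{-1}r$ (for $\Lambda\ge 1$). Then $\mathrm{Delaunay}_{\le r_1}\subseteq \mathrm{Delaunay}_{\le r_2}$, so every connected component of the smaller graph lies inside a single component of the larger one. The polygon of a component is the region bounded by its outer edges, that is, the component together with its bounded faces. So I need that the region enclosed by a component of $\mathrm{Delaunay}_{\le r_1}$ lies in the region enclosed by the component of $\mathrm{Delaunay}_{\le r_2}$ that contains it.

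This step is where I expect the care to be needed. The argument I would use is that a point in a bounded face of a plane subgraph $H\subseteq H'$ is separated from infinity by a cycle of $H$. That cycle also lies in $H'$, so the point is not in the outer face of $H'$ either. Hence the polygon of the smaller component is contained in the polygon of the larger one. Degenerate single-point polygons are trivially contained in the polygon of their component.

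With the nesting in hand, intersecting with $F$ gives $\supp{\Lambda^{-2}r}\cap F\subseteq\supp{\Lambda^{-1}r}\cap F$. An $F$-nerve $\nu$ is a connected subset of the right-hand side. A connected subset of a space lies entirely within one connected component of that space, so $\nu$ is contained in some $F$-spine.

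\textbf{Uniqueness.} The $F$-spines are, by definition, distinct connected components of $\supp{\Lambda^{-1}r}\cap F$, so they are pairwise disjoint. A nerve is nonempty, since it contains at least a point of $S$ or a point of $F$. Therefore it cannot lie in two different spines, and the spine containing it is unique.
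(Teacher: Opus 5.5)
Your argument is correct and matches the reasoning the paper intends. The paper states this as an observation without proof, relying on the nesting $\supp{\Lambda^{-2}r}\subseteq\supp{\Lambda^{-1}r}$ (which it also uses elsewhere, e.g.\ in the Connectedness invariant); a nerve is then a connected subset of $\supp{\Lambda^{-1}r}\cap F$ and so lies in one spine, and disjointness of components gives uniqueness. Your cycle-separation argument for the nesting supplies a detail the paper takes for granted.
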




\begin{definition}[Spine-cutting graph]
    See \Cref{fig:def_junction_spine_nerve}(c).
    Let $(\J,\mathcal{F})$ be a junction tree of a weakly simple polygon $P \in \supp{r}$. 
    Fix a face $F\in\mathcal{F}$, and a subset $\spineset\subseteq\spines{F}$. 
    The \emph{spine-cutting graph} $G_{F,\spineset}$ has as vertices (i) the polygons (that may contain holes) in $F\setminus\bigcup_{\spine\in \spineset}\spine$ that intersect at least two junctions and (ii) the unbounded area (with a hole) $\mathbb{R}^2\setminus(\int(F)\cup\junc(F))$\footnote{Note that this area is closed near the boundary of $F$ but open near the junction boundaries.}. There is an edge between two vertices of $G_{F,\spineset}$ if they can be connected by a path in $F$ that intersects at most one $F$-spine in $\spineset$ and does not intersect any $F$-nerve contained in this $F$-spine. 
\end{definition}

If $\spineset=\spines{F}$, we may simply write $G_F$ instead of $G_{F,\spineset}$.

\begin{theorem}[Generalized comb obstruction]\label{lem:the-hard-obstruction}
Let $S$ be a $\Lambda$-thin point set with $\Lambda \geq 3$, and let $(\J,\mathcal{F})$ be a $(12\Lambda r,\frac{2}{3}\Lambda^2r, 24(\Lambda^2\log n)r)$-junction
tree of a weakly simple polygon $P \in \supp{r}$. Fix a face $F\in\mathcal{F}$. If the spine-cutting graph $G_F$ of $F$ is disconnected, then
\[\opt\geq\frac{1}{27}\Lambda.\]
\end{theorem}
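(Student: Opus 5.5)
Fix an arbitrary spanning tree $T$ of $S$ and suppose, for contradiction, that $\dil(T)<\Lambda/27$. The plan is to use this to build a path in $G_F$ joining any two of its vertices, which contradicts the assumed disconnectedness.

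\textbf{Step 1: pairs of nearby points in different spines or nerves.} The spines are components of $\supp{\Lambda^{-1}r}\cap F$. By \Cref{obs:facts-on-the-r-support}, points of $S$ lying in different polygons of $\supp{\Lambda^{-1}r}$ are at distance more than $\Lambda^{-1}r/\sqrt2$, and the same holds at scale $\Lambda^{-2}r$ for nerves. Now consider a point pair $p,q$ at distance at most $r$ on the boundary of $F$, i.e.\ consecutive vertices along $\partial P$ or along a portal. Under the assumed bound, $d_T(p,q)<\Lambda r/27$.

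\textbf{Step 2: use the tree path as a crossing path.} The tree path $\pi_T(p,q)$ stays inside $B(p,\Lambda r/27)$. Its edges have length less than $\Lambda r/27$, but they may be longer than $\Lambda^{-1}r$. The key claim is the following. Any tree edge of length $\ell\ge\Lambda^{-2}r$ that leaves a nerve costs at least $\ell$ in path length. A path of total length $<\Lambda r/27$ therefore cannot cross, or travel along, a spine while avoiding its nerves for long. More precisely, I would show it meets at most one spine in a way that separates vertices of $G_F$, and avoids that spine's nerves. This is the edge condition of $G_F$.

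Concretely, assume $G_F$ has two components. Then some component region $A$ (a vertex of $G_F$, possibly the outer region containing the junctions) is separated from the rest by a set of spines. The outer region matters because it touches $\partial F$ and the junctions. Since $F$ touches at least two junctions, and these are $\frac23\Lambda^2 r$ apart by the junction-tree bound from \Cref{thm:thm-junction-tree}, both components must contain points of $S$.

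\textbf{Step 3: a homotopy argument in the style of \Cref{lem:circle-test}.} Take a closed walk in $\supp{r}$, along edges of length at most $r$, that goes around the separating spines from one component to the other. Replace each edge by its $T$-path. The resulting closed walk is contractible, since it lies in the tree, yet it must wind across the separating spine structure. So some $T$-path for a pair at distance $\le r$ must either pass through two spines or cross a spine between its nerves. The latter means jumping a gap of length at least $\Lambda^{-2}r/\sqrt2$, from nerve to non-nerve, inside a spine.

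Turning this into the quantitative bound $\Lambda/27$ is the main obstacle. The hard part is arguing that entering a different spine, or walking around a spine rather than through it, forces length at least about $\Lambda r/27$. I would bound this using the face geometry. The exit from a spine region must go either through a junction, which costs $\ge\frac23\Lambda^2 r$ geodesically but only $\Omega(\Lambda r)$ in the Euclidean sense, or across a nerve gap. I would then track constants: $\sqrt2$ from \Cref{obs:facts-on-the-r-support}, $12$ from the junction diameter, and so on, aiming for the factor $27$.
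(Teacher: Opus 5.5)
There is a genuine gap. You say yourself that the quantitative step is open, and the mechanism you propose for it cannot supply the bound.

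\textbf{Why your mechanism fails.} Your test pairs are at distance $\le r$, so their tree paths may have length up to $\Lambda r/27$. That is far larger than the spine separation $\Lambda^{-1}r/\sqrt2$ and any nerve gap. A tree path that ``passes through two spines'' or ``crosses a spine between its nerves'' therefore costs essentially nothing, and no contradiction follows. A winding argument in the style of \Cref{lem:circle-test} does not help either. That lemma needs a point $x$ avoided by every substituted tree path, i.e.\ a disk of radius comparable to their lengths. Here there is none: $P$ is thin, and $T$ may run freely through the region enclosed by the nerves. The comb of \Cref{sec:warm-up} makes this concrete. Every planar cycle there encloses a disk of radius at most $a/2$ and needs an edge of length at least $a$, so the circle bound gives only $1$, while $\opt=\Theta(\min\{n/a,a\})$. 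The obstruction is combinatorial in $T$, not topological in the plane.

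\textbf{What the paper does instead.} It works at two different scales and closes the argument with a cancellation in the tree.
\begin{enumerate}
\item Take the component $A$ of $G_F$ not containing the outer-face vertex. Walking around it gives a cyclic sequence of nerves $N_h$, each joining two distinct junctions, with consecutive nerves meeting at a common junction.
\item Consecutive points of a nerve are at distance $\le r/\Lambda^2$, so their tree distance is $<r/(27\Lambda)$. Hence the tree path $L_h$ between the two ends of $N_h$ stays within $r/(54\Lambda)$ of the nerve's points.
\item By \Cref{obs:facts-on-the-r-support}, this confinement puts $L_h$ inside $P$. It also makes paths from different spines disjoint; within one spine they are made vertex-disjoint by shortcutting.
\item Since $L_h$ lies in $P$ and joins the vicinities of two distinct junctions, $\len(L_h)\ge\frac23\Lambda^2r-2r/\Lambda^2$.
\item Close the walk with $Q_h=\pi_T(y_h,x_{h+1})$, whose endpoints lie near the same junction, so $\|y_h-x_{h+1}\|\le 12\Lambda r+2r/\Lambda^2$.
\item In a tree, a closed walk traverses every edge equally often in each direction. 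The disjoint $L_h$ are traversed forwards, so $\sum_h\len(Q_h)\ge\sum_h\len(L_h)$.
\item By pigeonhole, some $Q_h$ has length about $\frac23\Lambda^2 r$ while its endpoints are within about $12\Lambda r$. This contradicts $\dil(T)<\Lambda/27$.
\end{enumerate}

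The bound comes from playing the junction separation $\frac23\Lambda^2r$ against the junction diameter $12\Lambda r$. You mention both constants but never set them against each other this way, and the disjoint-paths cancellation in $T$ is the missing idea.
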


In order to facilitate understanding, we first give an intuitive description of the proof.
Consider the connected component of the spine-cutting graph that does not contain the outer face.
This component has to be surrounded by a closed walk $W$ alternating between $F$-nerves and junctions.
We can map this walk to a similar walk in the minimum dilation tree, where the paths connecting the junctions are vertex disjoint and the paths connecting junctions are contained in the face $F$.
Recall that the geodesic distance between two junctions, and thus the length of such a tree path connecting two junctions, is lower bounded by $\frac{2}{3} \Lambda^2 r$.
By an edge orientation argument, we can show that the total length of the tree walk inside the junctions is at least the total length of the tree walk between the junctions.
However, the diameter of a junction is at most $12 \Lambda r$.
Hence, the dilation is at least $\frac{\frac{2}{3} \Lambda^2 r}{12 \Lambda r} \geq \frac{1}{18}\Lambda$. In reality, the bound is slightly worse, 
as the walk may not start \emph{exactly} at a junction boundary, but a small distance away from it.

\begin{proof}
Suppose, for a contradiction, that there is a spanning tree $T$ of $S$ with $\dil(T)<\frac{1}{27}\Lambda$. Set $\eta\coloneqq\dil(T)/\Lambda$. Thus, $\eta<1/\sqrt{2}$ and $\eta<1/27$.
\Cref{fig:comb_lower_bound} illustrates the proof.

Since the spine-cutting graph $G_F$ is disconnected, it has a connected component $C$ that does not contain the outer-face vertex $\mathbb{R}^2\setminus(\int(F)\cup\junc(F))$. Consider the polygon $A$ that is the union of the polygons in $C$ and the $F$-spines that are crossed by the nerve avoiding paths that imply an edge in $C$.
By definition of~$A$, any path from $A$ to
the outer face $\mathbb{R}^2\setminus(\int(F)\cup\junc(F))$ intersects an $F$-nerve, each of which is contained in some $F$-spine.
Walking once around the outer boundary of $A$ thus gives a cyclic sequence of $F$-spines $\Sigma_1,\ldots,\Sigma_k$, where the $F$-spine $\Sigma_i$ connects a junction $J_i$ to another junction $J_{i+1}$ with indices taken modulo $k$. 
The junctions~$J_i$ and $J_{i+1}$ need not be connected by a single $F$-nerve. Instead, for each $F$-spine $\Sigma_i$, there is a sequence of $F$-nerves $N_{i,0}, \ldots, N_{i,m_i}$, where each $F$-nerve $N_{i,j}$ is contained in $\Sigma_i$ and $N_{i,j}$ connects two junctions $J_{i,j}$ and $J_{i,j+1}$ with the first junction $J_{i,0} = J_i$ and the last junction $J_{i,m_i} = J_{i+1}$. In particular, we can choose the nerves $N_{i,j}$ such that the corresponding junctions follow the cyclic ordering of the junctions around $F$.

\begin{figure}
    \centering
    \includegraphics{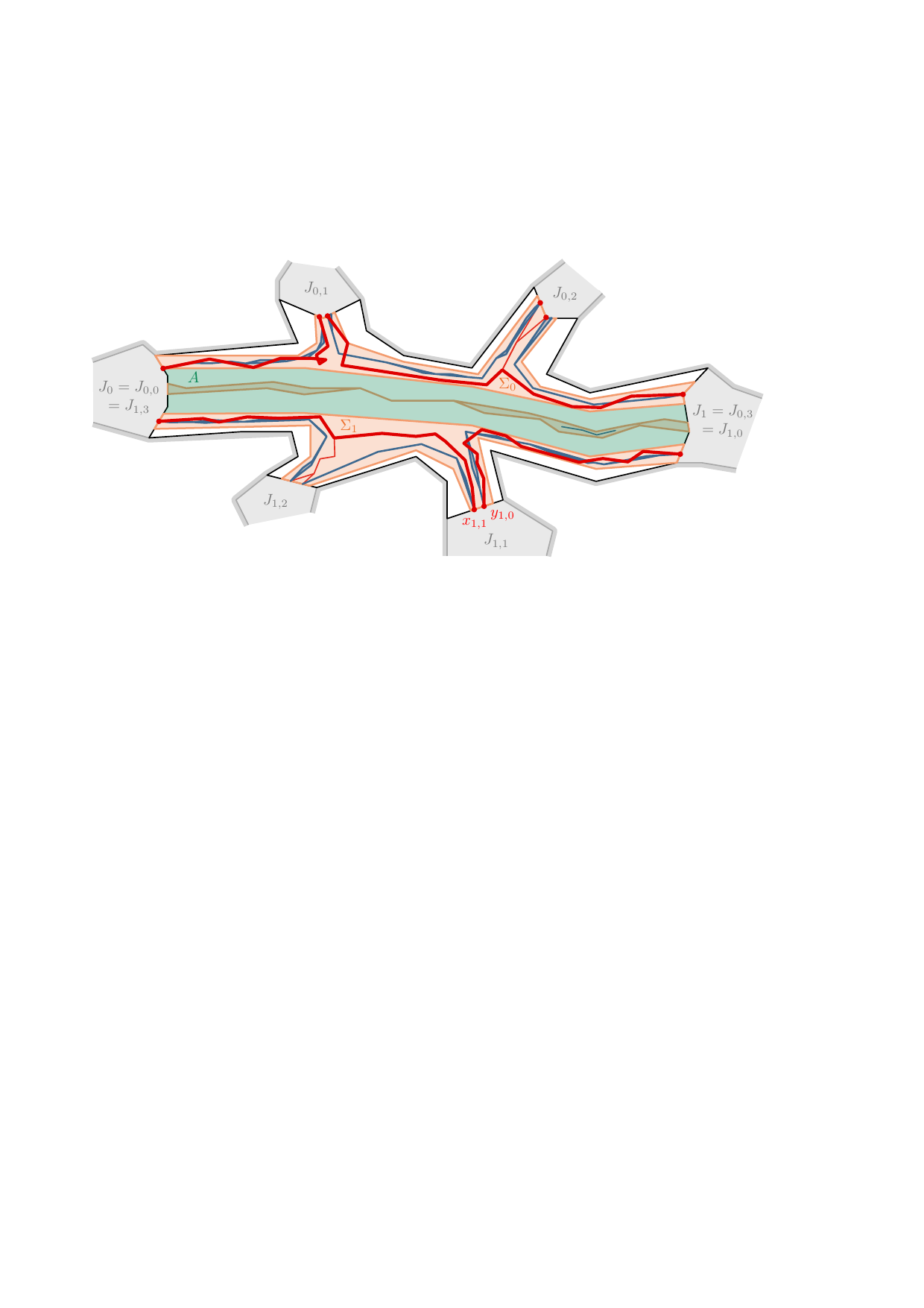}
    \caption{The component of $G_F$ separated from the outer-face is represented by the polygon~$A$. The $F$-spines (orange) and $F$-nerves (blue) surrounding it provide the indicated indexing of the junctions. The paths $L_{i,j} := \pi_T(x_{i,j},y_{i,j})$ (red) are shortcut to obtain the (thick red) paths~$L'_{i',j'}$.}
    \label{fig:comb_lower_bound}
\end{figure}

For every $i,j$, we choose two points $x_{i,j},y_{i,j} \in S \cap N_{i,j}$ such that $d_P(x_{i,j},J_{i,j}) \leq r / \Lambda^2$, $d_P(y_{i,j}, J_{i,j+1}) \leq r/\Lambda^2$, and there is a path from $x_{i,j}$ to $y_{i,j}$ contained in $N_{i,j}$ whose vertices are points in $S$ and edges have length at most $r/\Lambda^2$. As $N_{i,j} \cap J_{i,j} \neq \emptyset$ and $N_{i,j} \cap J_{i,j+1} \neq \emptyset$ such points $x_{i,j}$ and $y_{i,j}$ must exist.
Let $L_{i,j}\coloneqq \pi_T(x_{i,j},y_{i,j})$ be the path in $T$ between $x_{i,j}$ and $y_{i,j}$.
\begin{claim}\label{claim:in_neighbourhood}
    The path $L_{i,j}$ is contained in the $\eta r/(2\Lambda)$-neighbourhood of the vertices of $N_{i,j}$ and this neighbourhood is contained in $P$.
\end{claim}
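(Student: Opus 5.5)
Since $\dil(T) = \eta\Lambda$, the plan is to bound each edge of $L_{i,j}$ using the chain of short hops inside $N_{i,j}$. By choice of $x_{i,j},y_{i,j}$, there is a sequence $x_{i,j}=z_0,z_1,\ldots,z_\ell=y_{i,j}$ of points of $S\cap N_{i,j}$ with $\|z_t-z_{t+1}\|\le r/\Lambda^2$. Then $L_{i,j}$ is contained in $\bigcup_t \pi_T(z_t,z_{t+1})$, because the tree path between the endpoints lies in the union of the tree paths between consecutive points. For each consecutive pair,
\[
d_T(z_t,z_{t+1})\le \dil(T)\,\|z_t-z_{t+1}\| \le \eta\Lambda\cdot r/\Lambda^2=\eta r/\Lambda .
\]
Every point of $\pi_T(z_t,z_{t+1})$ is at distance at most half of this path length from one of its two endpoints, measured along the path and hence also in Euclidean distance. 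So $\pi_T(z_t,z_{t+1})\subseteq B(z_t,\eta r/(2\Lambda))\cup B(z_{t+1},\eta r/(2\Lambda))$. Taking the union over $t$ puts $L_{i,j}$ inside the $\eta r/(2\Lambda)$-neighbourhood of the vertices of $N_{i,j}$, which are points of $S$.

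For containment in $P$, I would invoke \Cref{obs:facts-on-the-r-support} with scale $r$. The vertices of $N_{i,j}$ lie in $P\in\supp{r}$, and $\eta<1/\sqrt2$ and $\Lambda\ge 3$ give $\eta r/(2\Lambda)< r/\sqrt2$. The observation as stated only says that no point of $S$ lies in $(P\oplus B(0,r/\sqrt2))\setminus P$. This covers the vertices of $T$ on $L_{i,j}$, which are points of $S$ within $\eta r/(2\Lambda)$ of a vertex of $N_{i,j}$ and therefore lie in $P$.

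The main obstacle is the full neighbourhood rather than just its points of $S$. Since the short hops have length at most $r/\Lambda^2$, they all belong to $\supp{r/\Lambda^2}$. For a vertex $z$ of $N_{i,j}$ near $\partial P$, I would argue that the disk $B(z,\eta r/(2\Lambda))$ cannot leave $P$. The boundary of $P$ consists of Delaunay edges of length at most $r$, and the outside of $P$ near $z$ is a region where every Delaunay edge is longer than $r$. By the empty-circle property, such a region near a site contains no disk of radius smaller than roughly $r/2$ centred within that distance of $z$ without meeting an edge of length at most $r$ incident to $z$. Concretely, any point $w$ with $\|w-z\|<r/2$ lies in a Delaunay triangle incident to $z$ or adjacent to one. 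If that triangle were outside $P$, it would have an edge longer than $r$, yet the Delaunay property forces its circumdisk to be large and empty. Combining this with the $r$-support boundary being formed by short edges should give the containment.

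I would first try to make this step clean by noting that the segment from $z$ to any point of $T$ on $L_{i,j}$ has length below $r/\sqrt2$ and both its endpoints lie in $P$. The segment could leave $P$ only by crossing a boundary edge $e$ of length at most $r$. Then the Delaunay triangle on the outer side of $e$ would contain a point closer than $r/\sqrt2$, which is what the observation forbids. Since edges of $T$ are straight segments between such points, this handles the whole path.
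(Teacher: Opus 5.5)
Your first two paragraphs coincide with the paper's proof. The paper likewise uses the chain of hops of length at most $r/\Lambda^2$ inside $N_{i,j}$, the bound $d_T(z_t,z_{t+1})\le \eta r/\Lambda$, the half-length argument, and then a one-line appeal to \Cref{obs:facts-on-the-r-support} to conclude $L_{i,j}\subseteq P$. You correctly note that the Observation only speaks about points of $S$. So what this argument, yours or the paper's, actually establishes is this: every point of $S$ in the $\eta r/(2\Lambda)$-neighbourhood, in particular every vertex of $L_{i,j}$, lies in $P$. That part of your proposal is fine.

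Your last two paragraphs try to upgrade this, and both contain genuine errors.

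\textbf{The full neighbourhood.} It is not contained in $P$ in general. If a vertex $z$ of $N_{i,j}$ lies on $\partial P$, every disk $B(z,\varepsilon)$ meets the complement of $P$. This is the typical case, e.g.\ when the nerve runs along the boundary as in the comb example, or when $P$ is degenerate. Your supporting premise, that ``the outside of $P$ near $z$ is a region where every Delaunay edge is longer than $r$'', is also false. The boundary edges of $P$ at $z$ have length at most $r$, and an exterior Delaunay triangle is only guaranteed to have one long edge.

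\textbf{The segment argument.} It fails too. A segment of length below $r/\sqrt2$ between two points of $S\cap P$ can cut across an exterior corner of $P$ at a reflex vertex. The exterior Delaunay triangles it passes through need not contain any point of $S$ besides their vertices. The Observation constrains sites of $S$, not points on segments, so it gives no contradiction. Here is a concrete configuration:
\begin{itemize}[nolistsep]
\item Let $a,u,v,b\in S$ be nearly cocircular with $ab$ a diameter, and $u,v$ close to $a$ on opposite sides of line $ab$.
\item Push $u,v$ very slightly outward, so that $ab$ rather than $uv$ is the Delaunay diagonal while $uv$ still crosses $ab$.
\item Take $\|a-u\|,\|a-v\|<r<\|a-b\|$, and put the rest of $P$ on the side of $a$ away from $b$.
\end{itemize}
Then $au$ and $av$ are boundary edges of $P$ meeting at the reflex vertex $a$, and the triangles $aub$ and $abv$ lie outside $P$. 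The segment $uv$ has length $\ll r$, yet it runs outside $P$ right next to $a$, and nothing prevents a spanning tree from using it as an edge. Hence ``since edges of $T$ are straight segments between such points, this handles the whole path'' is unjustified.

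\textbf{What to do.} Drop those two paragraphs and state the second half of the claim at the level of sites, which your second paragraph already proves. If the straight-line edges of $L_{i,j}$ are needed inside $P$, a separate argument is required. The later lower bound on $\len(L'_h)$ via geodesic distances in $P$ implicitly needs this. Neither your proposal nor the paper's appeal to the Observation supplies that argument.
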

\begin{proof}[Proof of claim.]
    Consider the path in $N_{i,j}$ from $x_{i,j}$ to $y_{i,j}$ whose vertices are points in $S$ and edges have length at most $r/\Lambda^2$.
    Replace each edge $\overline{uv}$ of this path by the unique path $\pi_T(u,v)$ in~$T$.
    The union of these paths gives us \emph{a} path in $T$ connecting $x_{i,j}$ and $y_{i,j}$, and therefore contains the shortest path $L_{i,j}\coloneqq \pi_T(x_{i,j},y_{i,j})$.
    Because $\dil(T) = \eta \Lambda$, it follows that for every edge $\overline{uv}$, $d_T(u,v)\leq\dil(T)\|u-v\|<\eta r/\Lambda$.
    Every point of $\pi_T(u,v)$ is therefore within Euclidean distance $\eta r/(2\Lambda)$ of one of its endpoints. Consequently, $L_{i,j}$ is contained in the $\eta r/(2\Lambda)$-neighbourhood of the vertices of $N_{i,j}$. By \Cref{obs:facts-on-the-r-support}, this implies that $L_{i,j}\subseteq P$.
\end{proof}
For a fixed $F$-spine $\Sigma_i$, the associated paths $L_{i,j}$ may intersect in some vertices. Our goal is to find a sequence of \emph{vertex-disjoint} paths $L'_{i,0},\ldots,L'_{i,n_i}$ with $\bigcup_j L'_{i,j} \subseteq \bigcup_j L_{i,j}$ such that $L'_{i,0}$ starts close to $J_i = J_{i,0}$, $L'_{i,n_i}$ ends close to $J_{i+1} = J_{i,m_i}$ and the consecutive paths again have a common end/start junction that follow the cyclic order along $P$.
We process $L_{i,1},\ldots,L_{i,m_i}$ in order, maintaining a sequence of pairwise vertex-disjoint tree paths $L'_{i,j'}$. If $L_{i,j}$ is vertex-disjoint from every path $L'_{i,j'}$ obtained so far, append $L_{i,j}$ to the sequence. Otherwise, let $L'_{i,a}$ be the first path in the sequence obtained so far that intersects $L_{i,j}$. Delete every path after $L'_{i,a}$ from the sequence, and replace $L'_{i,a}$ by the unique path in $T$ joining the starting point of $L'_{i,a}$ to the endpoint of $L_{i,j}$, i.e., to $y_{i,j}$. Since $T$ is a tree, this replacement path is contained in $L'_{i,a}\cup L_{i,j}$. It is therefore disjoint from paths $L'_{i,b}$ for $b < a$. We conclude that the sequence $L'_{i,0},\ldots,L'_{i,n_i}$ of paths obtained this way is indeed contained in $\bigcup_j L'_{i,j} \subseteq \bigcup_j L_{i,j}$, and therefore~\Cref{claim:in_neighbourhood} still holds for these paths.
Furthermore, the start point of $L'_{i,0}$ is $x_{i,0}$ which is within $r/\Lambda^2$ distance of $J_i$ and the endpoint of $L'_{i,n_i}$ is within $r/\Lambda^2$ distance of $J_{i+1}$, and the consecutive paths in the sequence indeed connect junctions (within distance $r/\Lambda^2$) in the required way.

We next show that the paths $L'_{i,j}$ associated with distinct spines are disjoint. By~\Cref{obs:facts-on-the-r-support}, any pair of vertices from distinct $F$-spines are at distance at least $r/(\sqrt{2}\Lambda)$. By~\Cref{claim:in_neighbourhood}, each of the vertices of $L'_{i,j}$ lies within distance $\eta r/(2\Lambda)$ of some vertex of $\Sigma_i$ and thus two paths associated with distinct spines have distance at least
\[  \frac{r}{\sqrt{2}\Lambda}-2\frac{\eta r}{2\Lambda}=\frac{r}{\Lambda}\left(\frac{1}{\sqrt{2}}-\eta\right)>0.\]
Together with the fact that the paths associated with the same spine are vertex-disjoint, we conclude that any pair of paths $(L'_{i,j},L'_{i',j'})$ is vertex-disjoint.

We re-index the paths $L'_{i,j}$ as $L'_h$ with $h \in \{0,\ldots,\ell\}$ using their cyclical ordering. Let~$x_h$ and $y_h$ denote the start- and endpoint of $L'_h$. Note that $y_h$ and $x_{h+1}$ close to the same junction, with indices taken modulo $\ell$.  Let $Q_h\coloneqq \pi_T(y_h,x_{h+1})$. Consider the closed walk $W=L'_0\oplus Q_0\oplus\cdots\oplus L'_\ell\oplus Q_\ell$ in $T$.

We orient the tree $T$ as follows. Orient every edge along $L'_h$ from $x_h$ to $y_h$ according to that path, and orient all remaining edges of $T$ arbitrarily. Note that this indeed gives a valid oriented tree, as the paths $L'_h$ are disjoint. For every edge $e$ of $T$, let $n^+(e)$ and $n^-(e)$ be the number of times the walk $W$ traverses $e$ with and against its orientation, respectively. Removing an edge $e$ from the tree $T$ separates it into two components. Since $W$ is a closed walk, it crosses the resulting cut equally often in both directions. Hence $n^+(e)=n^-(e)$ for every edge $e$. The paths $L'_h$ are pairwise edge-disjoint and are traversed according to their orientations, whereas the signed contribution of the paths $Q_h$ is bounded from below by minus their total length. Therefore
\[0=\sum_{e\in T}\len(e)\bigl(n^+(e)-n^-(e)\bigr)\geq\sum_{h=1}^{\ell}\len(L'_h)-\sum_{h=1}^{\ell}\len(Q_h).\]

Since each $L'_h$ is contained in $P$, starts within $r/\Lambda^2$ distance of one junction, ends within~$r/\Lambda^2$ distance of another junction, and distinct junctions have geodesic distance at least~$\frac{2}{3}\Lambda^2r$, we have $\len(L'_h)\geq\frac{2}{3}\Lambda^2r - 2r/\Lambda^2$. So, the $\sum_{h=1}^{\ell}\len(L'_h) \leq \ell(\frac{2}{3}\Lambda^2r- 2r/\Lambda^2)$.

It follows that $\sum_h\len(Q_h)\geq\sum_h\len(L'_h) \geq\ell(\frac{2}{3}\Lambda^2r- 2r/\Lambda^2)$. Hence, by the pigeon-hole principle, some $Q_h$ has length at least $\frac{2}{3}\Lambda^2r- 2r/\Lambda^2$. The path $Q_h$ connects two points $y_h$ and~$x_{h+1}$ within $r/\Lambda^2$ distance of the same junction. As the geodesic diameter of a junction is at most $12\Lambda r$, we have $\|y_h-x_{h+1}\|\leq12\Lambda r + 2r/\Lambda^2$. Consequently, $\len(Q_h)< \eta\Lambda(12\Lambda r + 2r/\Lambda^2) \leq 12\eta \Lambda^2r + 2r/\Lambda$, since $\eta < 1$. And therefore, $\Lambda \geq 3$ implies that
\begin{equation*}
\len(Q_h)
<\frac{12}{27}\Lambda^2r+\frac{2r}{\Lambda}\
<\frac{2}{3}\Lambda^2r-\frac{4}{27}\Lambda^2r+\frac{2r}{\Lambda}\
\leq \frac{2}{3}\Lambda^2r-\frac{4r}{\Lambda}+\frac{2r}{\Lambda}\
=\frac{2}{3}\Lambda^2r-\frac{2r}{\Lambda},
\end{equation*}
a contradiction.
Thus, every spanning tree of $S$ has dilation at least $\frac{1}{27}\Lambda$.
\end{proof}

\section{Structuring the point set via the spine-cutting graph}\label{sec:consistent}

The lower bounds provided in \Cref{sec:obstructions} imply that is sufficient to consider point sets that are $\Lambda$-thin, and in which, whenever we compute a junction-tree, the spine-cutting graph of every face is connected. Next, we use the spine-cutting graph to compute a set of cuts in $\supp{\Lambda^{-1}r}$, which only depends on $\supp{r}$, that dictates how our bottom-up algorithm should construct a tree of sublinear dilation. This requires two ingredients: First, the definition of a cut, which we call a \emph{minimal realization of the spine-cutting graph}, and second, a concrete sequence of values~$r$ for which we use the $r$-support and its related structures as the backbone of the algorithm.

\subsection{Realizing the spine-cutting graph}

Recall that informally, we would like to ``cut'' the point set to prevent structures similar to the point set in \Cref{fig:intuition}(a) to appear. The minimal realization formalizes such a cut using the spine-cutting graph. \Cref{fig:minimal_realization} gives an example of a minimal realization.

\begin{definition}[Minimal realization]\label{def:minimal_realization}
    Let $F$ be a face in a junction tree of $\supp{r}$, and let $\spineset\subset\spines{F}$ be a subset of spines. Let $\spineset'$ be all spines in $\spineset$, that intersect at least two junctions. A \emph{minimal realization} of the spine-cutting graph $G_{F,\spineset}$ is a planar straight-line embedded graph~$C$ for which the following conditions hold:
    \begin{enumerate}[nolistsep]
        \item\label{itm:not_intersect_nerve} $C$ does not intersect any $F$-nerves, i.e., $C\cap\supp{\Lambda^{-2}r}=\emptyset$.
        \item\label{itm:in_spines} $C$ is contained in the $F$-spines in $\spineset'$, i.e., $C\subseteq \bigcup_{\spine\in\spineset'}\spine$.
        \item\label{itm:tree} The subgraph of $G_{F,\spineset}$ induced by $C$ is connected. To be precise, the subgraph of $G_{F,\spineset}$ where we include only edges where either there is a path in $C$ between the two polygons that correspond to the vertices, or the two polygons overlap\footnote{This can only happen when one of the polygons is the outer-face polygon.}, is connected.
        \item\label{itm:minimal} $C$ is minimal, in the sense that the $F$-spines in $\spineset'$ are not separated from the junctions by $C$, i.e., for every $\spine\in\spineset'$, every connected component of $\spine\setminus C$ intersects at least one junction $J\in\junc(F)$.
    \end{enumerate}
    We call such a realization well-spaced, if for each connected component $\Sigma'$ of $\bigcup_{\spine\in\spineset}\spine\setminus C$ the graph on the subset of points of $S$ contained in $\Sigma'$ that has an edge between two points $p,q$ whenever $d_{\supp{\Lambda^{-1}r}}(p,q) \leq 3r$ is connected.
\end{definition}
Technically, there is always a minimal realization where each connected component of $C$ is even a \emph{polygonal chain}. For technical reasons we require our realizations to be well-spaced, which may require $C$ to be a planar straight-line graph instead.

\begin{figure}
    \centering
    \includegraphics{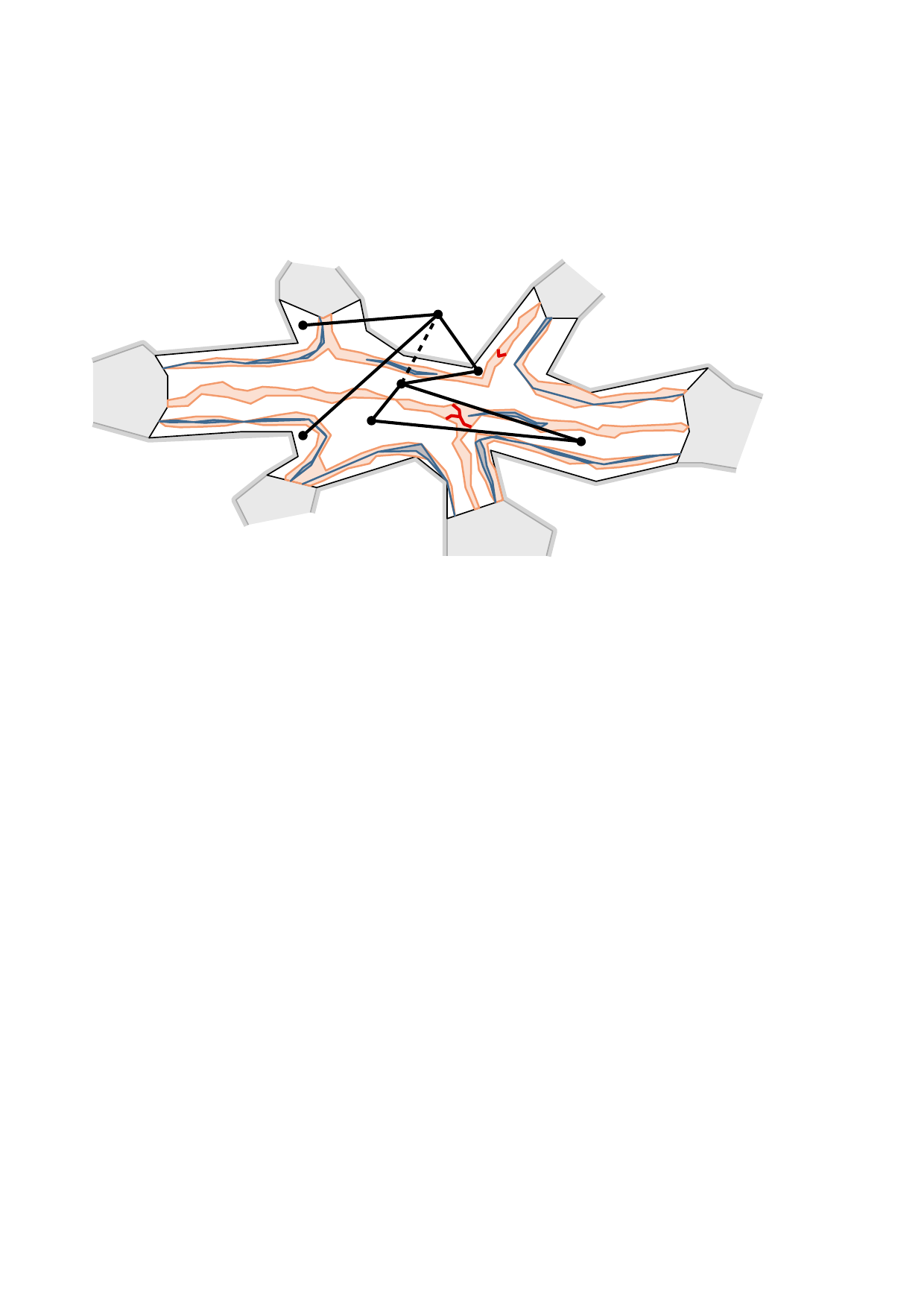}
    \caption{The red planar straight-line graph is a minimal realization $C$ of the spine-cutting graph $G_F$. The subgraph of $G_F$ induced by $C$ includes all solid, and not the dashed edge.}
    \label{fig:minimal_realization}
\end{figure}

\begin{lemma}\label{lem:construct_min_realization}
    Let $S$ be a $\Lambda$-thin point set with $\Lambda \geq 3$, and $(\J,\mathcal{F})$ be an $(12\Lambda r,\frac{2}{3}\Lambda^2r, 24(\Lambda^2\log n)r)$-junction tree of a weakly simple polygon $P \in \supp{r}$.  Given a face $F\in\mathcal{F}$, and a subset $\spineset\subseteq\spines{F}$, we can construct a well-spaced minimal realization of the spine-cutting graph $G_{F,\spineset}$, or certify that $\opt\geq\frac{1}{27}\Lambda$ in $O(n^2\log n)$ time.
\end{lemma}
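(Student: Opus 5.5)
The construction will be greedy and incremental, with the comb obstruction (\Cref{lem:the-hard-obstruction}) as the fallback certificate. First compute $\supp{\Lambda^{-1}r}$ and $\supp{\Lambda^{-2}r}$ restricted to $F$. This gives the spines $\spineset'$ that touch at least two junctions, and inside each spine its nerves. For every spine $\spine\in\spineset'$, compute the ``free region'' $\spine\setminus\supp{\Lambda^{-2}r}$. Condition~\ref{itm:not_intersect_nerve} and condition~\ref{itm:in_spines} force $C$ to live in this free region. A path in the free region from one side of the spine to the other realises exactly an edge of $G_{F,\spineset}$ between the two incident vertex-polygons.

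The plan is to build $G_{F,\spineset}$ explicitly. Its vertices are the polygons of $F\setminus\bigcup\spineset$ meeting at least two junctions, plus the outer vertex. Its edges come from nerve-avoiding crossings of single spines. Then check whether it is connected. If it is not, the component avoiding the outer face yields the closed alternating walk of nerves and junctions used in the proof of \Cref{lem:the-hard-obstruction}, and the same argument certifies $\opt\geq\frac{1}{27}\Lambda$. (For $\spineset\ne\spines{F}$ I expect to reuse that proof verbatim, since it only uses the spines actually surrounding the component.)

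If $G_{F,\spineset}$ is connected, fix a spanning tree of it. For each tree edge, choose a witnessing crossing and realise it by a straight-line polyline. The polyline runs inside the free region of the spine and can be routed along edges of the Delaunay triangulation of $S\cap\spine$ with length in $(\Lambda^{-2}r,\Lambda^{-1}r]$, or through midpoints of such edges, so that it avoids nerves. Let $C$ be the union of these polylines; this gives conditions~\ref{itm:not_intersect_nerve}--\ref{itm:tree}. Next enforce minimality (condition~\ref{itm:minimal}). While some component of $\spine\setminus C$ touches no junction, delete from $C$ a segment on the boundary of that component. We must check that connectivity of the induced subgraph is preserved. A component enclosed without junction contact means $C$ contains a cycle inside $\spine$, or a redundant crossing, and removing one boundary edge of that enclosed pocket keeps every crossing used by the spanning tree intact. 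Each pruning step strictly decreases the number of segments, so the procedure terminates in polynomially many rounds.

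The main obstacle is the \emph{well-spaced} property: every component of $\bigcup\spineset\setminus C$ must have its points connected under $d_{\supp{\Lambda^{-1}r}}\le 3r$. Cutting a spine could isolate small pockets of points, for example a nerve squeezed between two crossings. I would first route each crossing through a ``narrow'' place of the spine. Then I would argue via $\Lambda$-thinness and \Cref{lem:floodfill}, applied at scale $\Lambda^{-1}r$, that points in one cut-piece are linked by chains of geodesic steps of length at most $6\Lambda\cdot\Lambda^{-1}r\le 3r$-ish, adjusting constants. Where this fails, I would add extra branches to $C$, making it a planar straight-line graph rather than a chain, that reconnect isolated pockets to an adjacent piece. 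These branches must not create new separations, and each pocket must still meet a junction. Running time: spines, nerves, and free regions in $O(n\log n)$ via Delaunay; one geodesic/shortest-path query per candidate crossing, $O(n)$ crossings, each $O(n\log n)$. This gives $O(n^2\log n)$.
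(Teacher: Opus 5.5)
\textbf{Gap: the well-spaced property is never established.} Your certificate branch is sound and matches the paper: determine whether $G_{F,\spineset}$ is connected, and if not, fall back on \Cref{lem:the-hard-obstruction}, whose argument only uses the spines surrounding the separated component. The gap is in the realization branch, in exactly the step you flag as the main obstacle.

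\begin{itemize}
\item Routing crossings through ``narrow places'' and invoking \Cref{lem:floodfill} at scale $\Lambda^{-1}r$ only bounds lengths of geodesic portals, and only by $6r$, which is not $\le 3r$. It says nothing about whether the points of $S$ in one piece of $\bigcup\spineset\setminus C$ are chained by steps of geodesic length at most $3r$.
\item Nothing in your freely routed polylines prevents two strands of $C$ from running close together along a spine. They would then bound a long corridor-shaped piece whose only points of $S$ sit at its two far ends, with all nearby points on the other side of the strands. Such a piece is not well-spaced.
\item The constant $3r$ cannot simply be inflated. It is compared later with the junction spacing $\frac{2}{3}\Lambda r$ at scale $\Lambda^{-1}r$ for $\Lambda\ge 5$, both in the connectedness argument and in Case~1a of the locality argument.
\item The fallback of adding unspecified extra branches, which must not break conditions~\ref{itm:tree} and~\ref{itm:minimal}, is not an argument.
\item Your pruning step for minimality also rests on an unproved claim: that a junction-free pocket always witnesses a cycle or a redundant crossing. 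Deleting a segment of a crossing polyline can remove the only $C$-path realizing a spanning-tree edge.
\end{itemize}

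\textbf{The missing idea is to make $C$ combinatorial in a triangulation, so that well-spacedness comes for free.} The paper does this as follows.
\begin{itemize}
\item It Delaunay-triangulates the nerve-free region $\Sigma\setminus\bigcup\mathcal{N}_\Sigma$ of each spine.
\item It builds an auxiliary graph $H$ whose nodes are these triangles together with the vertices of $G_{F,\spineset}$.
\item It grows the realization from the outer-face vertex by repeatedly taking a shortest path in $H$, with triangles as internal nodes, to an unreached vertex, contracting after each step. The selected paths therefore form a forest in the dual.
\item It embeds this forest by joining midpoints of crossed triangle sides, dropping one connection when all three sides of a triangle are used. Hence $C$ contains no point of $S$.
\end{itemize}
Each piece of $\Sigma\setminus C$ is then a union of triangle fragments. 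Each fragment contains a triangle vertex, and adjacent fragments share a vertex. So the points of $S$ in a piece are connected through uncut triangulation edges, which have length at most $2r+\frac{r}{2\Lambda}\le 3r$ by $r$-thinness of the spine. Minimality comes from the shortest-path-and-contraction discipline rather than post-hoc pruning.

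Your option of crossing through midpoints of Delaunay edges is the right raw ingredient. Had you required each crossing to be a dual path and their union to be acyclic in the dual, the narrow-place and extra-branch machinery would have been unnecessary. Your other option, routing \emph{along} Delaunay edges, is not admissible. Their endpoints are points of $S$, each of which lies in $\supp{\Lambda^{-2}r}$ at least as a degenerate polygon, so this violates condition~\ref{itm:not_intersect_nerve}.
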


\begin{proof}
See \Cref{fig:spine_cutting} for an illustration of our construction. For each spine~${\Sigma\in\spineset}$, let $\mathcal{N}_\Sigma$ denote the set of $F$-nerves contained in $\Sigma$. We compute a Delaunay triangulation of each polygonal domain $\Sigma\setminus\bigcup \mathcal{N}_\Sigma$ in $O(n\log n)$ time.

We construct an auxiliary graph $H$. Its vertices are the vertices of $G_{F,\spineset}$ together with all triangles of the computed triangulations. Two vertices of $H$ are adjacent whenever their corresponding regions share a boundary segment that is not contained in an $F$-nerve. Thus, paths in $H$ correspond precisely to paths through the spines that avoid all $F$-nerves.

Starting with the outer-face vertex, we repeatedly find a shortest path in $H$ to a vertex of~$G_{F,\spineset}$ not yet reached.
We choose the path so that its internal vertices are triangulation faces, and add it to the previously selected paths.
We then contract the newly connected vertices and continue.
If at some point no such path exists, then $G_{F,\spineset}$ is disconnected. By \Cref{lem:the-hard-obstruction}, this certifies that $\opt\geq\frac{1}{27}\Lambda$, as a subset of spines is sufficient to certify that $G_F$ is disconnected.

Otherwise, the selected paths form a forest connecting every vertex of $G_{F,\spineset}$ to the outer-face vertex. We turn this forest into a planar straight-line graph $C$ as follows. For every crossed triangulation edge, we place a vertex at its midpoint, and inside each traversed triangle we connect the corresponding midpoints by straight-line segments. If all three sides of a triangle are used, we retain only two of the three possible connections, so that the resulting graph remains acyclic.
This way of embedding also ensures that $C$ does not contain any points of $S$.

By construction, $C$ avoids every $F$-nerve (property~\ref{itm:not_intersect_nerve}) and is contained in the spines of $\spineset$ that intersect at least two junctions (property~\ref{itm:in_spines}). Moreover, because the selected paths form a forest connecting every vertex of $G_{F,\spineset}$ to the outer-face vertex, the subgraph of $G_{F,\spineset}$ induced by $C$ is connected (property~\ref{itm:tree}). Furthermore, since the selected paths from a forest, $C$ is cycle-free. As the internal vertices of each path we select are triangulation faces, and we contract each path after is it selected, no component of $\Sigma\setminus C$ is separated from all junctions. Thus, every connected component of $\Sigma\setminus C$ intersects at least one junction (property~\ref{itm:minimal}). We conclude that $C$ is a minimal realization.

Since $C$ corresponds to a forest in the dual of the Delaunay triangulation, the connected components of $\Sigma\setminus C$ are connected components of the Delaunay triangulation. Observe that the Delaunay triangulation in the spines, which are subsets of $\supp{\Lambda^{-1} r}$, has maximum edge length $2r+\frac{r}{2\Lambda}\leq 3r$, as a ball of radius $r$ centered at any Delaunay edge has to intersect the boundary of the connected component of $\supp{\Lambda^{-1}r}$ that generated $\Sigma$, whose boundary consists of edges of length at most $\frac{r}{\Lambda}$. Thus there is a vertex of the boundary of $\Sigma$ in a ball of radius $2r+\frac{r}{2\Lambda}$, and hence $C$ is well-spaced.


There are $O(n)$ triangulation faces, and at most $O(n)$ shortest-path computations are performed in an auxiliary graph of linear size. Using a straightforward implementation, the total running time is $O(n^2\log n)$.
\end{proof}

\begin{figure}
    \centering%
    \includegraphics[page=1]{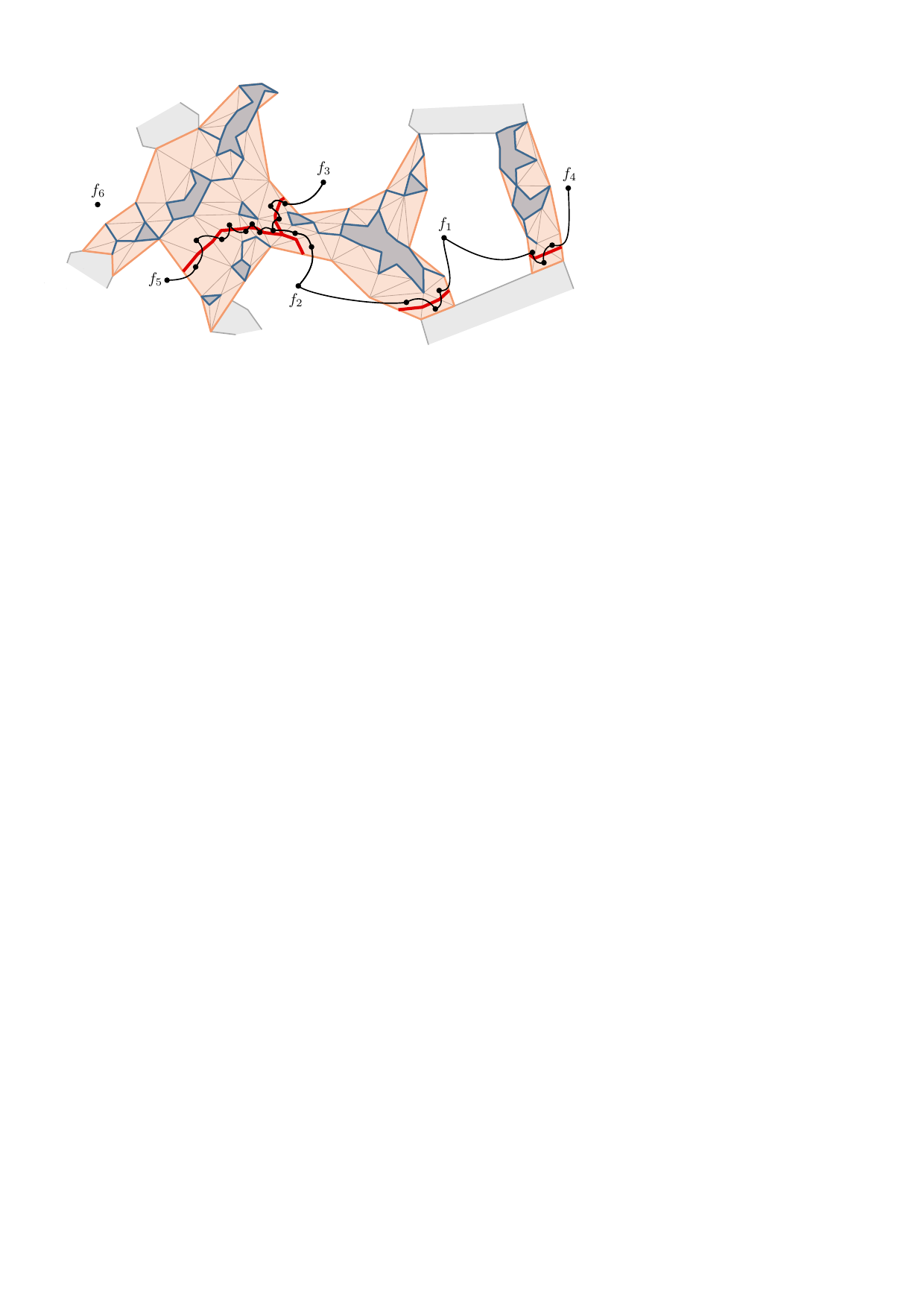}%
    \caption{A forest in the graph $H$ that connects the vertices representing $f_1$ through $f_5$ and the corresponding planar straight-line graph $C$ in red, which connects the these faces by cutting through spines (orange) without cutting any nerves (blue).
    }
    \label{fig:spine_cutting}
\end{figure}

\subsection{A consistent sequence of junction trees}\label{sec:consistent_sequence}

Our algorithm will construct a tree in a bottom-up fashion by considering edges in $r$-supports for increasing $r$. To do so, it requires a sequence of values $r$, for which to consider the $r$-support. 
In particular, we want to consider values of $r$ that increase by a factor $\Lambda$. However, depending on the spread such a sequence might not have length polynomial in $n$. 
Therefore, we construct a sequence of $O(n^2)$ relevant scales $(r_0,\ldots,r_N)$ such that:


\begin{itemize}
    \item either $\frac{r_i}{r_{i-1}}=\Lambda$ or $\frac{r_i}{r_{i-1}}>\Lambda^{10}$, and
    \item for any pair of points $p,q \in S$ there is some $i$ for which $r_{i-1}\leq \|p-q\| < r_i=\Lambda r_{i-1}$.
\end{itemize}

We call this a compressed scale sequence. 

\begin{lemma}\label{lem:scale_sequence}
    A compressed scale sequence can be computed in $O(n^2\log n)$ time.
\end{lemma}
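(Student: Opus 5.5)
The plan is to build the sequence greedily from the sorted multiset of interpoint distances. First compute all $\binom{n}{2}$ distances and sort them in $O(n^2\log n)$ time; call them $d_1\le d_2\le\dots\le d_M$. The aim is a sequence whose ``dense'' blocks are geometric progressions with ratio exactly $\Lambda$, each block covering a cluster of distances, and whose consecutive blocks are separated by a gap of ratio greater than $\Lambda^{10}$.

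Concretely, start with $r_0:=d_1$. Maintain a current scale $r_{i-1}$, and let $d$ be the smallest distance not yet covered, i.e.\ not lying in any interval $[r_{j-1},r_j)$ with $r_j=\Lambda r_{j-1}$. If $d<\Lambda^{10} r_{i-1}$, set $r_i:=\Lambda r_{i-1}$. Repeating this at most $10$ more times (ratio exactly $\Lambda$ each step) reaches a scale exceeding $d$. The interval $[r_{i'-1},\Lambda r_{i'-1})$ containing $d$ then satisfies the second bullet for $d$, and also for every other distance in it. If instead $d\ge\Lambda^{10}r_{i-1}$, make a jump: set $r_i:=d$, for which $r_i/r_{i-1}\ge\Lambda^{10}$, and then set $r_{i+1}:=\Lambda d$, so $d\in[r_i,r_{i+1})$ with ratio $\Lambda$.

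To get strict inequality ${}>\Lambda^{10}$, I would use the threshold ``$d>\Lambda^{10}r_{i-1}$'' for jumping and otherwise continue with ratio $\Lambda$. At equality, $d=\Lambda^{10}r_{i-1}$ is reached exactly by ten $\Lambda$-steps, so it is covered anyway. After each step, advance the pointer in the sorted list past all distances below the new $r_i$, using a linear scan. The first bullet then holds by construction, and the second holds because every distance is eventually swept into an interval whose right end is $\Lambda$ times its left end.

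For the length bound, charge each non-jump step to the next uncovered distance. Between consecutive distances handled in one ``cluster'', at most $11$ steps of ratio $\Lambda$ occur before either covering the next distance or triggering a jump. Each jump, together with its following $\Lambda$-step, is charged to a distinct distance. Hence $N\le 12M+1=O(n^2)$, and the total work after sorting is $O(n^2)$, giving $O(n^2\log n)$ overall.

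I expect the main obstacle to be the boundary bookkeeping rather than any deep step. A jump must not land so close to a previous interval that the ratio is at most $\Lambda^{10}$. It must also not skip a distance lying between $\Lambda r_{i-1}$ and $d$, which cannot happen since $d$ is the smallest uncovered distance. Finally, the half-open intervals must catch the maximum distance, which is handled by the final step $r_N=\Lambda r_{N-1}$ placed after the largest $d_M$. I would check these carefully, but I expect no real difficulty.
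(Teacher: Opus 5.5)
Your proposal is correct and is essentially the paper's proof: sort all $O(n^2)$ distances, sweep them in order, and either extend the current sequence by factors of $\Lambda$ or jump directly to the next uncovered distance $d$ and append $d$ and $\Lambda d$. The differences are only bookkeeping. You append just as many $\Lambda$-steps as needed rather than always ten, and your strict jump threshold $d>\Lambda^{10}r_{i-1}$ handles the equality case $d=\Lambda^{10}r_{i-1}$, where the paper's ``otherwise'' branch would give a ratio of exactly $\Lambda^{10}$ rather than strictly more.
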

\begin{proof}
    First sort all pairwise distances between points in $S$ in $O(n^2 \log n)$ time to obtain a sequence $\delta_0,\ldots,\delta_k$ with $k = O(n^2)$. We start with the sequence $\mathcal{R} := (\delta_0, \delta_0 \Lambda)$, and then process the distance sequence in order, appending values to $\mathcal{R}$ in the process.
    If the current distance is $\delta_i < r$, where $r$ is the last value in $\mathcal{R}$, continue to the next distance $\delta_{i+1}$. If~$r \leq \delta_i < r\Lambda^{10}$, append $r \Lambda,r \Lambda^2, \ldots, r\Lambda^{10}$ to $\mathcal{R}$. Otherwise, append $\delta_i$ and $\delta_i \Lambda$ to~$\mathcal{R}$.
\end{proof}

To conclude, we package all assumptions we were able to make so far into the notion of a \emph{consistent sequence}, which we show, can be computed in polynomial time. In spirit, such a sequence is the input of the algorithm discussed in the next section.

\begin{definition}
    Let a compressed scale sequence $\mathcal{R}$ be given. For every value $r\in\mathcal{R}$ let $(\J_r,\mathcal{F}_r)$ be a $(12\Lambda r,\frac{2}{3}\Lambda^2 r,24(\Lambda^2 \log n )r)$-junction tree of $\supp{r}$. The sequence $(\J_r,\mathcal{F}_r)$ is said to be \emph{consistent} if for every face $F \in \mathcal{F}_r$ the spine-cutting graph~$G_F$ is connected.
\end{definition}

\begin{theorem}\label{thm:construct_consistent_sequence}
    Let $S$ be a $\Lambda$-thin set of $n$ points with $\Lambda \geq 3$. We can compute a consistent sequence $(\J_r,\mathcal{F}_r)$ for $S$ or certify that $\opt \geq \frac{1}{27}\Lambda$ in $O(n^5\log n)$ time.
\end{theorem}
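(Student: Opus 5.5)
The plan is to assemble the consistent sequence directly from the constructions already available, and to spend the analysis on the running time. First, I apply \Cref{lem:scale_sequence} to compute a compressed scale sequence $\mathcal{R}=(r_0,\ldots,r_N)$ in $O(n^2\log n)$ time. Its length is $N=O(n^2)$: each of the $O(n^2)$ sorted distances appends at most ten values.

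Next, I handle each scale $r\in\mathcal{R}$ separately. I compute the Delaunay triangulation once in $O(n\log n)$ time. From it I extract $\mathrm{Delaunay}_{\le r}$ and the weakly simple polygons of $\supp{r}$ in $O(n)$ time, and likewise $\supp{\Lambda^{-1}r}$ and $\supp{\Lambda^{-2}r}$, which are needed to define spines and nerves. Since $S$ is $\Lambda$-thin and $\Lambda\ge 3$, \Cref{thm:thm-junction-tree} applies to every polygon $P\in\supp{r}$. It yields a $(12\Lambda r,\frac23\Lambda^2r,(24\Lambda^2\log n)r)$-junction tree of $P$ in $O(|P|\log|P|)$ time. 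Because the polygons of $\supp{r}$ have disjoint vertex sets, this totals $O(n\log n)$ per scale. The union over polygons gives $(\J_r,\F_r)$.

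For each face $F\in\F_r$, I then test whether $G_F$ is connected. Following the proof of \Cref{lem:construct_min_realization}, I triangulate each spine minus its nerves, build the auxiliary graph $H$, and run a graph search from the outer-face vertex. $G_F$ is connected exactly when every vertex of $G_{F}$ is reached through nerve-avoiding paths that each pass through a single spine. I can simply invoke \Cref{lem:construct_min_realization} with $\spineset=\spines{F}$, at cost $O(n^2\log n)$ per face. If some $G_F$ is disconnected, \Cref{lem:the-hard-obstruction} certifies $\opt\ge\frac1{27}\Lambda$ and I stop. Otherwise every face passes, and the sequence $(\J_r,\F_r)_{r\in\mathcal{R}}$ is consistent by definition.

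The main obstacle is the running-time accounting, which must come to $O(n^5\log n)$. A scale has $O(n)$ faces, each costing $O(n^2\log n)$, for $O(n^3\log n)$ per scale; over $O(n^2)$ scales this gives $O(n^5\log n)$. To justify the per-face bound, I note that faces are interior-disjoint pieces of $P$. The spines and nerves inside a face can be obtained by clipping the supports $\supp{\Lambda^{-1}r}$ and $\supp{\Lambda^{-2}r}$ to $F$. Even without exploiting that the faces' total complexity is $O(n)$, charging the linear-size graph $H$ and the $O(n)$ shortest-path searches per face stays within $O(n^2\log n)$. A second point to check is that the graph search decides connectivity of $G_F$ exactly, that is, that adjacency in $G_F$ coincides with reachability in $H$ through a single spine. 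This already holds by construction in \Cref{lem:construct_min_realization}.
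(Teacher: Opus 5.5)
Your proposal is correct and follows the paper's proof essentially step for step: compute the compressed scale sequence via \Cref{lem:scale_sequence}, build a junction tree for every polygon of each $\supp{r}$ via \Cref{thm:thm-junction-tree}, and invoke \Cref{lem:construct_min_realization} with $\spineset=\spines{F}$ on every face, with the same $O(n^2)\cdot O(n)\cdot O(n^2\log n)=O(n^5\log n)$ accounting. Your remark that the junction-tree computations over all polygons of one scale sum to $O(n\log n)$ tightens the paper's per-polygon bound slightly, but the final running time does not depend on it.
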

\begin{proof}
    We first compute a compressed scale sequence in $O(n^2 \log n)$ time (\Cref{lem:scale_sequence}). For each of the $O(n^2)$ values $r$ in the sequence, we construct for each weakly simple polygon $P$ in the $r$-support $\supp{r}$ a $(12\Lambda r,\frac{2}{3}\Lambda^2r,24(\Lambda^2\log n)r)$-junction tree of $P$ in $O(n\log n)$ time (\Cref{thm:thm-junction-tree}). For each face in the junction tree, we then call~\Cref{lem:construct_min_realization} to either certify that $\opt \geq \frac{1}{27}\Lambda$, or certify that the spine-cutting graph $G_F$ is connected in $O(n^2\log n)$ time. 
    As there are $O(n)$ faces in total in the junction tree for a single value $r$, and there are $O(n^2)$ values of $r$ that are being considered, the total running time is thus $O(n^2\cdot (n^3\log n))=O(n^5\log n)$. 
\end{proof}

\section{The algorithm}\label{sec:algorithm}

Using our structural results and definitions from \Cref{sec:obstructions} and \Cref{sec:consistent}, we are finally able to present our main algorithm. Given a consistent compressed scale sequence, it constructs a tree with dilation $O(\max\{n/\Lambda,\Lambda^{14}\})$. Overall, this yields an $\tilde{O}(n^{14/15})$-approximation for the minimum dilation tree for any point set 
in $\mathbb{R}^2$.

\subsection{Description of the algorithm}\label{sec:description_of_algorithm}

The pseudocode for the algorithm can be found in \Cref{alg:the_alg}. The algorithm is parameterized by a scale-separation factor $\Lambda \geq 5$, which we will later set to approximately $n^{1/15}$. We first check if either of our lower bounds apply: Using~\Cref{lem:verify_lambda_thin} we can verify that $S$ is $\Lambda$-thin or $\opt \geq 2 \Lambda$, and using~\Cref{thm:construct_consistent_sequence} we can then either construct a consistent sequence $(\J_r,\F_r)$ or certify that $\opt \geq \frac{1}{27}\Lambda$. In the two cases where we find a lower bound on $\opt$, we return the Euclidean minimum spanning tree. In these cases, the minimum spanning tree has an approximation ratio of at most $\frac{n-1}{\opt}\leq\frac{n-1}{\min\{2\Lambda,\Lambda/27\}}\leq27n^{14/15}$. Otherwise, the algorithm consists of two sweeps over the scales in the compressed scale sequence.

\mysubpara{The cutting phase.}
The first sweep, called the cutting phase, processes the scales from coarse to fine and constructs the ``cuts'' $\mathcal{C}_r$, which are given by a well-spaced minimal realization of the spine-cutting graph, used by the algorithm. These cuts determine how lower-scale edges of the tree $T$ are allowed to pass through each face before actually constructing the tree $T$. At scale $r$, we consider the subpolygons $P$ that are created by cutting each polygon in $\supp{r}$ along the cuts $\mathcal{C}_{\Lambda r}$ inherited from scale $\Lambda r$. Slightly abusing notations, we refer to such a subpolygon as a subpolygon in $\supp{r}\setminus \mathcal{C}_{\Lambda r}$. If there is no such scale $\Lambda r$ in the compressed scale sequence~$\mathcal{R}$, then we consider $\mathcal{C}_{\Lambda r}$ to be the empty set. For every face $F$ in $\F_r$, we consider the subset~$\spineset$ of $F$-spines that are contained in the subpolygon $P$, and construct a well-spaced minimal realization of the spine-cutting graph $G_{F,\spineset}$. Recall that the cuts in $\mathcal{C}_{\Lambda r}$ do not intersect the nerves of $\supp{\Lambda r}$, which are connected components of $\supp{\Lambda^{-1} r}$, and thus do not intersect the spines of $\supp{r}$. Thus, an $F$-spine of a face $F \in \F_r$ is either fully contained in $P$ or disjoint from $P$.

\mysubpara{The construction phase.}
The second sweep, called the construction phase, processes the scales from fine to coarse and constructs the tree $T$ in a single pass. Initially, the edge set of the tree $T$ is empty. At scale $r$, we work separately inside each subpolygon in $\supp{r} \setminus \mathcal{C}_{\Lambda r}$. For such a subpolygon $P$, we will construct a graph $T_P$, whose edges we include in $T$ at the end of the consideration of $P$. \Cref{fig:star_connections} illustrates the edges that we add to $T$ in the construction phase for $r$. Formally, the input to the construction phase for scale $r$ and a subpolygon $P$ is:
\begin{compactenum}
    \item A $(12\Lambda r,\frac{2}{3}\Lambda^2r,24\Lambda^2r)$-junction tree $(\J_r,\mathcal{F}_r)$ restricted to the subpolygon $P$.
    \item For each face $F \in \F_r$, a well-spaced minimal realization $C$ of the spine-cutting graph $G_{F,\spineset}$, where $\spineset$ is the set of $F$-spines contained in $P$.
    \item A forest $T$ that consists of a tree for every subpolygon $P'$ in  $\supp{\Lambda^{-1}r} \setminus \mathcal{C}_r$ that connects all points in $S \cap P'$.
\end{compactenum}
And the end of the construction phase for scale $r$ and subpolygon $P$, we have that the subgraph of the graph $T$ contained in $P$ is a tree. 


\begin{figure}
    \centering
    \includegraphics{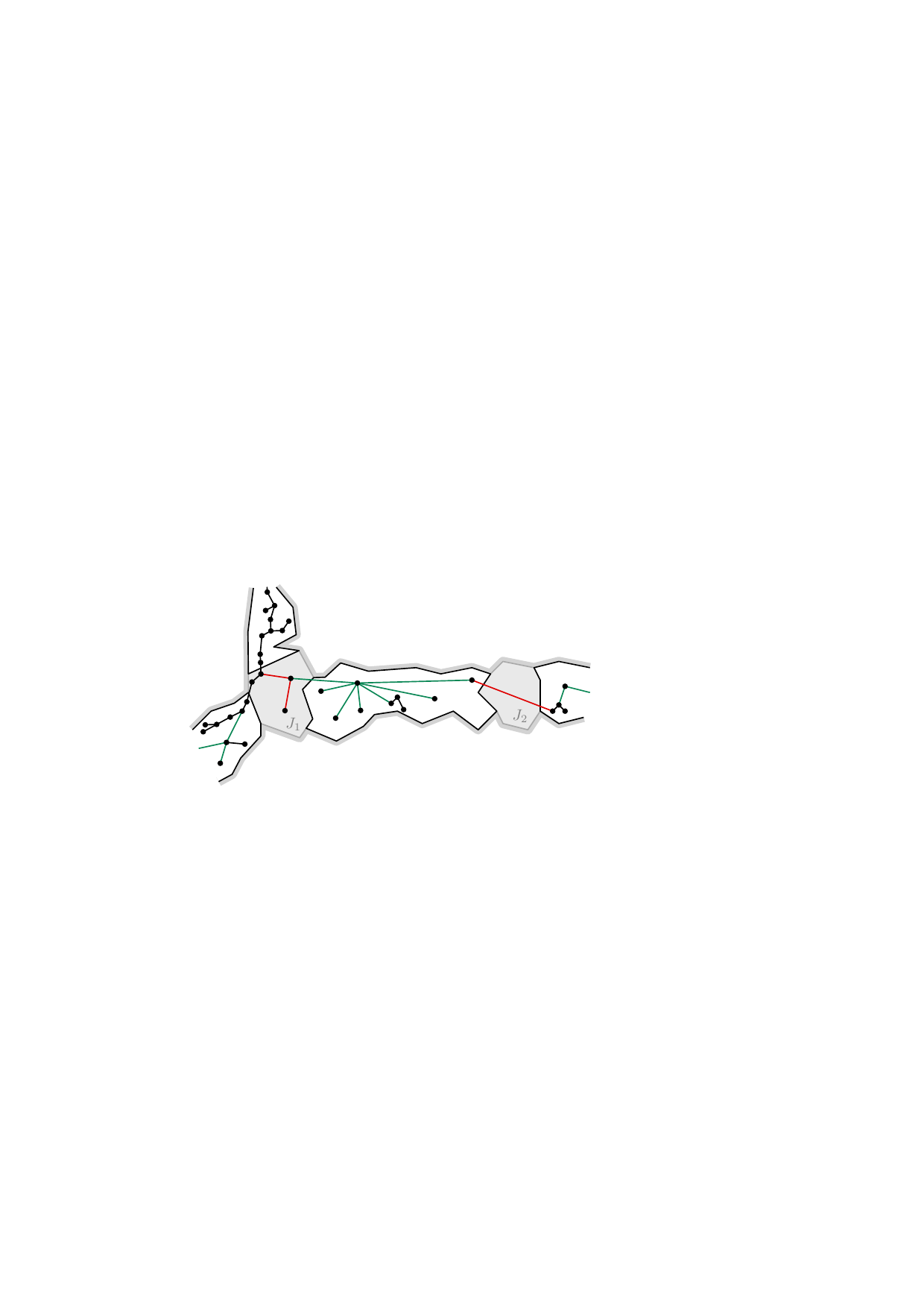}
    \caption{The star connections added in the construction phase of the algorithm to connect the points in $S \cap P$. The black edges are in the forest $T$ constructed before the construction phase for $r$. In the construction phase for $r$, we add the red edges when considering junction $J_1$ and $J_2$, and add the green edges when we consider each of the faces adjacent to these junctions.}
    \label{fig:star_connections}
\end{figure}

In the construction phase, we first process each junction $J$ in $\J_r$. The goal is to connect different connected components of $T$ that are inside, or close to, the junction (the red edges in \Cref{fig:star_connections}). We consider two different cases, based on whether the $(\Lambda^{-1}r)$-support intersects $J \cap P$ or not. If these do intersect, we add to $T_P$ a star that connects a point from each subpolygon $P'$ in $\supp{\Lambda^{-1} r} \setminus\mathcal{C}_r$ that intersect the junction $J$ (restricted to $P$). In particular, we define the set $S_J$ to include from each subpolygon $P'$ the point in $S \cap P'$ closest (in terms of Euclidean distance) to the junction $J$, and add to $T_P$ a star on $S_J$. If  $(\Lambda^{-1}r)$-support does not intersect $J \cap P$, then we add to $T_P$ a star connecting a single point in $S$ from every face adjacent to the junction $J$. Formally, we define the set $S_J$ to include from each face adjacent to the junction $J$ the point closest to $J$, and again add a star on $S_J$ to $T_P$.
Observe that in this phase we consider connected components in $T$, which does not include the edges in $T_P$ yet, so we ignore star edges inserted in other junctions of the same phase.

We then process each face $F$ in $\F_r$. For a face $F$, we consider all connected components of~${T \cup T_P}$ that are contained in $P$ and intersect $F$ or one of its neighboring junctions, and add a star to $T_P$ connecting all of these components (the green edges in \Cref{fig:star_connections}).

After considering subpolygon $P$, we add all edges in $T_P$ to the tree $T$. For the final largest scale $r$, $r$ is greater than the distance between any pair of points in $S$ and the set $\mathcal{C}_{\Lambda r}$ is empty, and we thus consider only a single polygon $P$ that contains all points in $S$. This implies that the final output $T$ is connected.


\begin{algorithm}[t]
\caption{An $\tilde{O}(\max\{n/\Lambda,\Lambda^{14}\})$-approximation algorithm for the minimum dilation tree.}
\label{alg:the_alg}
\begin{algorithmic}[1]
\Require Parameter $\Lambda$, point set $S$
\Ensure Tree $T$ on $S$
\State Compute compressed scale sequence $\mathcal{R}$
\State Certify that either $S$ is $\Lambda$-thin, or that $\opt\geq2\Lambda$ and output MST
\State Compute a consistent sequence $(\J_r,\F_r)$, or certify that $\opt\geq\frac{1}{27}\Lambda$ and output MST
\For{scale $r$ in $\mathcal{R}$ in reverse order}\Comment{\emph{cutting phase for $r$}}
    \State $\mathcal{C}_r \gets \emptyset$
    \State \textbf{if} $\mathcal{C}_{\Lambda r}$ is not defined \textbf{do} $\mathcal{C}_{\Lambda r}\gets\emptyset$
    \For{subpolygon $P$ in $\supp{r} \setminus \mathcal{C}_{\Lambda r}$}
        \For{face $F\in\mathcal{F}_r$}
            \State Let $\spineset$ be the subset of $F$-spines contained in $P$
            \State Compute a well-spaced minimal realization $C$ of the spine-cutting graph in $G_{F,\spineset}$
            \State $\mathcal{C}_{r} \gets \mathcal{C}_r \cup C$
        \EndFor
    \EndFor
\EndFor
\State $T\gets(S,\emptyset)$
\For{scale $r$ in $\mathcal{R}$}\Comment{\emph{construction phase for $r$}}
    \For{subpolygon $P$ in $\supp{r} \setminus \mathcal{C}_{\Lambda r}$}
        \State $T_P\gets(S \cap P,\emptyset)$
        \For{junction $J$ in $\J_r$}
        \State $S_J \gets \emptyset$
        \If{ $\supp{\Lambda^{-1} r} \cap P \cap J \neq \emptyset$ }
            \State Let $P_J$ be all polygons in $\supp{\Lambda^{-1}r}\setminus\mathcal{C}_r$ that are in $P$ and intersect $J$
            \State Add to $S_J \subseteq S$, for each polygon in $P_J$, the point in $P_J$ closest to $J$
        \Else 
            \State Add to $S_J \subseteq S$, for each face in $\face(J)$, the point in $S \cap F \cap P$ closest to $J$
        \EndIf
        \State Add to $T_P$ a star connecting the points in $S_J$ 
        \EndFor
        \For{face $F$ in $\F_r$}
            \State Add to $T_P$ a star connecting all connected components of $T \cup T_P$ that include a \hspace*{1.6cm} 
            point in $(S \cap F \cap P) \cup \{S_J : J \in \junc(F)\} $ 
        \EndFor
        \State $T\gets T\cup T_P$
    \EndFor
\EndFor
\State \textbf{return} $T$
\end{algorithmic}
\end{algorithm}

\subsection{Correctness}\label{sec:correctness_algorithm}

We prove correctness of the algorithm using the following four invariants that hold at the end of each construction phase, under the condition that $\Lambda \geq 5$. After the construction phase for $r$, we have that:
\smallskip
\begin{description}[nosep]
    \item[Connectedness]\phantomsection\label{inv:conn} The graph $T$ consists of a tree for each subpolygon $P$ in $\supp{r} \setminus\mathcal{C}_{\Lambda r}$ that connects exactly $S\cap P$.
    \item[Packedness]\phantomsection\label{inv:pack} Edges added in phase $r$ are either
    \begin{compactenum}
        \item junction-star edges of length at most $18\Lambda r$, or
        \item face-star edges of length at most $27(\Lambda^2\log n)r$ that connect points in different connected components of~$\supp{\Lambda^{-1}r}$.
    \end{compactenum}
    \item[Locality]\phantomsection\label{inv:loc} For any two points $p,q$ in the same subpolygon in $\supp{r}\setminus\mathcal{C}_{\Lambda r}$ that are (i) in the same face, (ii) in the same junction (iii) in a face and one neighboring junction, or (iv) in neighboring faces
    of the junction tree of $\supp{r}$, all points of $S$ in the shortest path $\pi_T(p,q)$ are contained in $B_{\supp{r}}(p,(102\log n + 6)\Lambda^2r)$.
    \item[Dilation]\phantomsection\label{inv:dil} The dilation of points $p,q$ in the same subpolygon in $\supp{r}\setminus\mathcal{C}_{\Lambda r}$ that have Euclidian distance at most $r$ is at most $209\,034\,929\,709\,\log^3 n \max(n/\Lambda,\Lambda^{14})$.
\end{description}
\smallskip
Throughout this section we assume that $\Lambda \geq 5$. Next, we prove that each of the four invariants hold throughout the algorithm. The invariants then directly imply that when the algorithm terminates, the produced graph $T$ is a tree on $S$ with bounded dilation.

\begin{lemma}[{\hyperref[inv:conn]{Connectedness}} invariant]
    The graph $T$ at the end of the construction phase for $r$ consists of a forest, where each tree connects the subset of points in $S$ contained in a subpolygon $P$ in $\supp{r} \setminus \mathcal{C}_{\Lambda r}$.
\end{lemma}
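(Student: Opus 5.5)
We argue by induction over the scales of the compressed scale sequence $\mathcal{R}$, processed in the order of the construction phase. The claim has two parts that we handle separately: each subpolygon $P$ of $\supp{r}\setminus\mathcal{C}_{\Lambda r}$ induces a connected subgraph of $T$ on exactly $S\cap P$, and $T$ is acyclic. For the base case we take the scale preceding the smallest $r_0$ (the smallest interpoint distance, or $\Lambda^{-1}r_0$). There every subpolygon of $\supp{\Lambda^{-1}r_0}\setminus\mathcal{C}_{r_0}$ is a single point, so the empty edge set is a forest of singletons. In the inductive step we assume input~3 holds, i.e.\ $T$ is a forest whose trees span exactly the sets $S\cap P'$ for the subpolygons $P'$ of $\supp{\Lambda^{-1}r}\setminus\mathcal{C}_r$. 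When consecutive scales differ by a factor greater than $\Lambda^{10}$, we first note that no new structure appears in between: the supports between the two scales coincide up to cuts, so the same forest serves as input.

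\textbf{Nesting.} First we check that every $P'$ lies in a unique subpolygon $P$ of $\supp{r}\setminus\mathcal{C}_{\Lambda r}$. Since $\supp{\Lambda^{-1}r}\subseteq\supp{r}$, each $P'$ lies in some polygon of $\supp{r}$. We also need $P'$ not to be split by $\mathcal{C}_{\Lambda r}$. By construction (property~\ref{itm:not_intersect_nerve} of \Cref{def:minimal_realization}), $\mathcal{C}_{\Lambda r}$ avoids the nerves at scale $\Lambda r$, and these are components of $\supp{\Lambda^{-1}r}$. So $\mathcal{C}_{\Lambda r}\cap P'=\emptyset$ and $P'$ is connected in $\supp{r}\setminus\mathcal{C}_{\Lambda r}$. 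It follows that the trees of $T$ are grouped by subpolygon $P$, and that $T_P$ only uses points of $S\cap P$. For the latter we observe that every point selected into some $S_J$, and every component used in a face star, is explicitly restricted to $P$.

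\textbf{Connectivity within $P$.} Every point of $S\cap P$ lies in a junction or a face of $(\J_r,\F_r)$ restricted to $P$. The face step joins all components of $T\cup T_P$ that meet $F$ or its neighbouring junction representatives $S_J$. Because the faces and junctions intersect along the tree-shaped partition of $P$, chaining these stars along the junction tree connects all components inside $P$. The step needing care is that the junction tree restricted to $P$ remains connected, since cuts could separate it. Here we use that the cut pieces of $P$ are exactly the subpolygons, and that the star for $F$ is taken over the components inside $P$. Two faces adjacent through $J$ share the representative points in $S_J$, and these are already joined by the junction star.

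\textbf{Acyclicity.} This is the main obstacle. Each star is added on a set of representatives, one per distinct current component: one per $P'$ in the junction case, and one per connected component in the face case. A star on representatives of pairwise distinct components of a forest keeps it a forest. The difficulty lies in the junction step, which looks at components of $T$ \emph{without} $T_P$. Two junction stars could therefore both join the same pair of components $P'_1,P'_2$ that both touch $J$ and $J'$, which would create a cycle. We first try to exclude this using the $\beta$-separation $\frac23\Lambda^2 r$ of junctions: a polygon $P'$ meeting two junctions would have to cross a face. If that fails, we restrict the junction stars to components of $T\cup T_P$ at the time the star is added, exactly as is done for faces. This treats each star as a union over distinct components, and acyclicity follows. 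Finally, no edge leaves $P$, so the trees of different subpolygons stay disjoint.
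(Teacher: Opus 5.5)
Your nesting step matches the paper's, but the other two parts of the proof each have a genuine gap.

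\textbf{Acyclicity.} You correctly isolate the danger. Junction stars are formed on components of $T$ \emph{without} $T_P$, so a cycle can alternate between old components $P'_1,P'_2,\dots$ of $\supp{\Lambda^{-1}r}\setminus\mathcal{C}_r$ and star edges of distinct junctions. However, you do not rule such a cycle out. The $\beta$-separation attempt yields nothing. Components of $\supp{\Lambda^{-1}r}\setminus\mathcal{C}_r$ that meet two junctions are perfectly possible, since that is what spines crossing a face look like, and separation does not prevent two of them running in parallel between $J$ and $J'$. Your fallback of recomputing junction stars on components of $T\cup T_P$ changes the algorithm, so it does not prove the lemma for the algorithm as stated. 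The missing ingredient is the cut $\mathcal{C}_r$ itself, and this is how the paper argues:
\begin{itemize}
\item A junction with $\supp{\Lambda^{-1}r}\cap P\cap J=\emptyset$ cannot lie on such a cycle, because its star is the only passage between its adjacent faces.
\item Otherwise the cycle runs through uncut pieces of spines between junctions. Inside some face $F\in\F_r$ it then encloses a vertex of $G_{F,\spineset}$ that is separated from the outer-face vertex by spine pieces which $\mathcal{C}_r$ does not cross.
\item This contradicts property~3 of the minimal realization, namely that the subgraph of $G_{F,\spineset}$ induced by the realization is connected.
\end{itemize}

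\textbf{Connectivity.} Chaining stars along the junction tree requires every junction $J$ on the way to have $S_J\neq\emptyset$. Suppose $\supp{\Lambda^{-1}r}\cap P\cap J=\emptyset$ and no face in $\face(J)$ contains a point of $S\cap P$. Then $S_J=\emptyset$, no star crosses $J$, and points of $S\cap P$ on the two sides of $J$ would remain disconnected. Your remark that the cut pieces of $P$ are exactly the subpolygons does not exclude this situation. The paper excludes it using the well-spacedness of $\mathcal{C}_{\Lambda r}$. Any two points of $S\cap P$ are linked by a chain of points of $S\cap P$ with steps of length at most $3\Lambda r$. Distinct junctions are at geodesic distance at least $\frac23\Lambda^2 r>3\Lambda r$, so this chain must place a point in every intermediate face of the junction-tree path. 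Choosing a closest pair of faces meeting two different components then gives a contradiction, because the first intermediate face would have to be empty.
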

\begin{proof}
    We prove the lemma by induction on $r$. We assume without loss of generality that for the smallest scale $r_0$ we have $\supp{r_0} = \emptyset$ and the base case thus holds trivially. Now, suppose the invariant holds for all scales~$r'$ smaller than~$r$.
    
    First, observe that in the construction phase for $r$, we consider each subpolygon $P$ in the partition of $\supp{r}$ by $\mathcal{C}_{\Lambda r}$ separately. Therefore, no edges between any pair of points $p,q$ with $p$ and $q$ from different subpolygons in the partition is added to $T$. Furthermore, by property~\ref{itm:not_intersect_nerve} of the minimal realization (\Cref{def:minimal_realization}), the cuts in $\mathcal{C}_{\Lambda r}$ do not intersect $\supp{\Lambda^{-1} r}$, and hence also not $\supp{\Lambda^{-2}r}\subset\supp{\Lambda^{-1}r}$, and thereby there are also no edges from any smaller scale $r' < r$ between points in different subpolygons.

    Next, consider the graph $G$ that is the subgraph of the graph $T$ at the end of the construction phase for $r$ contained in a subpolygon $P$. 

    We first show that $G$ does not contain any cycles. Suppose for contradiction that $G$ is not a forest, and thus contains a cycle. By induction, at the start of phase $r$, we have a forest in every spine (which is a subset of $\supp{\Lambda^{-1}r}$) respecting the cut $\mathcal{C}_r$. Hence, any cycle must come from the addition of edges in the construction phase for~$r$. When we add edges for each face in $\F_r$, we only add a star between connected components in $T \cup T_P$, i.e., we consider \emph{the entire} forest constructed so far. We thus do not create a cycle due to any of these stars.
    It follows that any cycle must have been created when we added stars for each of the junctions.
    First note that the points in the subset $S' \subset S$ between which we add a star for a junction must all be in different connected components of the tree constructed so far: All points we add to~$S'$ are in different subpolygons of $\supp{\Lambda^{-1}} \setminus \mathcal{C}_r$ and are thus in different connected components by the induction hypothesis.
    Hence, there must be a minimal cycle $W$ that uses star edges of at least two different junctions in $\J_r$, see \Cref{fig:no_cycles_proof}. Let $J_1,J_2,\ldots,J_k$ be the junctions whose star edges are used by this cycle $W$, and let $W_1,W_2,\ldots,W_k$ be the subpaths of $W$ from each star edge to the next, i.e., $W_1$ connects a point (close to) $J_1$ with a point (close to) $J_2$. Note that the edges in the paths $W_i$ all come from earlier phases.

    \begin{figure}
        \centering
        \includegraphics{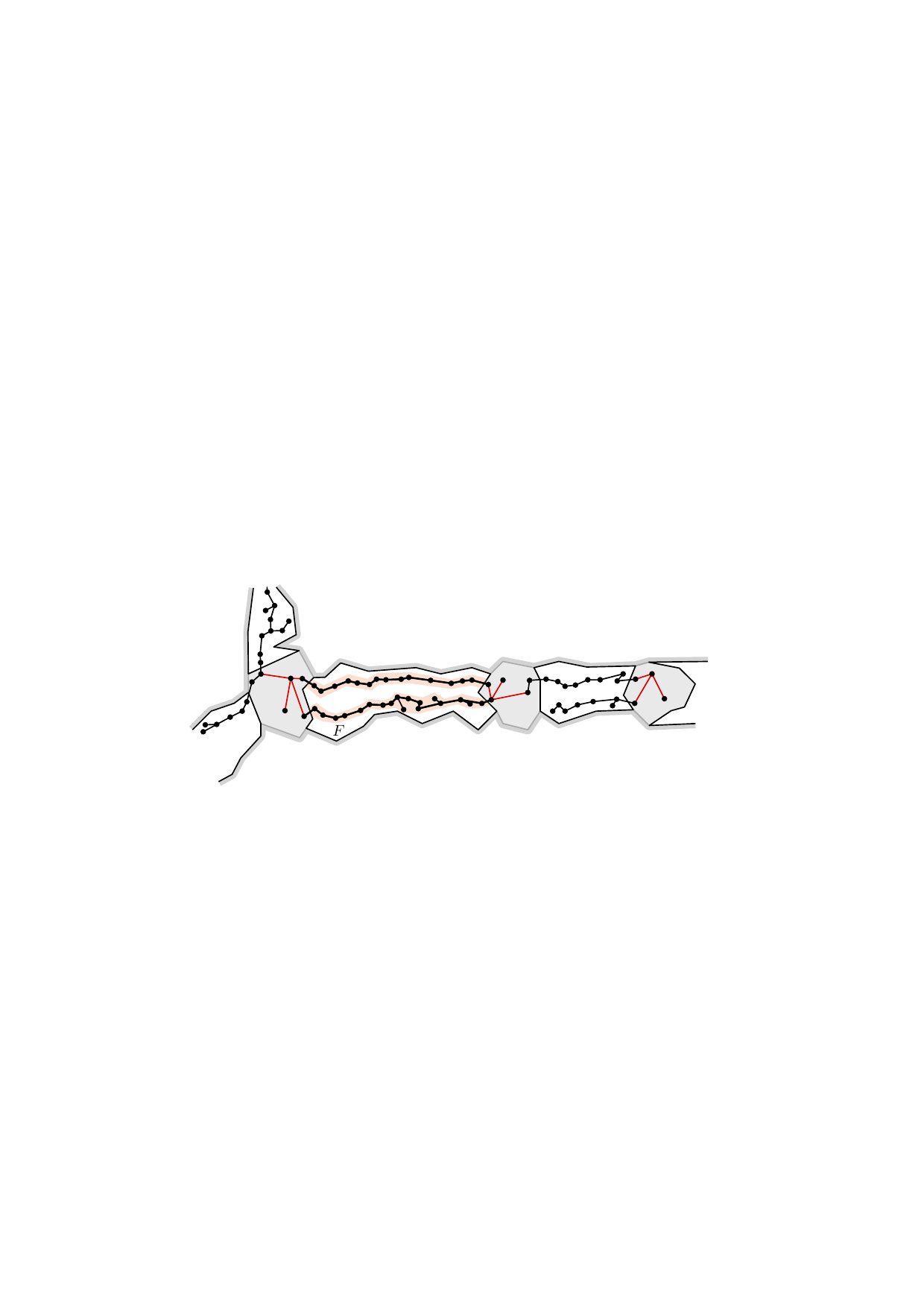}
        \caption{Any cycle in $T$ must be created using junction-stars. However, this would imply that either of the indicated $F$-spines would have been cut by the minimal realization.}
        \label{fig:no_cycles_proof}
    \end{figure}
    
    By the properties of a star, each star is traversed by $W$ at most once. Further, a star from a junction $J$ for which $\supp{\Lambda^{-1}r} \cap P \cap J = \emptyset$ cannot participate in this cycle, as this star is the only way to traverse from one of its neighboring faces into another, see junction $J_2$ in \Cref{fig:star_connections}. However, this implies that there is a face $F \in \F_r$, where $W \cap F \cap P$ encloses a connected component of $G_{F,\spineset}$ by spines, where $\spineset$ is the set of $F$-spines contained in $P$. Because there is no cut of~$\mathcal{C}_{\Lambda r}$ intersecting $P$, and in particular no cut intersecting $W \cap F \cap P$, this connected component being enclosed by spines is a witness to the subgraph of $G_{F,\spineset}$ induced by the minimal realization computed in $F$ to not be connected,  contradicting the property~\ref{itm:tree} of a minimal realization. 
    We conclude that~$G$ does not contain any cycles.

    What remains is to show that $G$ is connected, i.e., $G$ is a tree. Suppose $G$ has at least two connected components $G_1,G_2$ for which there is no path connecting them in $G$. Let $F_1, F_2$ be the closest pair of faces in the junction-tree for which $F_1 \cap G_1 \neq \emptyset$ and $F_2 \cap G_2 \neq \emptyset$. Because for each face $F \in \F_r$ we add a star to $T_P$ connecting all connected components of $T \cup T_P$ that contain a point in $S \cap F \cap P \cup \{S_J : J \in \junc(F)\}$, it must be that $F_1 \neq F_2$.
    Consider the path~${J_1,F_1',J_2, \ldots, F_k',J_{k+1}}$ connecting $F_1$ and $F_2$ in the junction tree. 
    
    Suppose $F_1' \cap S \cap P \neq \emptyset$ and let $x \in F_1' \cap S \cap P$. Because the set $S_{J_1}$ contains at least one point, the star we add for the face $F_1'$ creates a path between $x$ and a point in $S_{J_1}$. Similarly, the star for face $F_1$ ensures that there is a path in $T$ from $G_1$ to a point in $S_{J_1}$, implying that~$x$ is in the connected component $G_1$, contradicting the definition of $F_1$. So, it must be that $F_1' \cap S \cap P = \emptyset$. Note that the same argument implies that there is at least one face between~$F_1$ and $F_2$ in the junction tree.
    
    Because each minimal realization in $\mathcal{C}_{\Lambda r}$ is well-spaced, and $P$ is a subpolygon in $\supp{r} \setminus \mathcal{C}_{\Lambda r}$, there is a path whose vertices are in $S \cap P$ connecting $G_1$ and $G_2$ with maximal edge length $3\Lambda r$. As we have a $(12\Lambda r,\frac{2}{3}\Lambda^2 r,24(\Lambda^2 \log n )r)$-junction tree of $\supp{r}$, the geodesic distance between $J_1$ and $J_2$ is at least $\frac{2}{3}\Lambda^2 r$. As $\Lambda \geq 5$, this distance is at least $\frac{10}{3} \Lambda > 3\Lambda r$. There is thus at least one point on this path in $F_1'\cap S \cap P$, a contradiction.
\end{proof}

\begin{lemma}[{\hyperref[inv:pack]{Packedness}}  invariant]
    The edges added in the construction phase for $r$ are either
        \begin{compactenum}
        \item junction-star edges of length at most $18\Lambda r$, or
        \item face-star edges of length at most $27(\Lambda^2\log n)r$ that connect points in different connected components of $\supp{\Lambda^{-1}r}$.
    \end{compactenum}
\end{lemma}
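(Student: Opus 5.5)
The proof splits according to the two kinds of stars added to $T_P$ in the construction phase for $r$: junction stars (processing each $J\in\J_r$) and face stars (processing each $F\in\F_r$). For each kind I bound the Euclidean length of an edge by a geodesic distance inside $\supp{r}$, since Euclidean length is at most geodesic distance. I then use the diameter bounds of the $(12\Lambda r,\frac23\Lambda^2r,24(\Lambda^2\log n)r)$-junction tree from \Cref{thm:thm-junction-tree}.

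\textbf{Junction-star edges.} Every point of $S_J$ is, by construction, the point closest to $J$ in either a polygon of $\supp{\Lambda^{-1}r}$ intersecting $J\cap P$, or a face in $\face(J)$ restricted to $P$.

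In the first case the polygon intersects $J$, so its closest point lies either inside $J$ or within one support edge of it. Such an edge has length at most $\Lambda^{-1}r$, because the boundary edges of the support polygon are short.

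In the second case $J\cap P$ contains no point of $\supp{\Lambda^{-1}r}$. The chosen point of $S\cap F\cap P$ is still near $J$: by \Cref{obs:facts-on-the-r-support} the face boundary adjacent to $J$ lies on edges of $\supp{r}$ of length at most $r$. So the chosen point is within geodesic distance $O(\Lambda r)$ of $J$. I expect the junction piece itself (diameter $12\Lambda r$) together with one adjacent tree-partition piece (diameter at most $12\Lambda r$) to give slack.

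Two points of $S_J$ are therefore at distance at most $\mathrm{diam}(J)+2\cdot(\text{offset})$. The aim is to show the offset is at most $3\Lambda r$, which gives $12\Lambda r+6\Lambda r=18\Lambda r$. The delicate step is the offset in the second case. Here I plan to use that the closest point of $S$ in a neighbouring face is at most half a partition piece away. The argument is that the portal separating $J$ from $F$ has length at most $6\Lambda r$ by \Cref{lem:floodfill}, and the nearest point of $S$ on the boundary of $P$ along that portal lies within $3\Lambda r$ of it.

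\textbf{Face-star edges.} The star for $F$ connects components that contain a point of $(S\cap F\cap P)\cup\bigcup_{J\in\junc(F)}S_J$, so the centre and the leaves can be chosen among these representatives. All of them lie in $F$ or in an adjacent junction, within the offset above. Hence every pair is within $\gamma+2(12\Lambda r+3\Lambda r)=24(\Lambda^2\log n)r+30\Lambda r$, and this is at most $27(\Lambda^2\log n)r$ for $\Lambda\ge 5$ and $n\ge2$.

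For the second property I argue that the endpoints of a face-star edge lie in different components of $\supp{\Lambda^{-1}r}$. The endpoints belong to distinct connected components of $T\cup T_P$. By the Connectedness invariant at scale $\Lambda^{-1}r$, each subpolygon of $\supp{\Lambda^{-1}r}\setminus\mathcal{C}_r$ inside $P$ is already spanned by a single tree. So points of one connected component of $\supp{\Lambda^{-1}r}$ could only be in different components of $T$ if a cut of $\mathcal{C}_r$ separated them.

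I expect this to be the main obstacle. I would handle it by choosing the star centre and leaves appropriately, and by using the minimality property~\ref{itm:minimal} together with the junction stars. Each piece of a spine cut by $\mathcal{C}_r$ meets a junction and is therefore already joined to that junction's star. So pieces of the same $F$-spine are already in one component of $T\cup T_P$ before the face star is added, and the face star only joins points from different components of $\supp{\Lambda^{-1}r}$.
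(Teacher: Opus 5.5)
\textbf{Where the argument breaks: the $3\Lambda r$ offset.} Your overall structure matches the paper's: junction diameter $12\Lambda r$ plus two offsets of $3\Lambda r$, face diameter plus offsets, and property~\ref{itm:minimal} for the component claim. However, your justification of the $3\Lambda r$ offset does not work, because it ignores the cuts. In the algorithm, $P$ is a subpolygon of $\supp{r}\setminus\mathcal{C}_{\Lambda r}$, not a polygon of $\supp{r}$. Your second case is exactly the situation a cut creates. Every point of $S$ lies in $\supp{\Lambda^{-1}r}$ (possibly as a degenerate polygon). Moreover, a junction that is not sliced by $\mathcal{C}_{\Lambda r}$ always contains a point of $S$: for instance, between its parent portal and a child portal it follows the support boundary, whose edges have length at most $r$, over geodesic length at least $\Lambda r$. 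Hence $\supp{\Lambda^{-1}r}\cap P\cap J=\emptyset$ only occurs when a cut has separated $J\cap P$ from all points of $J$. In that case the portal between $J$ and $F$, its endpoints, and the adjacent support edges need not lie in $P$ at all. So the $6\Lambda r$ portal bound of \Cref{lem:floodfill} says nothing about where the point of $S\cap F\cap P$ closest to $J$ is. (\Cref{obs:facts-on-the-r-support} does not help either: the boundary between $J$ and $F$ is a portal, not a chain of support edges.) The missing ingredient is the one the paper invokes, namely the well-spacedness of the realizations. The cuts of $\mathcal{C}_{\Lambda r}$ run through midpoints of Delaunay edges of length at most $3\Lambda r$ inside the spines. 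Therefore the points of $S$ in each resulting piece stay linked by geodesic hops of at most $3\Lambda r$. This is where the offset $3\Lambda r$, and hence $12\Lambda r+2\cdot 3\Lambda r=18\Lambda r$, comes from.

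\textbf{Two smaller issues.} First, your face-star bound $24\Lambda^2\log n+30\Lambda\le 27\Lambda^2\log n$ requires $\Lambda\log n\ge 10$, which fails, e.g., for $\Lambda=5$ and $n\le 3$. The paper centres the star at a point of $S\cap F\cap P$, which exists by well-spacedness, so only one junction offset is paid: $24(\Lambda^2\log n)r+12\Lambda r+3\Lambda r\le 27(\Lambda^2\log n)r$.

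Second, for the component property you follow the paper's appeal to property~\ref{itm:minimal}, but your last inference does not follow. If a spine from $J_1$ to $J_2$ is cut once, its two pieces hang off two different junction stars. Nothing in properties 1--4 says these stars are already joined before the face star is added. What is really needed is that cutting along the realization does not split any component of the union of the junctions and the spines of $F$. This holds for the realization built in \Cref{lem:construct_min_realization}: each added path joins a not-yet-reached region to the reached ones, and cutting along an arc between two distinct complementary regions cannot split a component. It is not implied by properties 1--4 alone. You also need the conclusion for an entire component of $\supp{\Lambda^{-1}r}$, not just for a single $F$-spine.
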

\begin{proof}
    First, consider a junction $J$ and the corresponding set $S_J$. Any edge in a star of junction~$J$ connect a pair of points in $S_J$. All points in $S_J$ have geodesic distance at most $3\Lambda r$ from the junction via the well-spacedness of $\mathcal{C}_{\Lambda r}$. As the geodesic diameter of the junction is bounded by~$12 \Lambda r$, the pairwise distance between any two points in $S_J$ is at most $12\Lambda r + 2\cdot(3\Lambda r)$, implying the claim. 

    Next, we bound the length of edges in face-stars. As the diameter of a face is at least $\frac{2}{3}\Lambda^2 r$ and $\Lambda \geq 5$, the well-spacedness implies that there is at least one point of $S$ in every face. Because the diameter of a face is bounded by $24(\Lambda^2\log n)r$, the distance from any point in the face to a point in $S_J$, with $J$ an adjacent junction, is at most $24(\Lambda^2\log  n)r + 3 \Lambda r + 12 \Lambda r \leq 27(\Lambda^2\log n)r$.
    
    If points in a face are not connected before the face star insertion, then they are from different connected components of $\supp{\Lambda^{-1}r}$ by property~\ref{itm:minimal} of the minimal realization $\mathcal{C}_r$. Hence any face-star edge connects two such points. This even holds across different subpolygons in $\supp{r}\setminus\mathcal{C}_{\Lambda r}$, as points from different such subpolygons also have distance at least $\Lambda^{-1}r$.
\end{proof}

\begin{lemma}[{\hyperref[inv:loc]{Locality}} invariant]
    Let $T$ be the graph at the end of the construction phase for $r$. For any two points $p,q$ in the same subpolygon $P$ in $\supp{r}\setminus\mathcal{C}_{\Lambda r}$ that are (i) in the same face, (ii) in the same junction (iii) in a face and one neighboring junction, or (iv) in neighboring faces
    of the junction tree of $\supp{r}$, all points of $S$ in the shortest path $\pi_T(p,q)$ between $p$ and $q$ in $T$ are contained in $B_{\supp{r}}(p,(102\log n + 6)\Lambda^2r)$.
\end{lemma}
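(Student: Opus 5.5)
We argue by induction on the scales, in the same order as the construction phase. At the end of phase $r$, the graph $T$ restricted to $P$ is a tree (Connectedness), so $\pi_T(p,q)$ is unique and lies inside $P$. The plan is to show that $\pi_T(p,q)$ follows a short ``skeleton'' route through the junction tree of $\supp{r}$, and that every piece of tree path attached to this route stays geodesically close to it. Concretely, we fix the shortest chain of at most three faces/junctions in the junction tree containing $p$ and $q$ (cases (i)--(iv)). We then exhibit an explicit walk in $T$ from $p$ to $q$ that only visits points in this chain or in adjacent junctions. The unique tree path is a subpath of this walk, so it suffices to bound the geodesic distance from $p$ of all points on the walk.

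The walk is built from phase-$r$ star edges and lower-scale tree paths.
\begin{itemize}
\item From $p$ we use the face-star of the face $F$ containing $p$, or the junction-star of the junction containing $p$. This reaches the star centre and then a point of some $S_J$ with $J\in\junc(F)$. By Packedness and the well-spacedness bound used there, all these points lie within geodesic distance $O(\Lambda^2\log n)\,r$ of $p$. Specifically, $24\Lambda^2 r\log n$ comes from the face diameter, $12\Lambda r$ from the junction diameter, and $3\Lambda r$ from the well-spacedness slack.
\item Each face-star joins a \emph{component} of $T\cup T_P$, not a specific point. So we also need the lower-scale portion of $T$ that leads from $p$ to the component representative used by the star. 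By the construction, these components are trees of the subpolygons of $\supp{\Lambda^{-1}r}\setminus\mathcal{C}_r$ that meet $F$ or its junctions. We apply the induction hypothesis at scale $\Lambda^{-1}r$ (Locality and Connectedness) repeatedly along a chain of lower-scale faces/junctions inside this component. Each application confines the path to a ball of radius $(102\log n+6)\Lambda^2 r/\Lambda$ around its starting point.
\end{itemize}

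The key quantitative step is bounding how far such a lower-scale component can wander. Its restriction to $F$ must stay inside $F$ together with the neighbouring junctions, because $T$ restricted to a component of $\supp{\Lambda^{-1}r}\setminus\mathcal{C}_r$ lies in that polygon. By minimality of the cut $\mathcal{C}_r$ (property~\ref{itm:minimal}), each such piece meets a junction of $\junc(F)$. Hence points of the relevant lower-scale tree path are within the geodesic diameter of $F\cup\junc(F)$ plus one lower-scale locality radius. That is at most $24\Lambda^2 r\log n+2\cdot 12\Lambda r+(102\log n+6)\Lambda r$, all of order $\Lambda^2 r\log n$. In case (iv) we pass through a shared junction into a second face and pay this again. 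Summing the contributions over at most two faces, two junctions, the star edges of length at most $27(\Lambda^2\log n)r$, and the lower-scale slack yields the target radius $(102\log n+6)\Lambda^2 r$.

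I expect the main obstacle to be the second bullet above: proving that the lower-scale tree path realising the connection between $p$ and the star representative of its component does not leave the neighbourhood of $F\cup\junc(F)$. A component of $\supp{\Lambda^{-1}r}$ could in principle be a long spine running through many faces. Here I would use that the cuts $\mathcal{C}_r$ restricted to $P$ separate spines exactly at the required places, so pieces of $P'\subseteq\supp{\Lambda^{-1}r}\setminus\mathcal{C}_r$ lying in $F$ are within one face-diameter of a junction. I would also use that Locality at scale $\Lambda^{-1}r$ bounds each lower-scale segment by a factor $1/\Lambda$ smaller radius. Because of that factor, the nested contributions form a geometric series dominated by the top scale.
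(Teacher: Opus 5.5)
Your overall plan (induction over scales, applying Locality at scale $\Lambda^{-1}r$ to consecutive points of a chain inside a lower-scale component) matches the paper's. However, the step you yourself flag as the main obstacle is left unresolved, and the argument you sketch for it fails.

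Knowing that the lower-scale tree on a component $P'$ of $\supp{\Lambda^{-1}r}\setminus\mathcal{C}_r$ stays inside $P'$ confines nothing. $P'$ may leave $F$ through a junction $J\in\junc(F)$, run through a neighbouring face $F'$ and further, and come back into $F$ along a different $F$-spine. Two nearby points of $P'\cap F$ could then be joined in $T$ only by this long excursion. Property~\ref{itm:minimal} does not rule this out, since in exactly this situation every piece of $P'\cap F$ still meets a junction of $\junc(F)$. So the inference that the lower-scale path stays within the diameter of $F\cup\junc(F)$ plus one lower-scale radius is unjustified. The geometric-series remark does not help either. The chain along which you apply the induction hypothesis can be arbitrarily long, so what must be controlled is where the chain lies, not how quickly radii shrink; the hypothesis at scale $\Lambda^{-1}r$ already covers all finer scales.

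The missing idea is the paper's use of property~\ref{itm:tree} of the cuts in the \emph{neighbouring} faces, combined with well-spacedness. Well-spacedness gives a chain in $P'$ whose consecutive points are at $\supp{\Lambda^{-1}r}$-geodesic distance at most $3r<\frac{2}{3}\Lambda r$. Hence consecutive points lie in adjacent faces (and their shared junction) of the junction tree of $\supp{\Lambda^{-1}r}$, and the induction hypothesis applies to them.

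Suppose every such chain left $\overline{F}$, the union of $F$, its junctions and its neighbouring faces. Since the junction tree is a tree, the chain would cross some neighbouring face $F'$ twice.
\begin{itemize}
\item If both crossings use the same $F'$-spine, you shortcut inside it, again by well-spacedness.
\item Otherwise, the two distinct $F'$-spines, joined through $P'$ (which $\mathcal{C}_r$ does not cut), separate a vertex of the spine-cutting graph of $F'$ from the outer-face vertex. This contradicts property~\ref{itm:tree}.
\end{itemize}
Only with a chain inside $\overline{F}$ does the union of the inductive balls have radius $O(\Lambda^2 r\log n)+(102\log n+6)\Lambda r$.

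There is also a second, smaller gap in your walk construction. Face-stars join components of $T\cup T_P$, which already contain junction-star edges and face-star edges of previously processed faces. So the route from $p$ to its star endpoint is not a single lower-scale path. The paper instead analyses $\pi_T(p,q)$ directly and uses acyclicity (the Connectedness invariant) to exclude face-star edges of faces other than $F$ (its Case~1d).
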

\begin{proof}
    Let $\delta = (102\log  n+6)$. We prove the lemma by induction on $r$.  We assume without loss of generality that for the smallest scale $r_0$ we have $\supp{r_0} = \emptyset$ and the base case thus holds trivially. Now, suppose the invariant holds for all scales~$r'$ smaller than~$r$.

\begin{figure}[!t]
    \centering
    \includegraphics{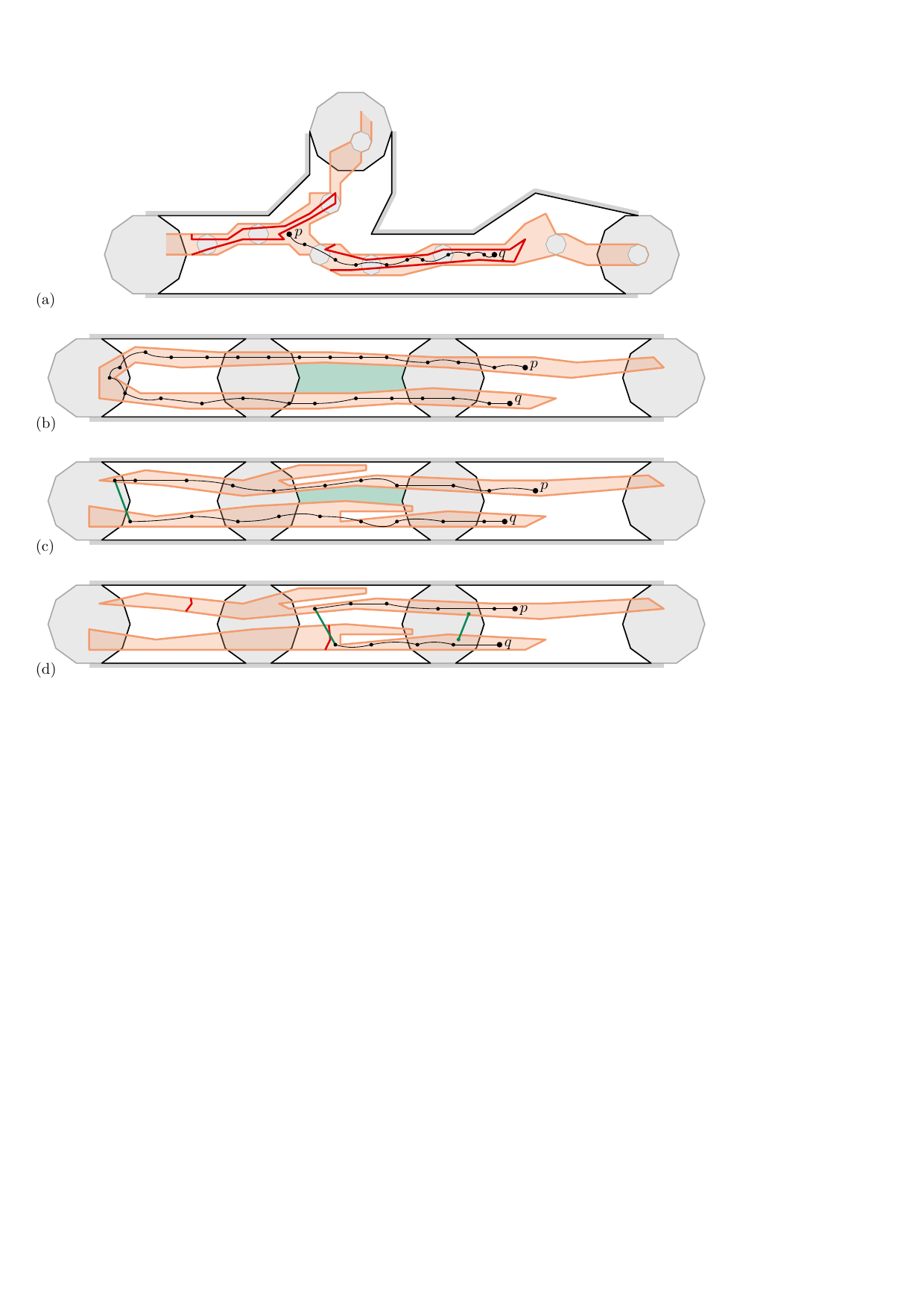}
    \caption{Illustrations for cases 1a--1d in (a)-(d). \textbf{Case 1a:} The sequence of 3r-separated vertices connecting $p$ and $q$ in $\supp{\Lambda^{-1}r}\setminus\mathcal{C}_r$. Every consecutive pair is in neighboring faces of the junction tree of $\supp{\Lambda^{-1}r}$. \textbf{Cases 1b and 1c:} A face (in green) in a neighboring face, that corresponds to a vertex of the spine-cutting graph that is not connected to the outer-face vertex via the minimal realization. \textbf{Case 1d:} A face-star edge from a neighboring face on the shortest path between $p$ and $q$ induces a cycle with the junction-star of the junction we pass through.
    }
    \label{fig:cases}
\end{figure}
    \textbf{\boldmath Case 1: The points $p$ and $q$ are in the same face $F$.}
    First consider the case where $p$ and $q$ are in the same face $F$ of the junction tree of $\supp{r}$.
    We differentiate four subcases that are illustrated in \Cref{fig:cases}.
    Throughout this case, let $\overline{F} := \bigcup_{J \in \junc(F)}(J \cup (\bigcup_{F' \in \face(J)} F'))$, i.e., the union of $F$ and its adjacent faces including the connecting junctions.

    \textbf{\boldmath Case 1a: The points $p$ and $q$ are in the same connected component of $\supp{\Lambda^{-1}r}\setminus \mathcal{C}_r$, and in the same $F$-spine.} Let $p$ and $q$ be in the same subpolygon $P'$ in $\supp{\Lambda^{-1}r}\setminus \mathcal{C}_r$, and even in the same connected component of $P'\cap F$, i.e., $p$ and $q$ are in the same $F$-spine as shown in~\Cref{fig:cases}(a).
    By the well-connectedness of $\mathcal{C}_r$, there is a sequence of points in $P'\cap F$, with pair-wise geodesic distance (in $P'$) at most $3 r$ connecting $p$ and $q$. Since $3 r$ is smaller than the distance between any two junctions of the junction tree of $\supp{\Lambda^{-1}r}$, which is at least $\frac{2}{3}\Lambda r$ for $\Lambda \geq 5$, this is a sequence of points in which any two consecutive points lie the union of two adjacent faces and their shared junction in the junction tree of $\supp{\Lambda^{-1}r}$. Hence, by induction, the shortest path in $T$ between any two such consecutive points $s$ and $t$ is in $B_{\supp{\Lambda^{-1}r}}(s,\delta\Lambda r)$. Because $s \in F$ and $\supp{\Lambda^{-1}r} \subseteq \supp{r}$, we have that $B_{\supp{\Lambda^{-1}r}}(s,\delta\Lambda r) \subset B_{\supp{r}}(F,\delta\Lambda r)\subset B_{\supp{r}}(p,27(\Lambda^2\log n)r+\delta\Lambda r)\subset B_{\supp{r}}(p,\delta\Lambda^2r)$, as the diameter bound for $F$ implies $F\subset B_{\supp{r}}(p,27(\Lambda^2\log n)r)$.
    

    \textbf{\boldmath Case 1b: The points $p$ and $q$ are in the same connected component of $\supp{\Lambda^{-1}r}\setminus \mathcal{C}_r$, and in different $F$-spines.} Next, consider the subcase where $p$ and $q$  are still in the same subpolygon $P'$ in $\supp{\Lambda^{-1}r}\setminus \mathcal{C}_r$, but in different connected components of $P'\cap F$.
    This subcase is illustrated in~\Cref{fig:cases}(b).
    Suppose there is no sequence of points in $P' \cap \overline{F}$ with pair-wise distance at most $3r$ connecting $p$ and $q$. By the well-connectedness of $\mathcal{C}_r$, there is such a sequence contained in $P'$. It follows that this sequence must cross a neighboring face $F'$ of $F$ at least twice, giving rise to two subsequences crossing $F'$.
    Suppose that these subsequences crossing $F'$ visit the same $F'$-spine $\Sigma$. By the well-connectedness of $\mathcal{C}_r$ (and in particular the well-spaced minimal realization $C$ in $\mathcal{C}_r$ of $G_{F', \spineset})$ there is a sequence of points in $F' \cap P'$ connecting two points from these different subsequence with pair-wise distance at most $3r$. It follows that the sequence connecting $p$ and $q$ can be shortcut in $\Sigma$ to not leave~$F'$. Thus, the subsequences do not visit any common $F'$-spine.
    However, this means that there is a cycle of nerves separating a vertex of the spine-cutting graph $G_{F',\spineset}$ from the outer-face vertex, with~$\spineset$ the set of $F'$-spines contained in $P$. This contradicts property~\ref{itm:tree} of the minimal realization in $\mathcal{C}_{\Lambda r}$. We conclude there is such a sequence contained in $P' \cap \overline{F}$. And hence the diverging shortest tree path is contained in the geodesic ball~$B_{\supp{r}}(p,54(\Lambda^2\log n)r + 12\Lambda r + \delta\Lambda r)\subset B_{\supp{r}}(p,\delta\Lambda^2 r)$.

    \textbf{\boldmath Case 1c: The points $p$ and $q$ are in different connected components of $\supp{\Lambda^{-1}r}\setminus \mathcal{C}_r$, and the shortest $pq$-path in $T$ uses no face-star edge.} Suppose $p$ and $q$ are from different subpolygons $P_p$ and $P_q$ in $\mathcal{P}_{\Lambda^{-1}r}\setminus\mathcal{C}_r$ and the shortest path $\pi_T(p,q)$ does not use edges of any star added for a face in the construction phase for $r$.
    This subcase is illustrated in~\Cref{fig:cases}(c).
    Observe that the shortest path between $p$ and $q$ in~$T$ alternates between a connected components~$P_i$, and junctions $J$, as otherwise there would be some edge from a face-star on this path.
    These components $P_i$ are unique, as otherwise the \hyperref[inv:conn]{Connectedness} invariant implies that there is a cycle in $T$. Furthermore, any junction-star edges that are used must be from junctions for which the intersection with $\supp{\Lambda^{-1} r} \cap P$ is non-empty, otherwise there is no way for the path to return past this junction.
    It follows that every $P_i$ intersects both junctions that it connects. Consider two junctions $J$ and $J'$ connected by a face $F'$ adjacent to $F$ (with~$J$ between them) that are connected by two subpolygons $P_i$ and~$P_j$. By the well-spacedness of~$\mathcal{C}_r$ we obtain a for each subpolygon $P_i,P_j$ a sequence of points contained in $P_i$ (or $P_j$) that connects junctions $J$ and $J'$, where consecutive points have distance at most~$3r$. 
    Exactly as before, we either find a sequence that does not leave $F'$ via $J'$, and thus a sequence contained in $\overline{F}$ on which we can apply induction, or conclude that the spine-cutting graph $G_{F',\spineset}$ is not connected, contradicting property~\ref{itm:tree} of the minimal realization in $\mathcal{C}_{\Lambda r}$.

    \textbf{\boldmath Case 1d: The points $p$ and $q$ are in different connected components of $\supp{\Lambda^{-1}r}\setminus \mathcal{C}_r$, and the shortest $pq$-path in $T$ uses a face-star edge.}
    Suppose the shortest path $\pi_T(p,q)$ uses a face-star edge, as illustrated in~\Cref{fig:cases}(d).
    If all such edges are in $F$, we are done, by applying the previous cases to each of the $p$-to-star and star-to-$q$ paths. If instead there is a face-star edge $e$ from a different face on $\pi_T(p,q)$, then there must be two paths (one from $p$ and one from $q$) in $\pi_T(p,q)$ to this other face. To leave the face $F$ both the path from $p$ to $e$ and $q$ to $e$ must pass through the same junction in $\junc(F)$. If these paths are not connected by the junction-star, $p$ and $q$ must be in the same subpolygon in $\supp{\Lambda^{-1}r}\setminus \mathcal{C}_r$, and we are thus in case~1a or 1b. Otherwise, there is also a path in $T$ connecting $p$ and $q$ that uses the junction-star instead of $e$. In other words, we obtain a cycle that includes edges from the junction star and the edge $e$. A contradiction to the \hyperref[inv:conn]{Connectedness} invariant.


    \textbf{\boldmath Case 2: The points $p$ and $q$ lie the same junction.}
    By construction, the shortest path $\pi_T(p,q)$ does not use any face-star edge. Consider the path from $p$ to the junction-star. It is a path inside some connected component of $\supp{\Lambda^{-1}r}\setminus\mathcal{C}_r$. Via the analysis of cases 1a and 1b we have already established, that this path cannot leave $J \cup (\bigcup_{F \in \face(J)} F)$.
    Hence the shortest path $\pi_T(p,q)$ is contained in $B_{\supp{r}}(p,18\Lambda r + 27(\Lambda^2\log n)r + \delta\Lambda r )\subset B_{\supp{r}}(p,\delta\Lambda^2r)$.
    

    \textbf{\boldmath Case 3: The points $p$ and $q$ lie in neighboring faces, or $p$ lies in a face, and $q$ in a neighboring junction.} Suppose, $p$ and $q$ are from neighboring faces $F_p$ and $F_q$ adjacent to the junction $J$, we may split the path $\pi_T(p,q)$ into three pieces. The first has its start and end point in $F_p$. the second has its start and endpoint in either the junction $J$ or in $S_J$, and the final has its start and endpoint in $F_q$. But then by the above the shortest path from $p$ to $q$ is contained in $B_{\supp{r}}(\face(J),12\Lambda r + 27(\Lambda^2\log n) r+\delta\Lambda r)\subset B_{\supp{r}}(p,81(\Lambda^2\log n)r+24\Lambda r + \delta\Lambda r)\subset B_{\supp{r}}(p,\delta\Lambda^2r)$. If $q$ lies in a junction instead, the same argument, splitting the path into only two pieces, carries through, concluding the proof.
\end{proof}

\begin{observation}\label{obs:pack}
    Let $B(p,R)$ be a Euclidean ball, and let $X\subset B(p,R)$ be a set of at least two points where every pair of points in $X$ has distance at least $r$. Then $|X|\leq16\frac{R^2}{r^2}$.
\end{observation}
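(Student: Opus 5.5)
The plan is a standard area-packing argument. Place an open Euclidean disk of radius $r/2$ around each point of $X$. Since every pair of points in $X$ is at distance at least $r$, these disks are pairwise interior-disjoint. Every point of $X$ lies in $B(p,R)$, so each of these disks lies in the enlarged disk $B(p,R+r/2)$. Comparing areas then gives
\[
|X|\cdot\pi\frac{r^2}{4}\;\leq\;\pi\Bigl(R+\frac{r}{2}\Bigr)^2,
\qquad\text{that is,}\qquad
|X|\leq\frac{(2R+r)^2}{r^2}.
\]

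The only remaining step is to bound $R+r/2$ by a constant multiple of $R$, and here the hypothesis $|X|\geq 2$ is needed. Take two distinct points $x,y\in X$. Both lie in $B(p,R)$, so $\|x-y\|\leq 2R$. They are also at distance at least $r$, so $r\leq 2R$. It follows that $2R+r\leq 4R$, and therefore $|X|\leq 16R^2/r^2$, as claimed.

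There is no real obstacle. The only point to watch is the degenerate case where $R$ is small compared to $r$, and that is exactly what the assumption $|X|\geq 2$ excludes. Without it, a single point would give $|X|=1$ while $16R^2/r^2$ could be arbitrarily close to $0$.
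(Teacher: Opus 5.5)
Your proof is correct and is essentially the paper's argument: disjoint disks of radius $r/2$ around the points of $X$, an area comparison, and the hypothesis $|X|\geq 2$ used only to get $r\leq 2R$. The one cosmetic difference is that the paper places the small disks directly inside $B(p,2R)$, whereas you first bound $|X|\leq (2R+r)^2/r^2$ and then apply $r\leq 2R$.
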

\begin{proof}
    Observe, that trivially, $r\leq 2R$. Observe further that, by centering balls of radius $r/2$ at every point of $X$, we obtain $|X|$ disjoint balls of radius $r/2$, that are all contained in $B(p,2R)$. As $B(p,2R)$ has area $4\pi R^2$, and the disjoint balls have total area $|X|\frac{\pi r^2}{4}$, we obtain~$|X|\leq 16\frac{R^2}{r^2}$.
\end{proof}

\begin{lemma}[{\hyperref[inv:dil]{Dilation}} invariant]\label{lem:dilation}
    Let $T$ be the graph at the end of the construction phase for~$r$. The dilation in $T$ of points $p,q$ in the same subpolygon $P$ in $\supp{r} \setminus\mathcal{C}_{\Lambda r}$ that have Euclidean distance at most $r$ is at most \[209\,034\,929\,709\,\log^3 n \max(n/\Lambda,\Lambda^{14}).\]
\end{lemma}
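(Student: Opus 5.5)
We argue by strong induction over the scales. Fix $p,q$ in the same subpolygon $P$ of $\supp{r}\setminus\mathcal{C}_{\Lambda r}$ with $\|p-q\|\le r$. If $\|p-q\|<r/\Lambda$ and $p,q$ lie in the same subpolygon of $\supp{\Lambda^{-1}r}\setminus\mathcal{C}_r$, the invariant at the previous scale applies, since tree paths are never altered later. When the previous scale is separated by a factor larger than $\Lambda^{10}$, we use the compressed scale sequence: no pair has distance in the gap, so the same conclusion holds.

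\textbf{Case A: $p,q$ in different subpolygons of $\supp{\Lambda^{-1}r}\setminus\mathcal{C}_r$.} Either $p,q$ lie in different components of $\supp{\Lambda^{-1}r}$, or they are separated by the cut $\mathcal{C}_r$, which avoids all nerves. In both cases \Cref{obs:facts-on-the-r-support} gives $\|p-q\|\ge r/(\sqrt2\Lambda^2)$. So in all remaining cases $\|p-q\|\ge r/(\sqrt2\Lambda^2)$, and it suffices to show $d_T(p,q)\le O(\log^3 n\,\max(n/\Lambda,\Lambda^{14}))\cdot r/\Lambda^2$.

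\textbf{Localizing the path.} Since $\|p-q\|\le r$, and every point in a face or junction is within geodesic distance about $r$ of a point of $S$, the pair $p,q$ falls into one of the configurations (i)--(iv) of the \hyperref[inv:loc]{Locality} invariant. Two facts drive this: the geodesic distance in $\supp{r}$ between distinct junctions is at least $\frac23\Lambda^2 r > r$, and $\|p-q\|\le r$ implies $d_{\supp{r}}(p,q)$ is small, because the $r$-support contains the short Delaunay edges. Hence every vertex of $\pi_T(p,q)$ lies in the Euclidean ball $B(p,R)$ with $R=(102\log n+6)\Lambda^2 r$.

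\textbf{Packing bound on the path length.} We sum the lengths of the edges of $\pi_T(p,q)$ scale by scale, for all scales $r'\le r$. By the \hyperref[inv:pack]{Packedness} invariant, an edge added at scale $r'$ has length at most $27\Lambda^2\log n\,r'$. The number of such edges on a path inside $B(p,R)$ is bounded in two ways.

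\emph{Face-star edges.} A face-star edge joins distinct components of $\supp{\Lambda^{-1}r'}$. One endpoint per component inside $B(p,R)$ gives a set whose points are pairwise at least $r'/\Lambda$ apart. By \Cref{obs:pack}, there are at most $16R^2\Lambda^2/r'^2$ such components. On a tree path each edge appears once, and each edge reduces the number of components, so the number of face-star edges is at most this count.

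\emph{Junction-star edges.} Junctions at scale $r'$ are $\frac23\Lambda^2r'$-separated. So $B(p,R)$ meets at most $O(R^2/(\Lambda^4r'^2))$ junctions. A tree path uses at most two edges of each star, and each such edge has length $O(\Lambda r')$.

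\emph{Summing.} Multiplying count by length at scale $r'$ gives a contribution of order $R^2\Lambda^4\log n/r'$. Independently, the total number of edges is at most $n$, and an edge at scale $r'$ has length at most $O(\Lambda^2\log n\, r')$. Taking the minimum of the two bounds at each scale and summing the geometric series over $r'\in\{r,r/\Lambda,\dots\}$ gives the total length. The crossover appears at $r'\approx r/\Lambda^{c}$. Scales above the crossover contribute about $\Lambda^{O(1)}\log^3 n\, r$, and scales below it contribute about $n\Lambda^2\log n\cdot r/\Lambda^{c}$. Dividing by $\|p-q\|\ge r/(\sqrt2\Lambda^2)$ and tracking the exponents should give $O(\log^3 n\max(n/\Lambda,\Lambda^{14}))$ with the stated constant.

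\textbf{Main obstacle.} The hardest step is matching the exponents exactly: the $\Lambda^{14}$ term and the $n/\Lambda$ term must come out simultaneously. In particular, the small-scale edges must be charged to $n$ while losing only $\Lambda^{-1}$ after the $\Lambda^2$ loss from the distance lower bound. I would first try choosing the cutoff scale $r/\Lambda^{c}$ to balance the two bounds. If the bookkeeping fails, I would refine the count by charging small-scale edges per point rather than globally.
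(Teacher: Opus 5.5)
Your overall route is the paper's: induction for pairs closer than $\Lambda^{-2}r$, the Locality invariant to confine $\pi_T(p,q)$ to a ball of radius $\delta\Lambda^2 r$ with $\delta=102\log n+6$, per-scale packing counts of junction- and face-stars, the trivial ``at most $n$ edges'' bound below a cutoff scale, and division by $\Lambda^{-2}r$.

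\textbf{The junction-star count is wrong.} You pack junctions using their separation $\frac23\Lambda^2 r'$. That is a \emph{geodesic} separation in $\supp{r'}$, whereas \Cref{obs:pack} needs a Euclidean one. A thin polygon can fold back on itself, so geodesically distant junctions may be Euclidean-close. The paper only gets Euclidean separation about $r'/\sqrt2$, so $B(p,R)$ can meet $O(R^2/r'^2)$ junctions, a factor $\Lambda^4$ more than you claim.

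There is a second omission. For $r'<r$, a junction of $\J_{r'}$ carries one star per subpolygon of $\supp{r'}\setminus\mathcal{C}_{\Lambda r'}$ that meets it. So ``at most two edges per star'' does not give two edges per junction. The paper bounds the number of such stars per junction by $16\cdot18^2\Lambda^4$. It does this by packing the subpolygons, which are $\Lambda^{-1}r'$-separated because the cuts avoid $\supp{\Lambda^{-1}r'}$, inside the $18\Lambda r'$-neighbourhood of the junction.

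With both corrections, junction-stars at scale $\Lambda^{-s}r$ contribute $O(\delta^2\Lambda^{9+s}r)$. That is larger than the face-star term at the same scale and forces a separate cutoff at $s=3$, with the remaining at most $n$ edges of length $18\Lambda^{-3}r$. The final bound survives, but your step as written is false, and the junction-star term you rely on being negligible is not.

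\textbf{The face-star count is not justified as stated.} You claim the number of scale-$r'$ face-star edges on the path is at most the number of components of $\supp{\Lambda^{-1}r'}$ meeting the ball. That presupposes each such component induces a connected subtree of $T$ when the face-stars are added. However, the cuts $\mathcal{C}_{r'}$ split components into subpolygons, and Connectedness only guarantees each subpolygon is individually connected. So the path may leave and re-enter the same component.

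Switching to subpolygons does not rescue this. They are connected, but only $\Lambda^{-2}r'$-separated, which costs a factor $\Lambda^2$ and pushes the bound to $\Lambda^{16}$. The paper instead counts face-stars, each contributing at most two edges to a path. It uses that distinct scale-$r'$ face-stars are $\Lambda^{-1}r'$-separated, which it derives from Packedness.

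\textbf{Your ``main obstacle'' is not one.} With count $16\delta^2\Lambda^{6+2s}$ and length $27\log n\,\Lambda^{2-s}r$, sum over $s\le 4$. Charge everything below as at most $n$ edges of length $27\log n\,\Lambda^{-3}r$. Dividing by $\Lambda^{-2}r$ gives exactly the $\Lambda^{14}$ and $n/\Lambda$ terms.

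Finally, your $\sqrt2$ in the distance lower bound is unnecessary. Points in distinct subpolygons of $\supp{\Lambda^{-1}r}\setminus\mathcal{C}_r$ lie in distinct components of $\supp{\Lambda^{-2}r}$, hence are more than $\Lambda^{-2}r$ apart. Keeping the $\sqrt2$, together with your different counts, would not reproduce the stated constant.
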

\begin{proof}
    Let $\delta=(102\log n+6)$. 
    
    Consider first the case where the Euclidean distance between $p$ and $q$ is less than $\frac{r}{\Lambda ^2}$. Then~$p$ and $q$ must be in the same subpolygon of $\supp{\Lambda^{-1}r}\setminus \mathcal{C}_r$. Hence, we can use induction on the scale~$r$ to prove the claimed bound on the dilation.

    Next, consider the case where the Euclidean distance between $p$ and $q$ is in $[\Lambda^{-2}r,r]$. By the {\hyperref[inv:conn]{Connectedness}} invariant, $p$ and $q$ are connected in $T$. It remains to prove a bound on the dilation. By the {\hyperref[inv:loc]{Locality}} invariant, all points in $S$ on $\pi_T(p,q)$ are contained in $D=B_{\supp{r}}(p,\delta\Lambda ^2 r)$, and by the {\hyperref[inv:conn]{Connectedness}} invariant these points are in $P$. To bound the dilation, we will simply bound the total number of junction- and face-stars of each scale up to $r$ that intersect $D$, by repeatedly applying \Cref{obs:pack}. As each edge of $T$ on the shortest path between $p$ and $q$ comes from some junction- or face-star from the construction phase of a scale $r' \leq r$, this allows us to bound the length of this shortest path.
    

    First, consider junction-stars added in the construction phase for scale $r$. There is at most one junction star for every junction intersected by $P$. Junctions have pairwise Euclidean distance at least $\frac{r}{\sqrt{2}}$, as no pair of points from distinct junctions can have Euclidean distance less than $r$. This follows from the fact that such a pair would be connected by the Delaunay triangulation, and have geodesic distance $r\leq \frac{2}{3}\Lambda^2 r$, contradicting the properties of a junction tree. 
    Using \Cref{obs:pack}, we can thus bound the number of junctions stars in $D$ that we added in the construction phase for $r$ by $16\frac{(\delta\Lambda^2r)^2}{r^2/2}=32\delta^2\Lambda^4$. The maximum length of such an edge is $18\Lambda r$ by the {\hyperref[inv:pack]{Packedness}} invariant.

    Next, consider the junctions-stars added in the construction phase of the previous scale~$\Lambda^{-1}r$. This time, we have to account for every junction-star of every subpolygon of $\supp{\Lambda^{-1}r}\setminus\mathcal{C}_r$. However, for every pair of subpolygons in $\supp{\Lambda^{-1}r}\setminus\mathcal{C}_r$ that each contain at least one point in $S$, there is a pair of points in $S$ in the subpolygons that are at least $\Lambda^{-2}r$ apart by property~\ref{itm:not_intersect_nerve} of $\mathcal{C}_r$. As the geodesic diameter of a junction in $\J_{\Lambda^{-1} r}$ is bounded by $12r$, and points in a junction-star have distance at most $3r$ to the junction (by the well-spacedness of $\mathcal{C}_r$), there are at most $16\frac{(18 r)^2}{(\Lambda^{-2}r)^2}=16\cdot18^2\Lambda^4$ stars per junction with at least one point in $D$. Each pair of points from junction-stars that belong to different junctions is at least $\frac{\Lambda^{-1}r}{\sqrt{2}}$ apart, hence there are at most $32\frac{(\delta\Lambda^2r)^2}{(\Lambda^{-1}r)^2}=32\delta^2\Lambda^6$ relevant junctions in $\J_{\Lambda^{-1}r}$. The maximum edge length of an edge in such a junction-star is $18 r$. We continue similarly, and obtain:
    \begin{compactenum}
        \item Scale $r$: $ 32\delta^2\Lambda^4$ junctions, each contributing a single star with maximum edge length~$18\Lambda r$.
        \item Scale $\Lambda^{-1}r$: $ 32\delta^2\Lambda^6$ junctions, each contributing $16\cdot18^2\Lambda^4$ stars with maximum edge length~$18 r$.
        \item Scale $\Lambda^{-2}r$: $ 32\delta^2\Lambda^8$ junctions, each contributing $16\cdot18^2\Lambda^4$ stars with maximum edge length $18\Lambda^{-1}r$.
        \item Scale $\Lambda^{-3}r$: $ 32\delta^2\Lambda^{10}$ junctions, each contributing $16\cdot18^2\Lambda^4$ stars with maximum edge length $18\Lambda^{-2}r$.
        \item All scales $\leq\Lambda^{-4}r$: At most $n$ star-edges of length at most $18\Lambda^{-3}r$.
    \end{compactenum}

    Next, we consider the face-stars added in the construction phase for scale $r$. The maximum edge-length of such edges is longer, but so it the pairwise distance between any two face-stars. The {\hyperref[inv:pack]{Packedness}} invariant tells us that any two points connected by different face-stars from the construction phase of scale $r$ are in disjoint subpolygons of $\supp{\Lambda^{-1}r}$. Thus face-stars are at least~$\Lambda^{-1}r$ apart. Hence at most $16\frac{(\delta\Lambda^2r)^2}{(\Lambda^{-1}r)^2}=16\delta^2\Lambda^6$ can have at least one endpoint in $D$. The length of any edge in a face-star is bounded by  $27(\Lambda^2\log n)r$ by the {\hyperref[inv:pack]{Packedness}} invariant. Observe that that the minimal distance of $\Lambda^{-1}r$ between any two face-stars also holds for pairs of face-stars from different subpolygons in $\supp{r} \setminus \mathcal{C}_{\Lambda r}$. Hence we similarly obtain
    \begin{compactenum}
        \item Scale $r$: $16\delta^2\Lambda^6$ face-stars with maximum edge length $27(\Lambda^2\log n)r$.
        \item Scale $\Lambda^{-1}r$: $16\delta^2\Lambda^8$ face-stars with maximum edge length $27(\Lambda\log n)r$.
        \item Scale $\Lambda^{-2}r$: $16\delta^2\Lambda^{10}$ face-stars with maximum edge length $27(\log n)r$.
        \item Scale $\Lambda^{-3}r$: $16\delta^2\Lambda^{12}$ face-stars with maximum edge length $27(\Lambda^{-1}\log n)r$.
        \item Scale $\Lambda^{-4}r$: $16\delta^2\Lambda^{14}$ face-stars with maximum edge length $27(\Lambda^{-2}\log n)r$.
        \item All scales $\leq\Lambda^{-5}r$: at most $n$ other face-stars edges of length at most $27(\Lambda^{-3}\log n)r$.
    \end{compactenum}

Since from every star, at most two edges can participate in a shortest path in $T$, we obtain a crude bound on the total length of a shortest path between
$p$ and $q$ of:
\begin{align*}
    d_T(p,q)
    &\leq
    \underbrace{2\cdot32\delta^2\Lambda^4 \cdot 18\Lambda r}_{\substack{\text{junction-stars}\\\text{from scale $r$}}}
    +\underbrace{\left(\sum_{s=1}^{3}
        2\cdot\overbrace{32\delta^2\Lambda^{4+2s}}^{\text{junctions}}
        \cdot\overbrace{16\cdot18^2\Lambda^4}^{\text{stars per junction}}
        \cdot\overbrace{18\Lambda^{1-s}r}^{\text{length}}\right)}_{\substack{\text{junction-stars from scales}\\\text{$\Lambda^{-s}r$ $=$ $\Lambda^{-1}r$, $\Lambda^{-2}r$, $\Lambda^{-3}r$}}}
    +\underbrace{n\cdot18\Lambda^{-3}r}_{\substack{\text{junction-stars}\\\text{from scales $\leq\Lambda^{-4}r$}}}\\
    &\qquad+
    \underbrace{\left(\sum_{s=0}^{4}
        2\cdot\overbrace{16\delta^2\Lambda^{6+2s}}^{\text{number}}
        \cdot\overbrace{27 \log  n \cdot \Lambda^{2-s}r}^{\text{length}}
    \right)}_{\substack{\text{face-stars from scales}\\\text{$\Lambda^{-s}r$ $=$ $r$, $\Lambda^{-1}r$, $\Lambda^{-2}r$, $\Lambda^{-3}r$, $\Lambda^{-4}r$}}}
    +\underbrace{n\cdot27\log  n \cdot \Lambda^{-3}r}_{\substack{\text{face-stars}\\\text{from scales $\leq\Lambda^{-5}r$}}}\\
    &\leq
    \delta^2\left(
        2\cdot32\cdot18
        +3\cdot2\cdot32\cdot16\cdot18^3
        +5\cdot2\cdot16\cdot27\log  n
    \right)\Lambda^{12}r+
    \left(18+27\log  n\right)n\Lambda^{-3}r\\
    &=
    \delta^2\left(
        17\,917\,056+4\,320\log  n
    \right)\Lambda^{12}r
    +
    \left(18+27\log  n\right)n\Lambda^{-3}r.
\end{align*}
Here, we assumed that each of the scales $\Lambda^{-4}r, \Lambda^{-3}r, \Lambda^{-2}r, \Lambda^{-1}r$ is indeed present in the compressed scale sequence. If this is not the case, then by the properties of the compressed scale sequence the decrease in size between such scales is at least a factor $\Lambda^{10}$, and we are thus straight away in the $\leq \Lambda^{-5}r$ case. So, the stated is indeed an upper bound on the length of the path.
Hence, the dilation of any pair $p,q$ whose Euclidean distance is in $[\Lambda^{-2}r,r]$ is at most

\begin{align*}
    \dil_T(p,q)
    &=\frac{d_T(p,q)}{\|p-q\|}\leq \frac{d_T(p,q)}{\Lambda^{-2}r}\leq
    \delta^2\left(17\,917\,056+4\,320\log  n\right)\Lambda^{14}
    +
    \left(18+27\log  n\right)\frac{n}{\Lambda}\\
    &\leq
    \left[
        \delta^2\left(17\,917\,056+4\,320\log  n\right)
        +18+27\log  n
    \right]
    \max\left\{
        \frac{n}{\Lambda},\Lambda^{14}
    \right\}\\
    &\leq
    \left[
        (102\log  n+6)^2
        \left(17\,917\,056+4\,320\log  n\right)
        +18+27\log  n
    \right]
    \max\left\{
        \frac{n}{\Lambda},\Lambda^{14}
    \right\}\\
    &\leq
    209\,034\,929\,709\, \log^3 n
    \max\left\{
        \frac{n}{\Lambda},\Lambda^{14}
    \right\}.\qedhere
\end{align*}

\end{proof}

\subsection{Putting everything together}\label{sec:putting_everything_together}

Finally, we combine the ingredients from the previous sections to show that our algorithm is indeed a polynomial time $o(n)$-approximation algorithm for the minimum dilation tree problem.

\main*
\begin{proof}
Choose $\Lambda=\max \{5,n^{1/15}\}$.
If the algorithm terminates before the cutting phase, then $\opt\geq \Lambda/27$. Since the Euclidean minimum spanning tree has dilation at most $n-1$, the returned tree is a $27n/\Lambda=O(n^{14/15})$ approximation. Otherwise, the {\hyperref[inv:conn]{Connectedness}} invariant implies that the returned graph is indeed a spanning tree, as $\mathcal{C}_{\Lambda r}=\emptyset$. Furthermore, the {\hyperref[inv:dil]{Dilation}} invariant implies that \[ \dil(T) \leq 209\,034\,929\,709\,\log^3 n \max\left\{\frac n\Lambda,\Lambda^{14}\right\}=O(n^{14/15}\log^3 n). \] As $\opt\geq1$, this is an $O(n^{14/15}\log^3 n)$ approximation. Finally, the running time is dominated by the $O(n^5\log n)$ time to construct the consistent sequence (\Cref{thm:construct_consistent_sequence}), as the running time of every individual construction phase can be easily bounded by $O(n^3)$.
\end{proof}

\begin{corollary}
If the point set $S$ has spread $\Delta$, there is a $O(\min\{\Delta,n^{14/15}\log^3 n\})$-approximation algorithm with polynomial running time. In particular, if $\Delta=n^\delta$, then the approximation ratio is $O(\min\{n^\delta,n^{14/15}\log^3 n\})$. 
\end{corollary}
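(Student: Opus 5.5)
The plan is to run two algorithms and return the better of the two trees. The first is the algorithm of \Cref{thm:main}, which gives an $O(n^{14/15}\log^3 n)$-approximation in $O(n^5\log n)$ time. The second is a simple spread-based construction that I expect to give an $O(\Delta)$-approximation. Since $\opt\ge 1$ and the dilation of any tree can be computed in polynomial time (all-pairs tree distances in $O(n^2)$), we can output whichever tree has smaller dilation. That tree's ratio is at most the minimum of the two bounds, and the running time stays polynomial.

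For the spread-based bound I would use the MST argument from the warm-up, refined by edge lengths. Let $d_{\min}$ and $d_{\max}$ be the minimum and maximum interpoint distances, so $\Delta=d_{\max}/d_{\min}$. A first attempt is to take a star centred at an arbitrary point $c$. For any pair $p,q$ we have
\[d_T(p,q)\le \|p-c\|+\|c-q\|\le 2d_{\max}\le 2\Delta\,\|p-q\|.\]
So the star has dilation at most $2\Delta$, which is an $O(\Delta)$-approximation because $\opt\ge1$.

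Alternatively, the MST gives the same bound. Every MST path between $p$ and $q$ uses edges of length at most $\|p-q\|$. Its total length is bounded by the MST weight restricted to a ball around $p$, and by a packing argument (\Cref{obs:pack}) that is $O(\Delta\,\|p-q\|)$ in $\mathbb{R}^2$ up to constants. The star argument is simpler, though, so I would use it.

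The second sentence of the corollary then follows by substituting $\Delta=n^\delta$.

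The only possible obstacle is the interpretation of the $O(\Delta)$ term. With the star, dilation $2\Delta$ is immediate, so I expect no real difficulty. The proof is essentially the observation that trivial trees are good when the spread is small, combined with \Cref{thm:main} by taking the minimum.
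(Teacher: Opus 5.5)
Your proof is correct and is the paper's argument: compute the tree of \Cref{thm:main} and a star centred at an arbitrary point (dilation at most $2d_{\max}/d_{\min}=2\Delta$, hence an $O(\Delta)$-approximation since $\opt\ge 1$), and return whichever tree has smaller dilation. You were right not to rely on the MST alternative, because that sketch does not hold: an MST path need not stay in a small ball around $p$, and for example a $k\times k$ unit grid spanned by a Hamiltonian path that covers the left half before the right half (one of the tied MSTs) has dilation $\Theta(k^2)=\Theta(\Delta^2)$.
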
 
\begin{proof} 
Construct both the tree $T$ of~\Cref{thm:main} and a star $T'$ centered at an arbitrary point of~$S$, and return the one with smaller dilation. Let $d_{\min}$ and $d_{\max}$ be the minimum and maximum Euclidean distance between any pair of points in $S$. For any two points $p,q$, their distance in $T'$ is at most $2d_{\max}$, while $\|p-q\|\geq d_{\min}$. Hence $T'$ has dilation at most $2d_{\max}/d_{\min}=2\Delta$. \Cref{thm:main} implies $T$ has a dilation of $O(n^{14/15}\log^3 n)$. Returning the better of the two trees thus gives an $O(\min\{\Delta,n^{14/15}\log^3 n\})$ approximation.
\end{proof}

\section{Conclusion}\label{sec:concluding_remarks}

In this paper, we have broken the $\Omega(n)$-barrier on the approximability of the minimum dilation tree for arbitrary point sets in $\mathbb{R}^2$, resolving a long-standing open question about the minimum dilation tree. Along the way, we categorized 
sets of points whose dilation is at least $\Omega(n^{14/15}\log^3n)$.

For ease of exposition, we have in several places bounded our expressions pessimistically. We expect the true running time of our algorithm to be slightly better than $\tilde{O}(n^5)$. One can, for example, reduce the size of the compressed scale sequence to $O(n)$, via well-separated pair-decompositions in~$\mathbb{R}^2$. Similarly, the computation of the spine-cutting graph across all faces can be done in sub-cubic time. Pushing the overall running time of our algorithm past $\tilde{O}(n^3)$ seems plausible, but optimizing the running time was not the focus of the paper. 

Similarly, the presented approximation ratio for our algorithm is unlikely to be tight. In fact, using even slightly better bounds in the proof of \Cref{lem:dilation}, together with a choice of ${\Lambda = \left(\frac{n}{\log^3n}\right)^{1/15}}$, one can already obtain an approximation ratio of $O(n^{14/15}\log^{1/5}n)$. Moreover, a better bound on how close the junction- and face-stars can be, 
would lead to a further improved approximation ratio. However, a natural barrier for our overall approach is an approximation ratio of $O(\max\{n/\Lambda,\Lambda\})$, which translates to an $O(\sqrt{n})$-approximation, similar to the comb construction in \Cref{sec:warm-up}. Breaking this barrier, or even obtaining a bound close to this, will most likely require novel insights and techniques.

The large constant of $209\,034\,929\,709$ is also pessimistic, although pushing it to be below $1000$ seems infeasible with our current techniques. An unfortunate consequence of this large constant is that our algorithm is far from practical. The smallest $n$ for which our algorithm has a better approximation guarantee
than the minimum spanning tree is $n\geq 2^{1000}$, since $n$ needs to satisfy $209\,034\,929\,709 \,n^{14/15}\log^3n\leq n-1$.

A natural open question that arises from our results is: Can these techniques be generalized to 
higher dimensions? Even the $r$-tree-dissection, our most basic algorithmic tool, seems to require a stronger structural result than a three-dimensional version of the circle obstruction.
Another natural extension, past higher-dimensional Euclidean spaces, is the question of whether these techniques can also be extended to doubling spaces, forgoing all Euclidean geometry.

\subsubsection*{Acknowledgements}

The second, fourth and fifth authors would like to thank Kevin Buchin, Benedikt Kolbe, Abhiruk Lahiri, Martin P. Seybold, Marco Ricci, the Rhine Ruhr Computational Geometry Workshop, Carolin Rehs, Torben Scheele, and Jan Titzeck for interesting discussions on this problem.
\smallskip

Funding in direct support of this work:
\begin{compactitem}
    \item Sarita de Berg is supported by Danmarks Frie Forskningsfond via the grant ``Dynamic Graphs: Distributed and Geometry'' with grant-ID 10.46540/4251-00004B.
    \item Jacobus Conradi is funded by the Carlsberg Foundation, grant CF24-1929.
    \item Peter Kramer is supported by the Deutsche Forschungsgemeinschaft (DFG, German Research Foundation), grant 530918134.
    \item André Nusser is supported by the France 2030 investment plan managed by the ANR as part of the Initiative of Excellence of Université Côte d'Azur with reference number ANR-15-IDEX-01.
    \item Sampson Wong is supported by the European Union's Marie Skłodowska-Curie Actions Postdoctoral Fellowship, grant 101146276.
\end{compactitem}

\bibliography{references}
\appendix

\end{document}